\newtheorem{theorem}{Theorem}
\newtheorem{axiom}[theorem]{Axiom}
\newtheorem{conjecture}[theorem]{Conjecture}
\newtheorem{corollary}[theorem]{Corollary}
\newtheorem{definition}[theorem]{Definition}
\newtheorem{example}[theorem]{Example}
\newtheorem{exercise}[theorem]{Exercise}
\newtheorem{lemma}[theorem]{Lemma}
\newtheorem{proposition}[theorem]{Proposition}
\newtheorem{remark}[theorem]{Remark}
\newenvironment{proof}[1][Proof]{\noindent\textbf{#1.} }{\ \rule{0.5em}{0.5em}}
\numberwithin{equation}{section}
\numberwithin{theorem}{section}
\let\pdfoutput=\undefined\fi
\chardef\@x10\chardef\@xv60
\def\tcitime{
\def\@time{%
  \@minute\time\@hour\@minute\divide\@hour\@xv
  \ifnum\@hour<\@x 0\fi\the\@hour:%
  \multiply\@hour\@xv\advance\@minute-\@hour
  \ifnum\@minute<\@x 0\fi\the\@minute
  }}%
\def\x@hyperref#1#2#3{%
   \catcode`\~ = 12
   \catcode`\$ = 12
   \catcode`\_ = 12
   \catcode`\# = 12
   \catcode`\& = 12
   \y@hyperref{#1}{#2}{#3}%
}
\def\y@hyperref#1#2#3#4{%
   #2\ref{#4}#3
   \catcode`\~ = 13
   \catcode`\$ = 3
   \catcode`\_ = 8
   \catcode`\# = 6
   \catcode`\& = 4
}
\def\QCTOpt[#1]#2{%
  \def\QCTOptB{#1}
  \def\QCTOptA{#2}
}
\def\QCTNOpt#1{%
  \def\QCTOptA{#1}
  \let\QCTOptB\empty
}
\def\Qct{%
  \@ifnextchar[{%
    \QCTOpt}{\QCTNOpt}
}
\def\QCBOpt[#1]#2{%
  \def\QCBOptB{#1}%
  \def\QCBOptA{#2}%
}
\def\QCBNOpt#1{%
  \def\QCBOptA{#1}%
  \let\QCBOptB\empty
}
\def\Qcb{%
  \@ifnextchar[{%
    \QCBOpt}{\QCBNOpt}%
}
\def\PrepCapArgs{%
  \ifx\QCBOptA\empty
    \ifx\QCTOptA\empty
      {}%
    \else
      \ifx\QCTOptB\empty
        {\QCTOptA}%
      \else
        [\QCTOptB]{\QCTOptA}%
      \fi
    \fi
  \else
    \ifx\QCBOptA\empty
      {}%
    \else
      \ifx\QCBOptB\empty
        {\QCBOptA}%
      \else
        [\QCBOptB]{\QCBOptA}%
      \fi
    \fi
  \fi
}
\def\GRAPHICSPS#1{%
 \ifcase\GRAPHICSTYPE
   \special{ps: #1}%
 \or
   \special{language "PS", include "#1"}%
 \fi
}%
\def\graffile#1#2#3#4{%
    \bgroup
	   \@inlabelfalse
       \leavevmode
       \@ifundefined{bbl@deactivate}{\def~{\string~}}{\activesoff}%
        \raise -#4 \BOXTHEFRAME{%
           \hbox to #2{\raise #3\hbox to #2{\null #1\hfil}}}%
    \egroup
}%
\def\draftbox#1#2#3#4{%
 \leavevmode\raise -#4 \hbox{%
  \frame{\rlap{\protect\tiny #1}\hbox to #2%
   {\vrule height#3 width\z@ depth\z@\hfil}%
  }%
 }%
}%
\let\nographics=\@msidraft
\newif\ifwasdraft
\def\GRAPHIC#1#2#3#4#5{%
   \ifnum\@msidraft=\@ne\draftbox{#2}{#3}{#4}{#5}%
   \else\graffile{#1}{#3}{#4}{#5}%
   \fi
}
\def\addtoLaTeXparams#1{%
    \edef\LaTeXparams{\LaTeXparams #1}}%
\newif\ifBoxFrame \BoxFramefalse
\newif\ifOverFrame \OverFramefalse
\newif\ifUnderFrame \UnderFramefalse
\def\BOXTHEFRAME#1{%
   \hbox{%
      \ifBoxFrame
         \frame{#1}%
      \else
         {#1}%
      \fi
   }%
}
\def\doFRAMEparams#1{\BoxFramefalse\OverFramefalse\UnderFramefalse\readFRAMEparams#1\end}%
\def\readFRAMEparams#1{%
 \ifx#1\end%
  \let\next=\relax
  \else
  \ifx#1i\dispkind=\z@\fi
  \ifx#1d\dispkind=\@ne\fi
  \ifx#1f\dispkind=\tw@\fi
  \ifx#1t\addtoLaTeXparams{t}\fi
  \ifx#1b\addtoLaTeXparams{b}\fi
  \ifx#1p\addtoLaTeXparams{p}\fi
  \ifx#1h\addtoLaTeXparams{h}\fi
  \ifx#1X\BoxFrametrue\fi
  \ifx#1O\OverFrametrue\fi
  \ifx#1U\UnderFrametrue\fi
  \ifx#1w
    \ifnum\@msidraft=1\wasdrafttrue\else\wasdraftfalse\fi
    \@msidraft=\@ne
  \fi
  \let\next=\readFRAMEparams
  \fi
 \next
 }%
\def\IFRAME#1#2#3#4#5#6{%
      \bgroup
      \let\QCTOptA\empty
      \let\QCTOptB\empty
      \let\QCBOptA\empty
      \let\QCBOptB\empty
      #6%
      \parindent=0pt
      \leftskip=0pt
      \rightskip=0pt
      \setbox0=\hbox{\QCBOptA}%
      \@tempdima=#1\relax
      \ifOverFrame
          \typeout{This is not implemented yet}%
          \show\HELP
      \else
         \ifdim\wd0>\@tempdima
            \advance\@tempdima by \@tempdima
            \ifdim\wd0 >\@tempdima
               \setbox1 =\vbox{%
                  \unskip\hbox to \@tempdima{\hfill\GRAPHIC{#5}{#4}{#1}{#2}{#3}\hfill}%
                  \unskip\hbox to \@tempdima{\parbox[b]{\@tempdima}{\QCBOptA}}%
               }%
               \wd1=\@tempdima
            \else
               \textwidth=\wd0
               \setbox1 =\vbox{%
                 \noindent\hbox to \wd0{\hfill\GRAPHIC{#5}{#4}{#1}{#2}{#3}\hfill}\\%
                 \noindent\hbox{\QCBOptA}%
               }%
               \wd1=\wd0
            \fi
         \else
            \ifdim\wd0>0pt
              \hsize=\@tempdima
              \setbox1=\vbox{%
                \unskip\GRAPHIC{#5}{#4}{#1}{#2}{0pt}%
                \break
                \unskip\hbox to \@tempdima{\hfill \QCBOptA\hfill}%
              }%
              \wd1=\@tempdima
           \else
              \hsize=\@tempdima
              \setbox1=\vbox{%
                \unskip\GRAPHIC{#5}{#4}{#1}{#2}{0pt}%
              }%
              \wd1=\@tempdima
           \fi
         \fi
         \@tempdimb=\ht1
         \advance\@tempdimb by -#2
         \advance\@tempdimb by #3
         \leavevmode
         \raise -\@tempdimb \hbox{\box1}%
      \fi
      \egroup%
}%
\def\DFRAME#1#2#3#4#5{%
  \vspace\topsep
  \hfil\break
  \bgroup
     \leftskip\@flushglue
	 \rightskip\@flushglue
	 \parindent\z@
	 \parfillskip\z@skip
     \let\QCTOptA\empty
     \let\QCTOptB\empty
     \let\QCBOptA\empty
     \let\QCBOptB\empty
	 \vbox\bgroup
        \ifOverFrame 
           #5\QCTOptA\par
        \fi
        \GRAPHIC{#4}{#3}{#1}{#2}{\z@}%
        \ifUnderFrame 
           \break#5\QCBOptA
        \fi
	 \egroup
  \egroup
  \vspace\topsep
  \break
}%
\def\FFRAME#1#2#3#4#5#6#7{%
  \@ifundefined{floatstyle}
    {
     \begin{figure}[#1]%
    }
    {
	 \ifx#1h
      \begin{figure}[H]%
	 \else
      \begin{figure}[#1]%
	 \fi
	}
  \let\QCTOptA\empty
  \let\QCTOptB\empty
  \let\QCBOptA\empty
  \let\QCBOptB\empty
  \ifOverFrame
    #4
    \ifx\QCTOptA\empty
    \else
      \ifx\QCTOptB\empty
        \caption{\QCTOptA}%
      \else
        \caption[\QCTOptB]{\QCTOptA}%
      \fi
    \fi
    \ifUnderFrame\else
      \label{#5}%
    \fi
  \else
    \UnderFrametrue%
  \fi
  \begin{center}\GRAPHIC{#7}{#6}{#2}{#3}{\z@}\end{center}%
  \ifUnderFrame
    #4
    \ifx\QCBOptA\empty
      \caption{}%
    \else
      \ifx\QCBOptB\empty
        \caption{\QCBOptA}%
      \else
        \caption[\QCBOptB]{\QCBOptA}%
      \fi
    \fi
    \label{#5}%
  \fi
  \end{figure}%
 }%
\def\makeactives{
  \catcode`\"=\active
  \catcode`\;=\active
  \catcode`\:=\active
  \catcode`\'=\active
  \catcode`\~=\active
}
   \gdef\activesoff{%
      \def"{\string"}%
      \def;{\string;}%
      \def:{\string:}%
      \def'{\string'}%
      \def~{\string~}%
    }
\def\FRAME#1#2#3#4#5#6#7#8{%
 \bgroup
 \ifnum\@msidraft=\@ne
   \wasdrafttrue
 \else
   \wasdraftfalse%
 \fi
 \def\LaTeXparams{}%
 \dispkind=\z@
 \def\LaTeXparams{}%
 \doFRAMEparams{#1}%
 \ifnum\dispkind=\z@\IFRAME{#2}{#3}{#4}{#7}{#8}{#5}\else
  \ifnum\dispkind=\@ne\DFRAME{#2}{#3}{#7}{#8}{#5}\else
   \ifnum\dispkind=\tw@
    \edef\@tempa{\noexpand\FFRAME{\LaTeXparams}}%
    \@tempa{#2}{#3}{#5}{#6}{#7}{#8}%
    \fi
   \fi
  \fi
  \ifwasdraft\@msidraft=1\else\@msidraft=0\fi{}%
  \egroup
 }%
\def\TEXUX#1{"texux"}
\def\limfunc#1{\mathop{\rm #1}}%
\long\def\QQQ#1#2{%
     \long\expandafter\def\csname#1\endcsname{#2}}%
\long\def\QQA#1#2{}%
\def\QTR#1#2{{\csname#1\endcsname {#2}}}%
\def\EXPAND#1[#2]#3{}%
\def\NOEXPAND#1[#2]#3{}%
\def\LaTeXparent#1{}%
\def\ChildStyles#1{}%
\def\ChildDefaults#1{}%
\def\QTagDef#1#2#3{}%
  \providecommand{\UNICODE}[2][]{\protect\rule{.1in}{.1in}}
  \providecommand{\U}[1]{\protect\rule{.1in}{.1in}}
\def\QQfnmark#1{\footnotemark}
 \def\abstract{%
  \if@twocolumn
   \section*{Abstract (Not appropriate in this style!)}%
   \else \small 
   \begin{center}{\bf Abstract\vspace{-.5em}\vspace{\z@}}\end{center}%
   \quotation 
   \fi
  }%
   \def\registered{\relax\ifmmode{}\r@gistered
                    \else$\m@th\r@gistered$\fi}%
 \def\r@gistered{^{\ooalign
  {\hfil\raise.07ex\hbox{$\scriptstyle\rm\text{R}$}\hfil\crcr
  \mathhexbox20D}}}}{}%
\newdimen\theight
\def\newfmtname{LaTeX2e}
  \DeclareOldFontCommand{\rm}{\normalfont\rmfamily}{\mathrm}
  \DeclareOldFontCommand{\sf}{\normalfont\sffamily}{\mathsf}
  \DeclareOldFontCommand{\tt}{\normalfont\ttfamily}{\mathtt}
  \DeclareOldFontCommand{\bf}{\normalfont\bfseries}{\mathbf}
  \DeclareOldFontCommand{\it}{\normalfont\itshape}{\mathit}
  \DeclareOldFontCommand{\sl}{\normalfont\slshape}{\@nomath\sl}
  \DeclareOldFontCommand{\sc}{\normalfont\scshape}{\@nomath\sc}
\def\alpha{{\Greekmath 010B}}%
\def\beta{{\Greekmath 010C}}%
\def\gamma{{\Greekmath 010D}}%
\def\delta{{\Greekmath 010E}}%
\def\epsilon{{\Greekmath 010F}}%
\def\zeta{{\Greekmath 0110}}%
\def\eta{{\Greekmath 0111}}%
\def\theta{{\Greekmath 0112}}%
\def\iota{{\Greekmath 0113}}%
\def\kappa{{\Greekmath 0114}}%
\def\lambda{{\Greekmath 0115}}%
\def\mu{{\Greekmath 0116}}%
\def\nu{{\Greekmath 0117}}%
\def\xi{{\Greekmath 0118}}%
\def\pi{{\Greekmath 0119}}%
\def\rho{{\Greekmath 011A}}%
\def\sigma{{\Greekmath 011B}}%
\def\tau{{\Greekmath 011C}}%
\def\upsilon{{\Greekmath 011D}}%
\def\phi{{\Greekmath 011E}}%
\def\chi{{\Greekmath 011F}}%
\def\psi{{\Greekmath 0120}}%
\def\omega{{\Greekmath 0121}}%
\def\varepsilon{{\Greekmath 0122}}%
\def\vartheta{{\Greekmath 0123}}%
\def\varpi{{\Greekmath 0124}}%
\def\varrho{{\Greekmath 0125}}%
\def\varsigma{{\Greekmath 0126}}%
\def\varphi{{\Greekmath 0127}}%
\def\nabla{{\Greekmath 0272}}
\def\FindBoldGroup{%
   {\setbox0=\hbox{$\mathbf{x\global\edef\theboldgroup{\the\mathgroup}}$}}%
}
\def\Greekmath#1#2#3#4{%
    \if@compatibility
        \ifnum\mathgroup=\symbold
           \mathchoice{\mbox{\boldmath$\displaystyle\mathchar"#1#2#3#4$}}%
                      {\mbox{\boldmath$\textstyle\mathchar"#1#2#3#4$}}%
                      {\mbox{\boldmath$\scriptstyle\mathchar"#1#2#3#4$}}%
                      {\mbox{\boldmath$\scriptscriptstyle\mathchar"#1#2#3#4$}}%
        \else
           \mathchar"#1#2#3#4%
        \fi 
    \else 
        \FindBoldGroup
        \ifnum\mathgroup=\theboldgroup 
           \mathchoice{\mbox{\boldmath$\displaystyle\mathchar"#1#2#3#4$}}%
                      {\mbox{\boldmath$\textstyle\mathchar"#1#2#3#4$}}%
                      {\mbox{\boldmath$\scriptstyle\mathchar"#1#2#3#4$}}%
                      {\mbox{\boldmath$\scriptscriptstyle\mathchar"#1#2#3#4$}}%
        \else
           \mathchar"#1#2#3#4%
        \fi     	    
	  \fi}
\newif\ifGreekBold  \GreekBoldfalse
\let\SAVEPBF=\pbf
\def\pbf{\GreekBoldtrue\SAVEPBF}%
  \newcounter{equationnumber}  
  \def\mathletters{%
     \addtocounter{equation}{1}
     \edef\@currentlabel{\theequation}%
     \setcounter{equationnumber}{\c@equation}
     \setcounter{equation}{0}%
     \edef\theequation{\@currentlabel\noexpand\alph{equation}}%
  }
    \def\BibTeX{{\rm B\kern-.05em{\sc i\kern-.025em b}\kern-.08em
                 T\kern-.1667em\lower.7ex\hbox{E}\kern-.125emX}}}{}%
\def\AmS{{\protect\usefont{OMS}{cmsy}{m}{n}%
                A\kern-.1667em\lower.5ex\hbox{M}\kern-.125emS}}}{}%
\def\@@eqncr{\let\@tempa\relax
    \ifcase\@eqcnt \def\@tempa{& & &}\or \def\@tempa{& &}%
      \else \def\@tempa{&}\fi
     \@tempa
     \if@eqnsw
        \iftag@
           \@taggnum
        \else
           \@eqnnum\stepcounter{equation}%
        \fi
     \fi
     \global\tag@false
     \global\@eqnswtrue
     \global\@eqcnt\z@\cr}
\def\TCItag{\@ifnextchar*{\@TCItagstar}{\@TCItag}}
\def\@TCItag#1{%
    \global\tag@true
    \global\def\@taggnum{(#1)}}
\def\@TCItagstar*#1{%
    \global\tag@true
    \global\def\@taggnum{#1}}
\def\QATOPD#1#2#3#4{{#3 \atopwithdelims#1#2 #4}}%
\def\tsum{\mathop{\textstyle \sum }}%
\def\tprod{\mathop{\textstyle \prod }}%
\def\tbigoplus{\mathop{\textstyle \bigoplus }}%
\def\ExitTCILatex{\makeatother }
\if@compatibility\message{amsmath already loaded}\fi\aftergroup\ExitTCILatex}
\if@compatibility\message{amstex already loaded}\fi\aftergroup\ExitTCILatex}
\if@compatibility\message{amsgen already loaded}\fi\aftergroup\ExitTCILatex}
\let\DOTSI\relax
\def\RIfM@{\relax\ifmmode}%
\def\FN@{\futurelet\next}%
\def\iint{\DOTSI\intno@\tw@\FN@\ints@}%
\def\iiint{\DOTSI\intno@\thr@@\FN@\ints@}%
\def\iiiint{\DOTSI\intno@4 \FN@\ints@}%
\def\idotsint{\DOTSI\intno@\z@\FN@\ints@}%
\def\ints@{\findlimits@\ints@@}%
\newif\iflimtoken@
\newif\iflimits@
\def\findlimits@{\limtoken@true\ifx\next\limits\limits@true
 \else\ifx\next\nolimits\limits@false\else
 \limtoken@false\ifx\ilimits@\nolimits\limits@false\else
 \ifinner\limits@false\else\limits@true\fi\fi\fi\fi}%
\def\multint@{\int\ifnum\intno@=\z@\intdots@                          
 \else\intkern@\fi                                                    
 \ifnum\intno@>\tw@\int\intkern@\fi                                   
 \ifnum\intno@>\thr@@\int\intkern@\fi                                 
 \int}
\def\multintlimits@{\intop\ifnum\intno@=\z@\intdots@\else\intkern@\fi
 \ifnum\intno@>\tw@\intop\intkern@\fi
 \ifnum\intno@>\thr@@\intop\intkern@\fi\intop}%
\def\intic@{%
    \mathchoice{\hskip.5em}{\hskip.4em}{\hskip.4em}{\hskip.4em}}%
\def\negintic@{\mathchoice
 {\hskip-.5em}{\hskip-.4em}{\hskip-.4em}{\hskip-.4em}}%
\def\ints@@{\iflimtoken@                                              
 \def\ints@@@{\iflimits@\negintic@
   \mathop{\intic@\multintlimits@}\limits                             
  \else\multint@\nolimits\fi                                          
  \eat@}
 \else                                                                
 \def\ints@@@{\iflimits@\negintic@
  \mathop{\intic@\multintlimits@}\limits\else
  \multint@\nolimits\fi}\fi\ints@@@}%
\def\intkern@{\mathchoice{\!\!\!}{\!\!}{\!\!}{\!\!}}%
\def\plaincdots@{\mathinner{\cdotp\cdotp\cdotp}}%
\def\intdots@{\mathchoice{\plaincdots@}%
 {{\cdotp}\mkern1.5mu{\cdotp}\mkern1.5mu{\cdotp}}%
 {{\cdotp}\mkern1mu{\cdotp}\mkern1mu{\cdotp}}%
 {{\cdotp}\mkern1mu{\cdotp}\mkern1mu{\cdotp}}}%
\def\RIfM@{\relax\protect\ifmmode}
\def\text{\RIfM@\expandafter\text@\else\expandafter\mbox\fi}
\let\nfss@text\text
\def\text@#1{\mathchoice
   {\textdef@\displaystyle\f@size{#1}}%
   {\textdef@\textstyle\tf@size{\firstchoice@false #1}}%
   {\textdef@\textstyle\sf@size{\firstchoice@false #1}}%
   {\textdef@\textstyle \ssf@size{\firstchoice@false #1}}%
   \glb@settings}
\def\textdef@#1#2#3{\hbox{{%
                    \everymath{#1}%
                    \let\f@size#2\selectfont
                    #3}}}
\newif\iffirstchoice@
\def\Let@{\relax\iffalse{\fi\let\\=\cr\iffalse}\fi}%
\def\vspace@{\def\vspace##1{\crcr\noalign{\vskip##1\relax}}}%
\def\multilimits@{\bgroup\vspace@\Let@
 \baselineskip\fontdimen10 \scriptfont\tw@
 \advance\baselineskip\fontdimen12 \scriptfont\tw@
 \lineskip\thr@@\fontdimen8 \scriptfont\thr@@
 \lineskiplimit\lineskip
 \vbox\bgroup\ialign\bgroup\hfil$\m@th\scriptstyle{##}$\hfil\crcr}%
\def\Sb{_\multilimits@}%
\def\endSb{\crcr\egroup\egroup\egroup}%
\def\Sp{^\multilimits@}%
\newdimen\ex@
\def\rightarrowfill@#1{$#1\m@th\mathord-\mkern-6mu\cleaders
 \hbox{$#1\mkern-2mu\mathord-\mkern-2mu$}\hfill
 \mkern-6mu\mathord\rightarrow$}%
\def\leftarrowfill@#1{$#1\m@th\mathord\leftarrow\mkern-6mu\cleaders
 \hbox{$#1\mkern-2mu\mathord-\mkern-2mu$}\hfill\mkern-6mu\mathord-$}%
\def\leftrightarrowfill@#1{$#1\m@th\mathord\leftarrow
\mkern-6mu\cleaders
 \hbox{$#1\mkern-2mu\mathord-\mkern-2mu$}\hfill
 \mkern-6mu\mathord\rightarrow$}%
\def\overrightarrow{\mathpalette\overrightarrow@}%
\def\overrightarrow@#1#2{\vbox{\ialign{##\crcr\rightarrowfill@#1\crcr
 \noalign{\kern-\ex@\nointerlineskip}$\m@th\hfil#1#2\hfil$\crcr}}}%
\def\overleftarrow{\mathpalette\overleftarrow@}%
\def\overleftarrow@#1#2{\vbox{\ialign{##\crcr\leftarrowfill@#1\crcr
 \noalign{\kern-\ex@\nointerlineskip}$\m@th\hfil#1#2\hfil$\crcr}}}%
\def\overleftrightarrow{\mathpalette\overleftrightarrow@}%
\def\overleftrightarrow@#1#2{\vbox{\ialign{##\crcr
   \leftrightarrowfill@#1\crcr
 \noalign{\kern-\ex@\nointerlineskip}$\m@th\hfil#1#2\hfil$\crcr}}}%
\def\underrightarrow{\mathpalette\underrightarrow@}%
\def\underrightarrow@#1#2{\vtop{\ialign{##\crcr$\m@th\hfil#1#2\hfil
  $\crcr\noalign{\nointerlineskip}\rightarrowfill@#1\crcr}}}%
\def\underleftarrow{\mathpalette\underleftarrow@}%
\def\underleftarrow@#1#2{\vtop{\ialign{##\crcr$\m@th\hfil#1#2\hfil
  $\crcr\noalign{\nointerlineskip}\leftarrowfill@#1\crcr}}}%
\def\underleftrightarrow{\mathpalette\underleftrightarrow@}%
\def\underleftrightarrow@#1#2{\vtop{\ialign{##\crcr$\m@th
  \hfil#1#2\hfil$\crcr
 \noalign{\nointerlineskip}\leftrightarrowfill@#1\crcr}}}%
\def\qopnamewl@#1{\mathop{\operator@font#1}\nlimits@}
\let\nlimits@\displaylimits
\def\setboxz@h{\setbox\z@\hbox}
\def\varlim@#1#2{\mathop{\vtop{\ialign{##\crcr
 \hfil$#1\m@th\operator@font lim$\hfil\crcr
 \noalign{\nointerlineskip}#2#1\crcr
 \noalign{\nointerlineskip\kern-\ex@}\crcr}}}}
 \def\rightarrowfill@#1{\m@th\setboxz@h{$#1-$}\ht\z@\z@
  $#1\copy\z@\mkern-6mu\cleaders
  \hbox{$#1\mkern-2mu\box\z@\mkern-2mu$}\hfill
  \mkern-6mu\mathord\rightarrow$}
\def\leftarrowfill@#1{\m@th\setboxz@h{$#1-$}\ht\z@\z@
  $#1\mathord\leftarrow\mkern-6mu\cleaders
  \hbox{$#1\mkern-2mu\copy\z@\mkern-2mu$}\hfill
  \mkern-6mu\box\z@$}
\def\projlim{\qopnamewl@{proj\,lim}}
\def\injlim{\qopnamewl@{inj\,lim}}
\def\varinjlim{\mathpalette\varlim@\rightarrowfill@}
\def\varprojlim{\mathpalette\varlim@\leftarrowfill@}
\def\varliminf{\mathpalette\varliminf@{}}
\def\varliminf@#1{\mathop{\underline{\vrule\@depth.2\ex@\@width\z@
   \hbox{$#1\m@th\operator@font lim$}}}}
\def\varlimsup{\mathpalette\varlimsup@{}}
\def\varlimsup@#1{\mathop{\overline
  {\hbox{$#1\m@th\operator@font lim$}}}}
\def\align{\@verbatim \frenchspacing\@vobeyspaces \@alignverbatim
You are using the "align" environment in a style in which it is not defined.}
\let\csname endalign*\endcsname =\endtrivlist
\def\alignat{\@verbatim \frenchspacing\@vobeyspaces \@alignatverbatim
You are using the "alignat" environment in a style in which it is not defined.}
\let\csname endalignat*\endcsname =\endtrivlist
\def\xalignat{\@verbatim \frenchspacing\@vobeyspaces \@xalignatverbatim
You are using the "xalignat" environment in a style in which it is not defined.}
\let\csname endxalignat*\endcsname =\endtrivlist
\def\gather{\@verbatim \frenchspacing\@vobeyspaces \@gatherverbatim
You are using the "gather" environment in a style in which it is not defined.}
\let\csname endgather*\endcsname =\endtrivlist
\def\multiline{\@verbatim \frenchspacing\@vobeyspaces \@multilineverbatim
You are using the "multiline" environment in a style in which it is not defined.}
\let\csname endmultiline*\endcsname =\endtrivlist
\def\arrax{\@verbatim \frenchspacing\@vobeyspaces \@arraxverbatim
You are using a type of "array" construct that is only allowed in AmS-LaTeX.}
\def\tabulax{\@verbatim \frenchspacing\@vobeyspaces \@tabulaxverbatim
You are using a type of "tabular" construct that is only allowed in AmS-LaTeX.}
\let\csname endarrax*\endcsname =\endtrivlist
\let\csname endtabulax*\endcsname =\endtrivlist
 \def\endequation{%
     \ifmmode\ifinner 
      \iftag@
        \addtocounter{equation}{-1} 
        $\hfil
           \displaywidth\linewidth\@taggnum\egroup \endtrivlist
        \global\tag@false
        \global\@ignoretrue   
      \else
        $\hfil
           \displaywidth\linewidth\@eqnnum\egroup \endtrivlist
        \global\tag@false
        \global\@ignoretrue 
      \fi
     \else   
      \iftag@
        \addtocounter{equation}{-1} 
        \eqno \hbox{\@taggnum}
        \global\tag@false%
        $$\global\@ignoretrue
      \else
        \eqno \hbox{\@eqnnum}
        $$\global\@ignoretrue
      \fi
     \fi\fi
 } 
 \newif\iftag@ \tag@false
 \def\TCItag{\@ifnextchar*{\@TCItagstar}{\@TCItag}}
 \def\@TCItag#1{%
     \global\tag@true
     \global\def\@taggnum{(#1)}}
 \def\@TCItagstar*#1{%
     \global\tag@true
     \global\def\@taggnum{#1}}
     \def\tag{\@ifnextchar*{\@tagstar}{\@tag}}
     \def\@tag#1{%
         \global\tag@true
         \global\def\@taggnum{(#1)}}
     \def\@tagstar*#1{%
         \global\tag@true
         \global\def\@taggnum{#1}}
\def\binom#1#2{{#1 \choose #2}}%
\begin{document}

\title{From Quantum B\"{a}cklund Transforms to Topological Quantum Field
Theory}
\author{Christian Korff \\
{\normalsize School of Mathematics and Statistics, University of Glasgow}}
\maketitle

\begin{abstract}
We derive the quantum analogue of a B\"{a}cklund transformation for the
quantised Ablowitz-Ladik chain, a space discretisation of the nonlinear Schr%
\"{o}dinger equation. The quantisation of the Ablowitz-Ladik chain leads to
the $q$-boson model. Using a previous construction of Baxter's Q-operator
for this model by the author, a set of functional relations is obtained
which matches the relations of a one-variable classical B\"{a}cklund
transform to all orders in $\hbar $. We construct also a second $Q$-operator
and show that it is closely related to the inverse of the first. The multi-B%
\"{a}cklund transforms generated from the $Q$-operator define the fusion
matrices of a 2D TQFT and we derive a linear system for the solution to the
quantum B\"{a}cklund relations in terms of the TQFT fusion coefficients.
\end{abstract}

\section{Introduction}

In the context of classical integrable systems the main interest in the
construction of B\"{a}cklund and Darboux transformations is their
application in the construction of soliton solutions \cite{RS82,MS91}.
Compared to the classical B\"{a}cklund transformations the discussion of
their quantum cousins started more recently. In 1992 Gaudin and Pasquier 
\cite{PG92} constructed for the quantum Toda chain an analogue of Baxter's $%
Q $-operator \cite[Ch. 9-10]{Baxter07} and showed that in the semi-classical
limit $\hbar \rightarrow 0$ the similarity transformation $\mathcal{O}%
\rightarrow Q(u)\mathcal{O}Q(u)^{-1}$ is a B\"{a}cklund transform of the
Toda chain. An introductory account to this and further results can be found
in Sklyanin's lecture notes \cite{Sklyanin00}. 

\subsection{The Ablowitz-Ladik chain and its quantisation}

In this article we are interested in the quantum analogue of B\"{a}cklund
transforms for another integrable system: the Ablowitz-Ladik chain \cite%
{AL76}, 
\begin{equation}
\left\{ 
\begin{array}{c}
\partial _{t}\psi _{j}=\psi _{j+1}-2\psi _{j}+\psi _{j-1}-\psi _{j}^{\ast
}\psi _{j}(\psi _{j+1}+\psi _{j-1}) \\ 
\partial _{t}\psi _{j}^{\ast }=-\psi _{j+1}^{\ast }+2\psi _{j}^{\ast }-\psi
_{j-1}^{\ast }+\psi _{j}^{\ast }\psi _{j}(\psi _{j+1}^{\ast }+\psi
_{j-1}^{\ast })%
\end{array}%
\right. ,  \label{ALH}
\end{equation}%
where we consider periodic boundary conditions $\psi _{j+n}=\psi _{j}$ and $%
\psi _{j+n}^{\ast }=\psi _{j}^{\ast }$. The equations (\ref{ALH}) are a
space discretisation of the following system of coupled PDEs%
\begin{equation*}
\left\{ 
\begin{array}{c}
\partial _{t}\psi =\partial _{x}^{2}\psi -2\psi ^{\ast }\psi ^{2} \\ 
\partial _{t}\psi ^{\ast }=-\partial _{x}^{2}\psi ^{\ast }+2\psi ^{\ast
2}\psi%
\end{array}%
\right. \;.
\end{equation*}%
After changing to imaginary time, $t\rightarrow \imath t$ with $\imath =%
\sqrt{-1}$, this system of PDEs allows for a reduction, $\psi ^{\ast }=\pm 
\bar{\psi}$ with $\bar{\psi}$ denoting the complex conjugate of $\psi $, to
the nonlinear Schr\"{o}dinger (NLS) equation, $-\imath \partial _{t}\psi
=\partial _{x}^{2}\psi \mp 2\psi |\psi |^{2}$. B\"{a}cklund and Darboux
transformations for the system (\ref{ALH}) and its reduction to the discrete
NLS system can be found in e.g. \cite{CM83,PBL96,Suris97,Vekslerchik06} and
references in \emph{loc. cit.}

Kulish considered in \cite{Kulish81} a particular quantisation of the
Ablowitz-Ladik chain leading to the $q$-boson model with Hamiltonian%
\begin{equation}
H=-\sum_{j=1}^{n}(\beta _{j}\beta _{j+1}^{\ast }+\beta _{j}^{\ast }\beta
_{j+1}-2(1-q^{2})N_{j}),\;  \label{Hqboson}
\end{equation}%
see also \cite{BB92,BBP93,BIK98} as well as references therein. Here the $%
\beta _{j}$, $\beta _{j}^{\ast }$'s are the generators of a $q$-deformed
version of the oscillator or Heisenberg algebra with $q\ $being the
quantisation parameter (see the definition (\ref{qbosondef}) in the text)
and $N_{j}$ is the particle number operator at site $j$.

In the Fock space representation and with periodic boundary conditions on
the lattice, $\beta _{j+n}=\beta _{j}$ and $\beta _{j+n}^{\ast }=\beta
_{j}^{\ast }$, the model (\ref{Hqboson}) can be solved via the quantum
inverse scattering method \cite{FST79}. Analogous to the classical case, one
defines a quantum Lax operator $L(u)$ and the Hamiltonian (\ref{Hqboson})
can then be understood as a particular element in a commutative algebra
generated from the commuting transfer matrices $T(u)=\limfunc{Tr}L(u)$ of an
exactly solvable lattice model in the sense of Baxter \cite{Baxter07}; see
Figure \ref{fig:Tvertexmodel} for the vertex configurations defining the
model.

\begin{figure}[tbp]
\begin{equation*}
\includegraphics[scale=0.27]{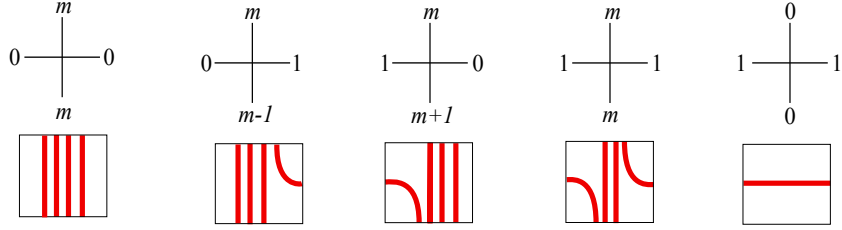}
\end{equation*}%
\caption{Vertex configurations for the $T$-operator. Consider a square
lattice where we assign a nonnegative integer to each lattice edge, the
statistical variable. At each vertex we only allow for particular
configurations, for the $T$-operator the horizontal edges can only take
values 0 or 1, below are the vertex configurations in terms of
non-intersecting paths.}
\label{fig:Tvertexmodel}
\end{figure}

\subsection{Baxter's Q-operators and quantum B\"{a}cklund maps}

In this article we discuss two analogues $Q^{\pm }$ of Baxter's $Q$-operator
for the quantised Ablowitz-Ladik chain (\ref{Hqboson}) with periodic
boundary conditions. The construction of these $Q^{\pm }$-operators
corresponds to the definition of two additional exactly solvable statistical
mechanics models defined on a square lattice; the corresponding vertex
configurations are depicted in Figure \ref{fig:Qvertexmodels}. We show that
the transfer matrices $T(u),Q^{\pm }(u)$ of all three models are related via
functional relations, Baxter's famous $TQ$-equation and a quantum Wronskian
relation for $Q^{\pm }$. The $Q^{+}$-operator has been constructed earlier
by the author, the construction of $Q^{-}$ and the resulting functional
relations (\ref{TQ-}), (\ref{qWronskian}) in the text are new. We show that
in the limit $n\rightarrow \infty $ the second solution $Q^{-}$ becomes the
inverse of $Q^{+}$.

\begin{figure}[tbp]
\begin{equation*}
\includegraphics[scale=0.27]{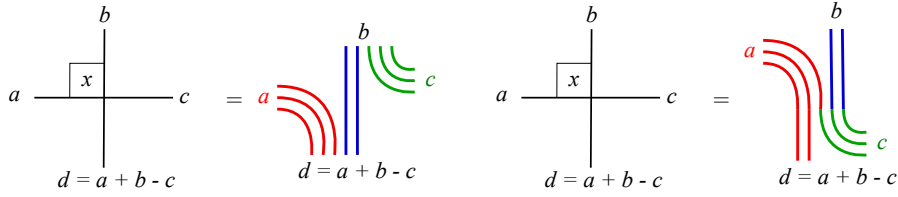}
\end{equation*}%
\caption{Vertex configurations for the $Q^\pm$-operators. Consider a square
lattice where we assign a nonnegative integer to each lattice edge, the
statistical variable. At each vertex we only allow for particular
configurations, for the $Q^+$-operator (bottom left) we require that $b\geq
c $ and $a+b=c+d$. For $Q^-$ (bottom right) we impose instead the conditions 
$a+b\geq c$ and $a+b=c+d$. On the right of each vertex is an interpretation
of these conditions in terms of non-intersecting lattice paths.}
\label{fig:Qvertexmodels}
\end{figure}

Employing the $Q^{+}$-operator we consider the associated similarity
transformations of the $q$-boson fields,%
\begin{equation*}
\beta _{j}\mapsto \tilde{\beta}_{j}(v)=Q^{+}(v)\beta _{j}Q^{+}(v)^{-1}\quad 
\text{and}\quad \beta _{j}^{\ast }\mapsto \tilde{\beta}_{j}^{\ast
}(v)=Q^{+}(v)\beta _{j}^{\ast }Q^{+}(v)^{-1}\;.
\end{equation*}%
The first main result of this article is the proof that the transformed
fields $\tilde{\beta}_{j}(v)$ and $\tilde{\beta}_{j}^{\ast }(v)$ obey the
functional relations (c.f. Theorem \ref{mainresult} in the text)%
\begin{equation}
\frac{\tilde{\beta}_{j}-\beta _{j}}{v}=\left( 1-\beta _{j}^{\ast }\tilde{%
\beta}_{j}\right) \tilde{\beta}_{j-1}\qquad \text{and}\qquad \frac{\tilde{%
\beta}_{j}^{\ast }-\beta _{j}^{\ast }}{v}=\beta _{j+1}^{\ast }(\beta
_{j}^{\ast }\tilde{\beta}_{j}-1),  \label{quantumBintro}
\end{equation}%
which allow one to compute $\tilde{\beta}_{j}(v)$ and $\tilde{\beta}%
_{j}^{\ast }(v)$ recursively via a power series expansion in $v$. This
result is physically significant, since this B\"{a}cklund map (and the one
induced by the adjoint of the $Q^{+}$-operator) describes for small time
steps $0<v=\Delta t\ll 1$ the discrete time evolution of the quantum
Ablowitz-Ladik chain. Surprisingly, the quantum relations (\ref%
{quantumBintro}) match \emph{exactly} (i.e. to all orders in $\hbar $) the
relations of the classical fields $\psi _{j},\psi _{j}^{\ast }$ under the
analogous classical one-variable B\"{a}cklund map $\mathcal{B}^{+}(v):(\psi
_{j},\psi _{j}^{\ast })\mapsto (\tilde{\psi}_{j}(v),\tilde{\psi}_{j}^{\ast
}(v))$ considered by Suris in \cite{Suris97} for the system (\ref{ALH}).

In comparison, the match between Baxter's Q-operator and the classical B%
\"{a}cklund map for the Toda chain is established by formulating $Q$
as an integral operator and then identifying its integral kernel in the
limit $\hbar \rightarrow 0$ with the generating function of the B\"{a}cklund
transform; see \cite{PG92}. Our approach avoids the generating function
altogether and directly arrives at the functional relation (\ref%
{quantumBintro}) which allows one to compute the image of the quantum transform,
the quantum variables $\tilde{\beta}_{j},\,\tilde{\beta}_{j}^{\ast }$, via
recurrence; see Section \ref{sec:result}.

\subsection{Multivariate B\"{a}cklund maps and 2D TQFT}

Once a one-variable B\"{a}cklund map is constructed one can consider
multivariate transforms via composition, $\mathcal{B}^{+}(x_{1})\circ 
\mathcal{B}^{+}(x_{2})\circ \cdots \circ \mathcal{B}^{+}(x_{n-1})$. These
multivariate transforms are central to Sklyanin's separation of variables
approach; see e.g. \cite{KS98} and, for a discussion of the quantum
separation of variables approach, \cite{Niccoli13} as well as references
therein. Here we wish to highlight a novel aspect of these multivariate
transforms: they generate the fusion matrices of a 2D topological quantum
field theory (TQFT) constructed in \cite{CMP13}.

Consider matrix elements of a product of $Q^{+}$-operators which is the
quantum analogue of the above multivariate B\"{a}cklund transform, 
\begin{equation}
\langle \lambda |\tprod_{i=1}^{n-1}Q^{+}(x_{i})|\mu \rangle =\sum_{\nu
}(-1)^{|\nu|}N_{\mu \nu }^{\lambda }(q)P_{\nu }(x_{1},\ldots ,x_{n-1};q)\;,
\label{fusion0}
\end{equation}%
where $\lambda $, $\mu $, $\nu $ are partitions labelling particle
configurations of the $q$-boson model (see Eqn (\ref{can_basis}) in the
text) and the $P_{\lambda }$'s are a special basis in the ring of symmetric
functions, the so-called $q$-Whittaker functions which are a special case of
Macdonald's functions \cite[Ch. VI]{Macdonald}. Since the $Q^{+}$-operators
for different $x_{i}$'s commute, reflecting the analogous property of the
classical B\"{a}cklund transforms, the above expansion is well-defined. The
expansion coefficients $N_{\mu \nu }^{\lambda }(q)$ are polynomials in $q$
with integer coefficients and define the fusion in a 2D TQFT, i.e. they are
the values of the pair of pants cobordism shown in Figure \ref{fig:fusion}
which can be seen as a 2D analogue of a Feynman diagram describing the
fusion of particles in a QFT. The TQFT defined via (\ref{fusion0}) is a $q$%
-deformation of the $SU(n)$-WZNW fusion ring, that is the constant term $%
N_{\mu \nu }^{\lambda }(0)$ in $N_{\mu \nu }^{\lambda }(q)$ is given by the
operator product expansion of primary fields in WZNW conformal field theory.

The second main result of this article is that the matrix elements of the B%
\"{a}cklund transformed $q$-boson fields $\tilde{\beta}_{j}(v)$ and $\tilde{%
\beta}_{j}^{\ast }(v)$ can be computed in terms of the TQFT fusion
coefficients, thus relating the quantum B\"{a}cklund transform of the
Ablowitz-Ladik chain to fusion in a 2D TQFT; see Prop \ref{prop:linear} in
the text.

\begin{figure}[tbp]
\begin{equation*}
\includegraphics[scale=0.45]{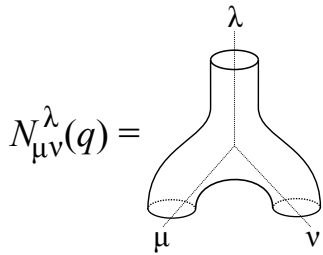}
\end{equation*}%
\caption{Depiction of the `pair-of-pants\rq{} cobordism which fixes the
fusion product in a 2D TQFT. The dotted lines inside are the corresponding
conventional Feynman diagram.}
\label{fig:fusion}
\end{figure}

\subsection{Outline of the article}

The outline of the article is as follows. In Section 2 we review some
aspects of the classical Ablowitz-Ladik chain, in particular we recall a
one-variable family of B\"{a}cklund transforms considered by Suris in \cite%
{Suris97} which is quantised in terms of the $Q^{+}$-operator in Section 4.
Before that Section 3 reviews the necessary background on the quantisation
of the Ablowitz-Ladik chain, the $q$-boson algebra and the algebraic Bethe
ansatz solution of the $q$-boson model. Section 4 gives the construction of
the $Q^{\pm }$-operators and the derivation of Baxter's $TQ$-equation as
well as the resulting quantum Wronskian relation. Section 5 contains the
main result, the discussion of the quantum B\"{a}cklund transform and the
derivation of the functional relations (\ref{quantumBintro}). In Section 6
we then relate these relations to the 2D TQFT defined in terms of the
Q-operator. Section 7 contains a discussion of the results and outlook
towards future work.

\section{The classical Ablowitz-Ladik chain}

Interpreting the discrete fields $\{\psi _{j}\}_{j\in \mathbb{Z}_{n}}$ and $%
\{\psi _{j}^{\ast }\}_{j\in \mathbb{Z}_{n}}$ as components of vectors in $%
\mathbb{R}^{2n},$ the classical Ablowitz-Ladik chain (\ref{ALH}) allows for
the following Poisson structure (see e.g. \cite{Kulish81} and \cite{Suris97}%
) 
\begin{equation}
\{\psi _{i},\psi _{j}^{\ast }\}=\delta _{ij}(1-\psi _{j}^{\ast }\psi
_{j})\qquad \text{and}\qquad \{\psi _{i},\psi _{j}\}=\{\psi _{i}^{\ast
},\psi _{j}^{\ast }\}=0\;.  \label{Poisson}
\end{equation}%
Note that the discrete fields $\psi _{j}$, $\psi _{j}^{\ast }$ are not a
pair of canonical variables. The time evolution (\ref{ALH}) can be described
as Hamiltonian flows, $\partial _{t}\psi _{j}=\{H,\psi _{j}\}$ and $\partial
_{t}\psi _{j}^{\ast }=\{H,\psi _{j}^{\ast }\}$, where (assuming $|\psi
_{j}^{\ast }\psi _{j}|<1$) 
\begin{equation}
H=-\sum_{j\in \mathbb{Z}_{n}}\left( \psi _{j}^{\ast }\psi _{j+1}+\psi
_{j}\psi _{j+1}^{\ast }+2\ln (1-\psi _{j}^{\ast }\psi _{j})\right) \;.
\label{clH}
\end{equation}%
Initially, one can consider $\psi _{j}$ and $\psi _{j}^{\ast }$ as
independent variables, but as explained in the introduction, the system
reduces to the NLS model when changing to imaginary time, $t\rightarrow
\imath t$, and identifying $\psi _{j}^{\ast }$ as the complex conjugate of $%
\psi _{j}$.

The integrals of motion of the system (\ref{ALH}) are obtained from the
spectral curve $\det (L(u)-\lambda )=0$, where the Lax operator $L(u)$ is
given as a product of the following local Lax matrices,%
\begin{equation}
L(u)=L_{n}(u)\cdots L_{2}(u)L_{1}(u),\qquad L_{j}(u)=\left( 
\begin{array}{cc}
1 & u\psi _{j}^{\ast } \\ 
\psi _{j} & u%
\end{array}%
\right) \;.  \label{classicL}
\end{equation}%
Here we have chosen not to take the standard form of the Lax matrices, see
e.g. \cite{AL76,Kulish81,Suris97} and references therein, but instead match
the conventions in \cite{CMP13} which turn out to be convenient to discuss
the quantum case. There are two spectral invariants, the determinant 
\begin{equation}
\det L(u)=u^{n}\prod_{j=1}^{n}(1-\psi _{j}^{\ast }\psi _{j})  \label{detL}
\end{equation}%
and the trace%
\begin{equation}
T(u)=\limfunc{Tr}L(u)=\sum_{r=0}^{n}u^{r}T_{r}\;.  \label{TrL}
\end{equation}%
Using the classical analogue of the Yang-Baxter equation one proves the
following \cite{Kulish81}:

\begin{proposition}
\label{prop:integrable}$\{T_{r},T_{r^{\prime }}\}=0$ for all $r,r^{\prime
}=0,1,\ldots ,n$.
\end{proposition}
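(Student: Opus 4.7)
The plan is to prove Proposition \ref{prop:integrable} by establishing the standard classical $r$-matrix framework for the Lax matrix (\ref{classicL}) and then applying the usual trace argument. Since the $T_r$ are the coefficients of $u^r$ in $T(u)=\operatorname{Tr} L(u)$, it suffices to prove the stronger statement $\{T(u), T(v)\}=0$ as a polynomial identity in $u,v$ and then compare coefficients.

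First I would compute the Poisson bracket of the local Lax matrix $L_j(u)$ with itself at two different spectral parameters. Using the relations (\ref{Poisson}), the only nonzero contributions come from the entries involving $\psi_j$ and $\psi_j^*$, and a direct computation should yield an identity of the form
\begin{equation*}
\{L_j(u)\stackrel{\otimes}{,} L_j(v)\} = [r(u,v),\, L_j(u)\otimes L_j(v)]
\end{equation*}
for an appropriate $r$-matrix $r(u,v)\in \operatorname{End}(\mathbb{C}^2\otimes\mathbb{C}^2)$ depending only on the ratio (or difference) of spectral parameters. The precise form of $r$ for the Ablowitz-Ladik Lax matrix is well known in the literature (cf.\ \cite{Kulish81,Suris97}), but since the normalisation in (\ref{classicL}) is non-standard, I would redo this calculation to fix the correct $r$. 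This is the step that requires the most care.

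Next I would lift this local relation to the full monodromy matrix $L(u)=L_n(u)\cdots L_1(u)$. The key observation is that, by (\ref{Poisson}), the fields $\psi_i,\psi_i^*$ at site $i$ Poisson-commute with $\psi_j,\psi_j^*$ at site $j\neq i$, so $L_i(u)$ and $L_j(v)$ have vanishing Poisson bracket when $i\neq j$. A standard telescoping argument, using the Leibniz rule for the Poisson bracket together with the local $r$-matrix relation, then gives
\begin{equation*}
\{L(u)\stackrel{\otimes}{,} L(v)\} = [r(u,v),\, L(u)\otimes L(v)].
\end{equation*}

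Finally, taking $\operatorname{Tr}\otimes \operatorname{Tr}$ on both sides and using the cyclicity of the trace, the right-hand side vanishes, yielding $\{T(u),T(v)\}=0$. Expanding $T(u)=\sum_r u^r T_r$ and $T(v)=\sum_{r'} v^{r'}T_{r'}$ and extracting coefficients of $u^r v^{r'}$ gives the desired involutivity $\{T_r,T_{r'}\}=0$. The main obstacle is exclusively step one: verifying the existence and explicit form of $r(u,v)$ for the particular normalisation of $L_j(u)$ used in (\ref{classicL}); once that is in hand, the rest is the textbook quantum inverse scattering / $r$-matrix argument in its classical form.
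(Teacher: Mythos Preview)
Your proposal is correct and is precisely the argument the paper has in mind: the paper does not actually give a detailed proof of Proposition~\ref{prop:integrable} but merely states that it follows from ``the classical analogue of the Yang-Baxter equation'' with a reference to \cite{Kulish81}, which is exactly the classical $r$-matrix framework you outline. Your three-step plan (local $r$-matrix relation, telescoping via ultralocality, trace) is the standard fleshing-out of that one-line justification, so there is nothing to add.
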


The integrals of motion $T_{r}$ create higher flows and their existence
implies Liouville integrability of the system (\ref{ALH}). The Hamiltonian (%
\ref{clH}) creating the physical time flow\ (\ref{ALH}) is given by%
\begin{equation}
H=-T_{1}-T_{n-1}+2\limfunc{Tr}(\ln L(1))
\end{equation}%
Thus, we note that the physical time flow separates into three different
commuting flows generated by $T_{1}=\sum_{j=1}^{n}\psi _{j}\psi _{j+1}^{\ast
}$, $T_{n-1}=\sum_{j=1}^{n}\psi _{j}^{\ast }\psi _{j+1}$ and the determinant
(\ref{detL}); see the discussion in \cite{Suris97}. For instance, the flow
generated from $-T_{1}$ gives rise to the differential-difference equations 
\cite[Eqns (2.7) and (2.8)]{Suris97}%
\begin{equation}
\left\{ 
\begin{array}{c}
\partial _{t}\psi _{j}=\psi _{j-1}(1-\psi _{j}^{\ast }\psi _{j}) \\ 
\partial _{t}\psi _{j}^{\ast }=-\psi _{j+1}^{\ast }(1-\psi _{j}^{\ast }\psi
_{j})%
\end{array}%
\right. \;.  \label{right_flow}
\end{equation}%
The classical B\"{a}cklund transforms which we discuss in the next section
can be understood as time-discretisation of this particular flow.

\subsection{A one-variable family of B\"{a}cklund transforms}

In this article we are interested in one-variable families of B\"{a}cklund
transformations $\mathcal{B}^{\pm }(v):(\psi ,\psi ^{\ast })\mapsto (\tilde{%
\psi},\tilde{\psi}^{\ast })$ which correspond to the flows discussed by
Suris in \cite[Prop 1 and Prop 4]{Suris97} with $v$ playing the role of the
discrete time parameter $v\mathbb{Z}$. We summarise here some of the results
from \emph{loc. cit.} for the case of periodic boundary conditions to keep
this article self-contained and to introduce our notation. The novel aspect
in this article is that we relate these discrete time flows to the $Q^{\pm }$%
-operator which we introduce in a subsequent section.

The transform $\mathcal{B}^{+}(v)$ is defined implicitly in the standard
manner using a \emph{Darboux transformation}: one considers the following
local gauge transformation of the Lax operator (\ref{classicL}) 
\begin{equation}
D_{j+1}^{+}(u,v)L_{j}(u;\psi ,\psi ^{\ast })=L_{j}(u;\tilde{\psi},\tilde{\psi%
}^{\ast })D_{j}^{+}(u,v)\;,  \label{DL=LD}
\end{equation}%
where the so-called\emph{\ Darboux matrices }$D_{j}^{+}$\emph{\ }are assumed
to be of the following form, 
\begin{equation}
D_{j}^{+}(u,v)=\left( 
\begin{array}{cc}
v-ua_{j} & -ub_{j} \\ 
vc_{j} & -u%
\end{array}%
\right) \;.  \label{D}
\end{equation}%
Here $v$ is an additional variable related to the spectrality property of
the B\"{a}cklund transform $\mathcal{B}^{+}(v)$; see \cite{Sklyanin00} and
references therein for an explanation in the context of the Toda chain. N.B.
the dressing matrix is singular for $u=v$, $\det D_{j}^{+}(v,v)=0$, if we
require that $a_{j}=1-b_{j}c_{j}$ in close analogy with the case of the Toda
chain discussed in \emph{loc. cit.} The equality (\ref{DL=LD}) together with
the explicit form of the Darboux matrix (\ref{D}) then fixes the map $%
\mathcal{B}^{+}(v):(\psi ,\psi ^{\ast })\mapsto (\tilde{\psi},\tilde{\psi}%
^{\ast })$; compare with \cite[Eqn (3.1)]{Suris97}.

\begin{proposition}
The transformed variables $(\tilde{\psi}_{j},\tilde{\psi}_{j}^{\ast })$ obey
the following functional relations%
\begin{equation}
\left\{ 
\begin{array}{c}
\tilde{\psi}_{j}-\psi _{j}=v~\tilde{\psi}_{j-1}(1-\psi _{j}^{\ast }\tilde{%
\psi}_{j})\medskip \\ 
\tilde{\psi}_{j}^{\ast }-\psi _{j}^{\ast }=v~\psi _{j+1}^{\ast }(\psi
_{j}^{\ast }\tilde{\psi}_{j}-1)%
\end{array}%
\right. \;.  \label{clB}
\end{equation}%
In particular, the matrix elements in (\ref{D}) are $a_{j}=1+v\psi
_{j}^{\ast }\tilde{\psi}_{j-1}$, $b_{j}=-v\psi _{j}^{\ast }$, $c_{j}=\tilde{%
\psi}_{j-1}$.
\end{proposition}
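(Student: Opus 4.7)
The plan is to treat the intertwining identity (\ref{DL=LD}) as a matrix equation in $u$ and extract the contents of $D_j^+$ together with the recursion for $(\tilde\psi_j,\tilde\psi_j^\ast)$ by comparing coefficients of $u^k$ in each of the four matrix entries. Both sides are $2\times 2$ matrices whose entries are polynomials of degree at most two in $u$, so there are finitely many equations to solve, and the fact that $u$ is a free spectral parameter allows us to match like powers of $u$ independently.

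First I would compute the left-hand side $D_{j+1}^+(u,v)L_j(u;\psi,\psi^\ast)$ and the right-hand side $L_j(u;\tilde\psi,\tilde\psi^\ast)D_j^+(u,v)$ directly. Comparing the $u$-coefficient of the $(1,2)$ entry immediately gives $b_j=-v\psi_j^\ast$, and comparing the $u^0$-coefficient of the $(2,1)$ entry gives $c_{j+1}=\tilde\psi_j$, i.e.\ $c_j=\tilde\psi_{j-1}$. At this point the unknown $a_j$ is still free. Here one invokes the spectrality condition $\det D_j^+(v,v)=0$, stated just before the proposition, which forces $a_j=1-b_jc_j=1+v\psi_j^\ast\tilde\psi_{j-1}$, establishing the explicit form of the Darboux matrix claimed at the end of the statement.

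Next I would substitute these expressions for $a_j,b_j,c_j$ back into the remaining coefficient equations. The $u$-coefficient of the $(2,1)$ entry reads $\psi_j=\tilde\psi_j a_j - v\tilde\psi_{j-1}$, and rearranging with $a_j=1+v\psi_j^\ast\tilde\psi_{j-1}$ yields
\[
\tilde\psi_j-\psi_j=v\tilde\psi_{j-1}(1-\psi_j^\ast\tilde\psi_j),
\]
which is the first functional relation of (\ref{clB}). Similarly, the $u^2$-coefficient of the $(1,2)$ entry gives $\tilde\psi_j^\ast = a_{j+1}\psi_j^\ast+b_{j+1}$, and substituting $a_{j+1}=1+v\psi_{j+1}^\ast\tilde\psi_j$ and $b_{j+1}=-v\psi_{j+1}^\ast$ produces the second relation $\tilde\psi_j^\ast-\psi_j^\ast=v\psi_{j+1}^\ast(\psi_j^\ast\tilde\psi_j-1)$.

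Finally, I would verify that the remaining coefficient identities (the $(1,1)$ entry and the $(2,2)$ entry) are automatically satisfied by the above values, so that the system is consistent and no further conditions are imposed. There is no serious obstacle here: the content of the proof is bookkeeping, and the only conceptual ingredient is the spectrality condition used to eliminate $a_j$. The subtle point to state clearly is that the functional relations (\ref{clB}) should be read as an implicit recursion determining $(\tilde\psi_j,\tilde\psi_j^\ast)$, since $\tilde\psi_j$ appears on both sides; solvability in the formal power series in $v$ around $v=0$ is transparent because the leading order recovers $(\tilde\psi_j,\tilde\psi_j^\ast)=(\psi_j,\psi_j^\ast)$.
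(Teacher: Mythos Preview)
Your proposal is correct and follows precisely the approach indicated in the paper, which records the proof only as ``a straightforward computation'' obtained by inserting (\ref{D}) into (\ref{DL=LD}) and comparing coefficients of powers of $u$. Your write-up supplies the details the paper omits, including the explicit use of the spectrality condition $a_j=1-b_jc_j$ (stated just before the proposition) to pin down $a_j$, and the consistency check on the remaining entries.
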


\begin{proof}
A straightforward computation. Inserting (\ref{D}) into (\ref{DL=LD}) and
comparing coefficients of powers in the spectral variable $u$ yields the
asserted equalities.
\end{proof}

The identities (\ref{clB}) allow one to compute the transformed variables $(%
\tilde{\psi},\tilde{\psi}^{\ast })$ via recurrence upon expanding them in
power series with respect to the variable $v$. That is, if we set%
\begin{equation}
\tilde{\psi}_{j}(v)=\sum_{r\geq 0}v^{r}\tilde{\psi}_{j_{,}r}\qquad \text{\
and}\qquad \tilde{\psi}_{j}^{\ast }(v)=\sum_{r\geq 0}v^{r}\tilde{\psi}%
_{j_{,}r}^{\ast }  \label{psi_expansion}
\end{equation}%
then the above relations imply the recurrence identities%
\begin{equation}
\left\{ 
\begin{array}{l}
\tilde{\psi}_{j_{,}r}=\tilde{\psi}_{j-1,r-1}+\psi _{j}^{\ast
}\tsum\limits_{a+b=r-1}\tilde{\psi}_{j-1,a}\tilde{\psi}_{j,b}\medskip \\ 
\tilde{\psi}_{j,r}^{\ast }=\psi _{j}^{\ast }\psi _{j+1}^{\ast }\tilde{\psi}%
_{j,r-1}%
\end{array}%
\right. \;.  \label{recurrence}
\end{equation}%
Note that $\mathcal{B}^{+}(0)=\limfunc{Id}$ which fixes the initial
conditions $\tilde{\psi}_{j,0}=\psi _{j}$ and $\tilde{\psi}_{j,0}^{\ast
}=\psi _{j}^{\ast }$. By construction the map $\mathcal{B}^{+}(v)$ leaves
the Hamiltonians (\ref{TrL}) as well as (\ref{detL}) invariant, $T_{r}(\psi
,\psi ^{\ast })=T_{r}(\tilde{\psi},\tilde{\psi}^{\ast })$. Furthermore, one
has \cite[Prop 5]{Suris97}:

\begin{proposition}
$\mathcal{B}^{+}(v)$ preserves the Poisson structure (\ref{Poisson}).
\end{proposition}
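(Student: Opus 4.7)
The plan is to encode the Poisson bracket (\ref{Poisson}) as a Sklyanin bracket on the local Lax matrices (\ref{classicL}) and then propagate it through the Darboux intertwining relation (\ref{DL=LD}). Concretely, (\ref{Poisson}) is equivalent to a quadratic bracket of the form
\begin{equation*}
\{L_{j}(u)\stackrel{\otimes}{,}L_{k}(w)\} = \delta_{jk}\,[r(u,w),\,L_{j}(u)\otimes L_{j}(w)],
\end{equation*}
for a suitable rational $r$-matrix $r(u,w)$ adapted to the AL Lax structure; my first step would be to verify this directly from (\ref{Poisson}) by computing both sides on the entries of $L_{j}$. This reformulation is in any case the natural framework behind Proposition \ref{prop:integrable}.

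The main step is then to show that the transformed Lax matrices $L_{j}(\tilde{\psi},\tilde{\psi}^{\ast})$ satisfy the same Sklyanin bracket; extracting coefficients of $u,w$ from the resulting relation immediately yields (\ref{Poisson}) with $\psi $ replaced by $\tilde{\psi}$. Using (\ref{DL=LD}) in the form $L_{j}(\tilde{\psi})(u)=D_{j+1}^{+}(u,v)\,L_{j}(\psi )(u)\,D_{j}^{+}(u,v)^{-1}$, the bracket $\{L_{j}(\tilde{\psi})(u)\stackrel{\otimes}{,}L_{k}(\tilde{\psi})(w)\}$ splits into the unperturbed $r$-matrix term plus cross contributions $\{D,L\}$, $\{L,D\}$ and $\{D,D\}$, where the entries $a_{j}=1+v\psi_{j}^{\ast}\tilde{\psi}_{j-1}$, $b_{j}=-v\psi_{j}^{\ast}$, $c_{j}=\tilde{\psi}_{j-1}$ are treated as functions on the original phase space through the recurrence (\ref{recurrence}). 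The two algebraic identities I would establish are an analogous Sklyanin bracket for $D_{j}^{+}(u,v)$ alone and a compatible cross-bracket $\{D_{j}^{+}(u,v)\stackrel{\otimes}{,}L_{k}(w)\}$; once these are in place, the intertwining relation (\ref{DL=LD}) is used to rearrange the cross terms into precisely the required commutator with $L_{j}(\tilde{\psi})(u)\otimes L_{j}(\tilde{\psi})(w)$.

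The main obstacle is that $\tilde{\psi}_{j-1}$ depends implicitly on all of $(\psi _{k},\psi _{k}^{\ast })$ through (\ref{recurrence}), so no closed formula is available when computing the cross-brackets $\{D,L\}$ and $\{D,D\}$. I would overcome this in one of two ways. Route (i): work order-by-order in $v$ using the power series (\ref{psi_expansion}), which reduces each order to a finite explicit computation in the original variables together with an inductive step on the order; since both sides of the claimed bracket identity are formal power series in $v$ starting at $\mathcal{B}^{+}(0)=\limfunc{Id}$, matching coefficients is sufficient. Route (ii): exploit the spectrality condition $\det D_{j}^{+}(v,v)=0$, which forces $D_{j}^{+}(v,v)$ to have rank one and admits a factorisation $D_{j}^{+}(v,v)=X_{j}Y_{j}^{\mathsf{T}}$; evaluating the Sklyanin-type bracket at $u=v$ then collapses the cross contributions and pins down the required identities. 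Either route yields the Sklyanin bracket for $L_{j}(\tilde{\psi})$ and hence the conclusion.
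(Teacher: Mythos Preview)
The paper does not supply its own proof of this proposition; it simply attributes the result to Suris, citing \cite[Prop~5]{Suris97}. So there is no ``paper's proof'' to compare against beyond that reference. What I can assess is whether your outline constitutes a proof and how it relates to the standard arguments.

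Your framework is reasonable in spirit---recasting (\ref{Poisson}) as a Sklyanin bracket on the $L_{j}$ and then pushing it through (\ref{DL=LD}) is the natural abstract mechanism---but as written it is a plan, not a proof, and the central difficulty is not resolved. The issue is the circularity you yourself flag: the entries of $D_{j}^{+}$ involve $\tilde{\psi}_{j-1}$, so to compute $\{D_{j}^{+}(u,v)\stackrel{\otimes}{,}L_{k}(w)\}$ or $\{D_{j}^{+}\stackrel{\otimes}{,}D_{k}^{+}\}$ you must already know the brackets $\{\tilde{\psi}_{j-1},\psi_{k}\}$, $\{\tilde{\psi}_{j-1},\psi_{k}^{\ast}\}$ and $\{\tilde{\psi}_{j-1},\tilde{\psi}_{k-1}\}$. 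These are precisely the quantities the proposition asks you to determine, so the scheme ``establish Sklyanin-type brackets for $D$ and cross-brackets $\{D,L\}$, then rearrange'' presupposes the answer. Your Route~(ii) via spectrality does not obviously break this loop: the rank-one degeneration of $D_{j}^{+}(v,v)$ is the mechanism behind the spectrality property of $\mathcal{B}^{+}(v)$ (that $v$ enters the spectral curve), but it does not by itself supply the missing brackets of $\tilde{\psi}$ with the original variables at generic $u$.

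Route~(i)---the order-by-order expansion in $v$ using (\ref{recurrence})---is the one that can actually be completed, but it amounts to a direct inductive verification that $\{\tilde{\psi}_{i},\tilde{\psi}_{j}^{\ast}\}=\delta_{ij}(1-\tilde{\psi}_{j}^{\ast}\tilde{\psi}_{j})$ and $\{\tilde{\psi}_{i},\tilde{\psi}_{j}\}=\{\tilde{\psi}_{i}^{\ast},\tilde{\psi}_{j}^{\ast}\}=0$ holds coefficient-by-coefficient. At that point the $r$-matrix scaffolding is not really doing work; you are effectively carrying out a direct computation in disguise. Suris's original argument in \cite{Suris97} proceeds differently: he exhibits the map as a discrete-time flow with an explicit (local) Lagrangian/generating-function description adapted to the non-canonical bracket (\ref{Poisson}), from which the Poisson property is immediate without needing to unwind the implicit dependence of $\tilde{\psi}$ on $\psi$. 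If you want to pursue the $r$-matrix route rigorously, the clean way is to treat $(\psi,\tilde{\psi})$ as independent coordinates on a doubled phase space, impose (\ref{DL=LD}) as a constraint, and show the constraint surface is coisotropic with the reduced bracket matching (\ref{Poisson}) on both factors; but that is a substantially different argument from what you have sketched.
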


Therefore the transform $\mathcal{B}^{+}(v)$ satisfies the defining
conditions of an integrable map \cite{Veselov91} and it then follows from a
general argument \cite{Sklyanin00} that the transforms $\mathcal{B}^{+}(x)$
and $\mathcal{B}^{+}(y)$ commute, 
\begin{equation}
\mathcal{B}^{+}(x)\circ \mathcal{B}^{+}(y)=\mathcal{B}^{+}(y)\circ \mathcal{B%
}^{+}(x)\;.  \label{Bcomm}
\end{equation}

One can also consider the inverse of the Darboux matrix (\ref{D}) and one
then arrives at a second transform $\mathcal{B}^{-}(v):(\psi ,\psi ^{\ast
})\mapsto (\hat{\psi},\hat{\psi}^{\ast })$ which in case of the infinite
chain is related to the inverse of $\mathcal{B}^{+}(-v)$ defined via the
relations \cite[Prop 4, Eqns (3.13) and (3.14)]{Suris97}%
\begin{equation}
\left\{ 
\begin{array}{c}
\hat{\psi}_{j}-\psi _{j}=-v\psi _{j-1}(1-\psi _{j}\hat{\psi}_{j}^{\ast
})\medskip \\ 
\hat{\psi}_{j}^{\ast }-\psi _{j}^{\ast }=v(1-\psi _{j}\hat{\psi}_{j}^{\ast })%
\hat{\psi}_{j+1}^{\ast }%
\end{array}%
\right. \;.  \label{clB-}
\end{equation}%
The proofs are analogous to the previous case. Both systems of equations, (%
\ref{clB}) and (\ref{clB-}), are approximations of the flow (\ref{right_flow}%
). Similarly, one can investigate the transforms for the flow generated by $%
-T_{n-1}$; see \cite[Prop 2 and Prop 3]{Suris97}. These turn out to be
related to the hermitian adjoints of the $Q^{\pm }$-operators, we therefore
restrict our discussion to (\ref{clB}) and (\ref{clB-}).

\section{Quantisation of the Ablowitz-Ladik system}

The following quantisation of the Ablowitz-Ladik model was first discussed
by Kulish \cite{Kulish81}, 
\begin{equation}
\left\{ 
\begin{array}{c}
\partial _{t}\beta _{j}=\beta _{j+1}-2\beta _{j}+\beta _{j-1}-\beta
_{j}^{\ast }\beta _{j}(\beta _{j+1}+\beta _{j-1})\medskip \\ 
\partial _{t}\beta _{j}^{\ast }=-\beta _{j+1}^{\ast }+2\beta _{j}^{\ast
}-\beta _{j-1}^{\ast }+\beta _{j}^{\ast }\beta _{j}(\beta _{j+1}^{\ast
}+\beta _{j-1}^{\ast })%
\end{array}%
\right. ~,  \label{qALH}
\end{equation}%
where $\{\beta _{j},\beta _{j}^{\ast }\}_{j\in \mathbb{Z}_{n}}$ are now
noncommutative variables satisfying the defining relation of the $q$-boson
or $q$-Heisenberg algebra $\mathcal{H}_{n}(q)$; c.f. \cite[Eqn (14)]%
{Kulish81} and \cite{Biedenharn89}. Throughout this article we assume $q$ to
be an indeterminate.

\begin{definition}
Let $\mathcal{H}_{n}(q)$ be the $\mathbb{C}(q)$-algebra generated by $%
\{\beta _{i},\beta _{i}^{\ast },q^{\pm N_{i}}\}_{i=1}^{n}$ subject to the
relations%
\begin{equation}
\left\{ 
\begin{array}{c}
q^{N_{i}}\beta _{j}=q^{-\delta _{ij}}\beta _{j}q^{N_{i}},\qquad
q^{N_{i}}\beta _{j}^{\ast }=q^{\delta _{ij}}\beta _{j}^{\ast
}q^{N_{i}}\medskip \\ 
\beta _{i}\beta _{j}^{\ast }-\beta _{j}^{\ast }\beta _{i}=\delta
_{ij}(1-q^{2})q^{2N_{i}},\quad \beta _{i}\beta _{i}^{\ast }-q^{2}\beta
_{i}^{\ast }\beta _{i}=1-q^{2}%
\end{array}%
\right. ~.  \label{qbosondef}
\end{equation}
\end{definition}

N.B. the notation $q^{N_{i}}$ is purely formal, it does \emph{not} mean that 
$q^{N_{i}}$ is given by exponentiating another generator $N_{i}$ although we
will often write $q^{a}(q^{N_{i}})^{m}$ as $q^{mN_{i}+a}$. We have also made
a small change to the usual definition of the $q$-boson algebra, see e.g. 
\cite{KlimykSchmuedgen97}, by multiplying one of the generators with an
extra factor, $\beta _{j}\rightarrow (1-q^{2})\beta _{j}$. This allows us to
derive from the second set of relations in (\ref{qbosondef}) the identity $%
\beta _{j}^{\ast }\beta _{j}=1-q^{2N_{j}}$ and, thus, we obtain the familiar
quantisation formula $[\cdot ,\cdot ]=-\imath \hbar \{\cdot ,\cdot
\}+O(\hbar ^{2})$ for the Poisson structure (\ref{Poisson}) when evaluating
the indeterminate as $q\rightarrow \exp (\imath \hbar \gamma /2)$,%
\begin{equation}
\lbrack \beta _{i},\beta _{j}^{\ast }]=\delta _{ij}(1-q^{2})(1-\beta
_{i}^{\ast }\beta _{i})=-\delta _{ij}~\imath \hbar \gamma (1-\beta
_{i}^{\ast }\beta _{i})+O(\hbar ^{2})\;,  \label{quantisation}
\end{equation}%
compare with \cite[Eqn (13)]{Kulish81}. Similar to the classical case, $%
\beta _{j}^{\ast }$ is initially an independent generator, but we may
consider representations of the $q$-boson algebra where $\beta _{j}^{\ast }$
is the hermitian adjoint of $\beta _{j}$; see the Fock representation (\ref%
{Fock}) below. This corresponds to the above mentioned reduction of the
Ablowitz-Ladik chain to the NLS model and due to the change to imaginary
time, $t\rightarrow \imath t$, one then must choose $\gamma =\imath c$ with $%
c\in \mathbb{R}$ being the coupling constant in the QNLS Hamiltonian.

The form of the Hamiltonian closely resembles the classical one,%
\begin{equation}
H=-\sum_{j=1}^{n}(\beta _{j}\beta _{j+1}^{\ast }+\beta _{j}^{\ast }\beta
_{j+1}-2(1-q^{2})N_{j}),  \label{qLH}
\end{equation}%
where the $N_{j}$'s are a set of additional generators not contained in the
original $q$-boson algebra obeying%
\begin{equation}
N_{i}q^{N_{j}}=q^{N_{j}}N_{i},\quad \beta _{i}(N_{j}-\delta
_{ij})=N_{j}\beta _{i},\quad \beta _{i}^{\ast }(N_{j}+\delta
_{ij})=N_{j}\beta _{i}^{\ast }\;.  \label{number_op}
\end{equation}%
If we identify $\beta _{i}^{\ast }$ with the creation and $\beta _{i}$ with
the annihilation of a $q$-boson at site $i$, then we easily recognise that $%
N_{i}$ plays the role of a number operator on this site. We shall denote the
total particle number operator by $N=\sum_{j=1}^{n}N_{j}$.

In close analogy to the classical case (\ref{classicL}) one defines the
monodromy matrix of the quantised system as a product of the local Lax
matrices%
\begin{equation}
L(u)=L_{n}(u)\cdots L_{1}(u)=\left( 
\begin{array}{cc}
A(u) & B(u) \\ 
C(u) & D(u)%
\end{array}%
\right) ,\qquad L_{j}(u)=\left( 
\begin{array}{cc}
1 & u\beta _{j}^{\ast } \\ 
\beta _{j} & u%
\end{array}%
\right)   \label{mom}
\end{equation}%
and one then obtains the quantum versions of the spectral invariants (\ref%
{TrL}) of the classical system,%
\begin{equation}
T(u)=\limfunc{Tr}L(u)=A(u)+z~D(u)=\sum_{r=0}^{n}u^{r}T_{r}~,  \label{T}
\end{equation}%
where $z$ is an additional quasi-periodicity parameter which we will need
later on. The explicit form of the quantum integrals of motion $T_{r}$ has
been derived in \cite[Prop 3.10, Eqn (3.43)]{CMP13}: set $a_{i}=\beta
_{i}\beta _{i+1}^{\ast }$ for $i=1,\ldots ,n-1$ and $a_{n}=z\beta _{n}\beta
_{1}^{\ast }$ (right or clockwise hopping) then%
\begin{equation}
T_{r}=\sum_{w=i_{1}\cdots i_{r}}\frac{[a_{i_{1}},[a_{i_{2}},\ldots \lbrack
a_{i_{r-1}},a_{i_{r}}]_{q^{2}}\ldots ]_{q^{2}}}{(1-q^{2})^{r-1}}~
\label{T_r}
\end{equation}%
with $[X,Y]_{q^{2}}=XY-q^{2}YX$ and the sum is running over all cyclically
ordered words $w$ with letters $1\leq i_{j}\leq n$ occurring at most once.
Moreover, $T_{0}=1$, $T_{n}=z$ and we set $T_{r}=0$ for all $r>n$ and $r<0$.
Note that $T_{1}=\sum_{j=1}^{n}\beta _{j}\beta _{j+1}^{\ast }$ and that $%
T_{n}$ simplifies to $T_{n}=\sum_{j=1}^{n}\beta _{j}^{\ast }\beta _{j+1}$.
Thus, as in the classical case one has a splitting of the quantum
Hamiltonian (\ref{qLH}) into `right movers' and `left movers'.

One now proves quantum integrability of the system (\ref{qALH}) via the
following solution of the Yang-Baxter equation \cite{Kulish81}, 
\begin{equation}
R_{12}(u/v)L_{13}(u)L_{23}(v)=L_{23}(v)L_{13}(u)R_{12}(u/v),  \label{RLL}
\end{equation}%
where the $R$-matrix in $\limfunc{End}(\mathbb{C}^{2}\otimes \mathbb{C}%
^{2})\cong \limfunc{End}\mathbb{C}^{4}$ with the conventions used in the
definition (\ref{qLH}) of $L$ reads%
\begin{equation}
R(u)=\left( 
\begin{array}{cccc}
\frac{uq^{2}-1}{u-1} & 0 & 0 & 0 \\ 
0 & q^{2} & \frac{q^{2}-1}{u-1}u & 0 \\ 
0 & \frac{q^{2}-1}{u-1} & 1 & 0 \\ 
0 & 0 & 0 & \frac{uq^{2}-1}{u-1}%
\end{array}%
\right) \;.  \label{R}
\end{equation}%
By the standard argument one obtains as an immediate consequence the
following proposition.

\begin{proposition}
\label{prop:qint}The subalgebra $\mathcal{A}_{n}\subset \mathcal{H}_{n}$
generated by the $T_{r}$ is commutative, $[T_{r},T_{r^{\prime }}]=0$.
\end{proposition}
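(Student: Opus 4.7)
The plan is to follow the standard quantum inverse scattering argument, reducing commutativity of the $T_{r}$'s to the $RLL$ relation (\ref{RLL}) for the local Lax matrix. Write $T(u)=\mathop{\rm Tr}(Z L(u))$ with twist $Z=\mathop{\rm diag}(1,z)$ acting on the auxiliary space, so that $T(u)=A(u)+zD(u)=\sum_{r=0}^{n}u^{r}T_{r}$. It then suffices to prove the single identity $T(u)T(v)=T(v)T(u)$ in $\mathcal{H}_{n}(q)[u,v]$; expanding in powers of $u$ and $v$ will give $[T_{r},T_{r'}]=0$ for all $r,r'$.

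First I would promote (\ref{RLL}) from the local Lax matrix to the full monodromy matrix $L(u)=L_{n}(u)\cdots L_{1}(u)$, i.e.\ show
\[
R_{12}(u/v)\,L_{13}(u)L_{23}(v)=L_{23}(v)L_{13}(u)\,R_{12}(u/v),
\]
where now $L_{a3}$ denotes the full monodromy matrix acting as $L$ on auxiliary space $a\in\{1,2\}$ and on the tensor product of all $n$ quantum spaces. This is a routine induction on $n$: the $R$-matrix only touches the two auxiliary tensor factors, so at each step one applies (\ref{RLL}) for the single site $L_{j}$ while the remaining Lax factors at different sites commute through because they act on distinct quantum spaces. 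The base case is (\ref{RLL}) itself.

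Next I would incorporate the twist. Because the $R$-matrix (\ref{R}) preserves the total number of ``2'' indices in the auxiliary tensor product (its only nonzero entries on the off-diagonal block map $|12\rangle\leftrightarrow|21\rangle$), the diagonal operator $Z\otimes Z$ commutes with $R(u/v)$. Multiplying the monodromy $RLL$ relation by $Z_{1}Z_{2}$ on both sides, pushing the twist through $R$, and then taking the trace over auxiliary spaces $1$ and $2$, the cyclicity of the trace together with the invertibility of $R(u/v)$ for generic $u/v$ yields
\[
\mathop{\rm Tr}\nolimits_{12}\!\bigl(Z_{1}Z_{2}L_{13}(u)L_{23}(v)\bigr)=\mathop{\rm Tr}\nolimits_{12}\!\bigl(Z_{1}Z_{2}L_{23}(v)L_{13}(u)\bigr),
\]
which factorises as $T(u)T(v)=T(v)T(u)$.

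The argument has no real obstacle; the only points that require care are (i) verifying $[Z\otimes Z,R(u/v)]=0$ from the block form of (\ref{R}), and (ii) ensuring the cancellation of $R(u/v)$ after the trace is legitimate, i.e.\ that the identity first holds as a rational identity for generic $u/v$ and then extends to a polynomial identity in $u$ and $v$ by clearing denominators. Once $T(u)T(v)=T(v)T(u)$ is established, extracting coefficients of $u^{r}v^{r'}$ gives $[T_{r},T_{r'}]=0$ for all $0\le r,r'\le n$, and since these elements generate $\mathcal{A}_{n}$ the subalgebra is commutative.
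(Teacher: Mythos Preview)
Your proposal is correct and follows essentially the same standard quantum inverse scattering argument as the paper: promote the local $RLL$ relation to the full monodromy matrix, use invertibility of $R(u/v)$ and cyclicity of the trace to obtain $[T(u),T(v)]=0$, and handle the twist via $[Z\otimes Z,R(u/v)]=0$. The paper's proof is merely a terse summary of exactly these steps (doing $z=1$ first and then noting the twist commutes with $R$), so there is no substantive difference in approach.
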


\begin{proof}
One verifies that (\ref{R}) is invertible. Using the cyclicity of the trace
in (\ref{T}) one obtains $[T(u),T(v)]=0$. This proves the assertion for $z=1$%
. The case of general $z$ follows by the same argument using that $R(u/v)$
commutes with $\left( 
\begin{smallmatrix}
1 & 0 \\ 
0 & z%
\end{smallmatrix}%
\right) \otimes \left( 
\begin{smallmatrix}
1 & 0 \\ 
0 & z%
\end{smallmatrix}%
\right) $.
\end{proof}

We now consider a particular representation of the $q$-boson algebra which
allows one to apply the quantum inverse scattering method or algebraic Bethe
ansatz to construct a common eigenbasis.

\subsection{The Fock representation of q-bosons}

For the sake of completeness we recall the following representation of the $%
q $-boson algebra in terms of $q$-difference operators, which can be found
in \cite{KlimykSchmuedgen97}. Consider the ring of polynomials $\mathfrak{R}%
^{n}=\mathbb{C}(q)[\xi _{1},\ldots ,\xi _{n}]$ with rational coefficients in
the indeterminate $q$, where the $\xi _{i}$'s are some auxiliary variables.
Let $\tau _{i}$ be the shift operator with respect to $\xi _{i}$,%
\begin{equation}
(\tau _{i}^{\pm 1}f)(\xi )=f(\xi _{1},...,\xi _{i}q^{\pm 1},...,\xi
_{n}),\qquad f\in \mathfrak{R}^{n}  \label{qshift}
\end{equation}%
and define $\mathcal{D}_{i}$ to be the following $q$-derivative with respect
to $\xi _{i}$,%
\begin{equation}
(\mathcal{D}_{i}f)(\xi )=\frac{f(\xi )-f(\xi _{1},...,q^{2}\xi _{i},...,\xi
_{n})}{\xi _{i}-q^{2}\xi _{i}}~.  \label{qderivative}
\end{equation}%
In the limit $q\rightarrow 1$ one recovers the ordinary partial derivative $%
\partial _{i}$. We introduce a hermitian bilinear form $\mathfrak{R}%
^{n}\times \mathfrak{R}^{n}\rightarrow \mathbb{C}(q)$ by setting 
\begin{equation}
\langle f,g\rangle =\bar{f}(\mathcal{D}_{1},...,\mathcal{D}_{n})g(\xi
_{1},...,\xi _{n})|_{\xi _{1}=\cdots =\xi _{n}=0}  \label{inner_product}
\end{equation}%
and define $\mathcal{F}^{n}=\overline{\mathfrak{R}^{n}}$ to be the
completion with respect to this inner product. We will simply write $%
\mathcal{F}$ for $\mathcal{F}^{1}$ and we have that $\mathcal{F}^{n}=%
\mathcal{F}^{\otimes n}$. The following result is contained in \cite%
{KlimykSchmuedgen97}.

\begin{proposition}[Fock representation]
The map $\mathcal{H}_{n}\rightarrow \limfunc{End}\mathcal{F}^{n}$ defined via%
\begin{equation}
\beta _{i}\mapsto (1-q^{2})\mathcal{D}_{i},\qquad \beta _{i}^{\ast }\mapsto 
\hat{\xi}_{i},\qquad q^{\pm N_{i}}\mapsto \tau _{i}^{\pm 1}~,  \label{Fock}
\end{equation}%
where $\hat{\xi}_{i}$ is the multiplication operator with $\xi _{i}$,
defines a simple module for $\mathcal{H}_{n}$ for all $n\geq 1$.
\end{proposition}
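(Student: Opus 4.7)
My plan has two components: first, verify that the prescribed assignment respects each of the defining relations \eqref{qbosondef} of $\mathcal{H}_n(q)$, and second, show that the resulting $\mathcal{H}_n$-module structure on $\mathcal{F}^n$ is simple. Both parts are routine once one sets up the $q$-difference calculus carefully; the only point requiring some care is the interaction of simplicity with the inner product completion.

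For the first part, the essential computation is on a single site and amounts to evaluating both sides of each relation on a monomial $\xi_i^{\lambda_i}$. Using \eqref{qshift} and \eqref{qderivative} one finds $\mathcal{D}_i \xi_i^{\lambda_i} = [\lambda_i]_{q^2}\, \xi_i^{\lambda_i-1}$, where $[k]_{q^2} = (1-q^{2k})/(1-q^2)$, and a direct computation then yields $\mathcal{D}_i \hat\xi_i - q^2 \hat\xi_i \mathcal{D}_i = 1$, so that $\beta_i\beta_i^* - q^2\beta_i^*\beta_i = 1-q^2$ holds. Similarly $\beta_i^*\beta_i \xi_i^{\lambda_i} = (1-q^{2\lambda_i})\,\xi_i^{\lambda_i}$, which combined with $q^{2N_i}\xi_i^{\lambda_i} = q^{2\lambda_i}\xi_i^{\lambda_i}$ gives $1-\beta_i^*\beta_i = q^{2N_i}$ and hence $[\beta_i,\beta_i^*] = (1-q^2)q^{2N_i}$. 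The grading relations $\tau_i \mathcal{D}_j = q^{-\delta_{ij}}\mathcal{D}_j \tau_i$ and $\tau_i \hat\xi_j = q^{\delta_{ij}} \hat\xi_j \tau_i$ follow immediately from $\tau_i(\xi_j f) = q^{\delta_{ij}} \xi_j \tau_i f$. The cross-site relations for $i\neq j$ are trivial because $\mathcal{D}_i$, $\hat\xi_i$, $\tau_i$ act on disjoint variables.

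For simplicity, the key is the joint eigenspace decomposition under $q^{N_1},\ldots,q^{N_n}$: on a monomial $\xi^\lambda = \xi_1^{\lambda_1}\cdots \xi_n^{\lambda_n}$ the operator $\tau_i$ has eigenvalue $q^{\lambda_i}$, and since $q$ is an indeterminate these joint eigenvalues separate distinct multi-indices $\lambda \in \mathbb{Z}_{\geq 0}^n$. Hence any $\mathcal{H}_n$-submodule $M \subseteq \mathfrak{R}^n$ is a direct sum of one-dimensional weight spaces $\mathbb{C}(q)\cdot\xi^\lambda$. If $\xi^\lambda \in M$, applying $\mathcal{D}_1^{\lambda_1}\cdots \mathcal{D}_n^{\lambda_n}$ produces the nonzero constant $\prod_i [\lambda_i]_{q^2}!$, so $1 \in M$; then repeated application of the $\hat\xi_i$'s yields every monomial and $M = \mathfrak{R}^n$.

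The main technical point is lifting this to the completion $\mathcal{F}^n = \overline{\mathfrak{R}^n}$. I would verify first that the monomials $\xi^\lambda$ are orthogonal with $\langle \xi^\lambda, \xi^\mu\rangle = \delta_{\lambda\mu}\prod_i [\lambda_i]_{q^2}!$ (obtained by iterating the $q$-Leibniz rule for $\mathcal{D}_i$ in \eqref{inner_product}), so that the inner product is positive definite and the polynomial ring sits densely in $\mathcal{F}^n$ with the $\xi^\lambda$'s as an orthogonal basis. The generators $\mathcal{D}_i, \hat\xi_i, \tau_i^{\pm 1}$ preserve this polynomial dense subspace, so the weight-space argument above applies verbatim; any nonzero closed invariant subspace must contain a monomial and hence all of them. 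The tensor product identification $\mathcal{F}^n \cong \mathcal{F}^{\otimes n}$ makes the $n$-site case a formal consequence of $n = 1$, so there is no substantive new obstacle beyond the one-variable calculations.
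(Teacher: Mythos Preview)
Your argument is correct and is the standard textbook approach (verify the $q$-oscillator relations on the monomial basis, then use the joint weight decomposition under the commuting family $\tau_1,\dots,\tau_n$ to show every nonzero submodule contains the vacuum and hence everything). The paper, however, does not give its own proof of this proposition: it simply cites the result from Klimyk--Schm\"udgen \cite{KlimykSchmuedgen97} and then records, as an alternative description, the isomorphism $\mathcal{H}_n/\mathcal{I}\,\widetilde{\to}\,\mathcal{F}^n$ with $\mathcal{I}$ the left ideal generated by the $\beta_i$ and $1-q^{\pm N_i}$. So there is nothing to compare against beyond noting that your weight-space argument is precisely the kind of verification one finds in that reference.

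One small remark: your treatment of the completion is the only place where you are doing more than the paper requires. In practice the paper immediately identifies $\mathcal{F}^n$ with the algebraic module $\mathcal{H}_n/\mathcal{I}$ (finite linear combinations of the $|\lambda\rangle$), so simplicity is really being asserted for the polynomial module $\mathfrak{R}^n$, where your argument goes through without any analytic considerations. Your observation that $\langle \xi^\lambda,\xi^\mu\rangle=\delta_{\lambda\mu}\prod_i[\lambda_i]_{q^2}!$ and that the generators preserve the dense polynomial subspace is correct and would be needed if one insisted on working in the completion, but for the purposes of the paper the algebraic statement suffices.
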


There is an alternative formulation of the same module used in \cite[Prop 3.3%
]{CMP13}: let $\mathcal{I}\subset \mathcal{H}_{n}$ be the \emph{left} ideal
generated by the elements $\beta _{i}$ and $(1-q^{\pm N_{i}})$. Define a
highest weight vector $|0\rangle =1+\mathcal{I}$, the \textquotedblleft
pseudo-vacuum\textquotedblright , in the quotient $\mathcal{H}_{n}/\mathcal{I%
}$ with $1$ the identity element. For any partition $\lambda $ with at most $%
n$ parts set%
\begin{equation}
|\lambda \rangle =\prod_{j=1}^{n}\frac{(\beta _{j}^{\ast })^{m_{j}(\lambda )}%
}{(q^{2})_{m_{j}(\lambda )}}~|0\rangle ,\qquad
(q^{2})_{m}:=\prod_{j=1}^{m}(1-q^{2j}),  \label{can_basis}
\end{equation}%
where $m_{j}(\lambda )$ is the multiplicity of columns of height $j$ in the
Young diagram of $\lambda $. It is easy to verify that the map $|\lambda
\rangle \mapsto \prod_{j=1}^{n}\xi _{j}^{m_{j}(\lambda
)}/(q^{2})_{m_{j}(\lambda )}$ provides a module isomorphism $\mathcal{H}_{n}/%
\mathcal{I~}\widetilde{\mathcal{\rightarrow }}~\mathcal{F}^{n}$ and we shall
henceforth identify both modules. We denote by $\langle \lambda |$ the dual
basis of (\ref{can_basis}) and the map $\mathcal{\tilde{F}}^{n}\rightarrow 
\mathcal{F}^{n}$ given by $\langle \lambda |\mapsto b_{\lambda
}(q^{2})|\lambda \rangle $ with $\mathcal{\tilde{F}}^{n}$ denoting the dual
space and $b_{\lambda }=\prod_{j\geq 1}(q^{2})_{m_{j}(\lambda )}$, then
introduces an inner product on $\mathcal{F}^{n}$. Note that if $q$ is
evaluated at a root of unity the module ceases to be simple.

\begin{remark}
\textrm{The eigenspaces of the particle number operator $N=\sum_{j=1}^{n}%
\hat{\xi}_{j}\partial _{\xi _{j}}$ are the subspaces $\mathcal{F}%
_{k}^{n}\subset \mathcal{F}^{n}$ spanned by the vectors $\{|\lambda \rangle
\}_{\lambda _{1}=k}$ for $k\in \mathbb{Z}_{\geq 0}$. In \cite{CMP13} it has
been shown that $\mathcal{F}_{k}^{n}$ carries an $U_{q}(\widehat{sl}_{n})$%
-action and can be identified with the Kirillov-Reshetikhin module of
highest weight $k\omega _{1}$ with $\omega _{1}$ being the first fundamental
weight. The basis (\ref{can_basis}) is then Lusztig's canonical or
Kashiwara's global crystal basis for this module. }
\end{remark}

\subsection{Completeness of the Bethe ansatz}

The existence of the highest weight representation (\ref{Fock}) allows for
the application of the quantum inverse scattering method \cite{FST79} or
algebraic Bethe ansatz and one obtains the following important result.

Denote by $\Bbbk =\mathbb{C}\{\!\{q\}\!\}$ the algebraically closed field of
Puiseux series in the indeterminate $q$.

\begin{theorem}
In the Fock representation the difference operators given by (\ref{T_r})
possess a common eigenbasis $\{|y_{\lambda }\rangle \}_{\lambda }$ where 
\begin{equation}
|y_{\lambda }\rangle =B(y_{1}^{-1})\cdots B(y_{\lambda _{1}}^{-1})|0\rangle
\label{Bethe_vector}
\end{equation}%
and $\lambda $ ranges over the partitions with at most $n$ parts. The
partition $\lambda $ labels the solutions $y_{\lambda }=(y_{1},\ldots
,y_{\lambda _{1}})\in \Bbbk ^{\lambda _{1}}$ of the following coupled set of
equations%
\begin{equation}
y_{i}^{n}=z\prod_{j\neq i}\frac{y_{i}q^{2}-y_{j}}{y_{i}-y_{j}q^{2}},\qquad
i=1,2,\ldots ,\lambda _{1}\;.  \label{BAE}
\end{equation}%
The eigenvalue equation reads,%
\begin{equation}
T(u)|y_{\lambda }\rangle =\left( \tprod_{j=1}^{\lambda _{1}}\frac{%
1-uq^{2}y_{j}}{1-u~y_{j}}+zu^{n}\tprod_{j=1}^{\lambda _{1}}\frac{q^{2}-uy_{j}%
}{1-u~y_{j}}\right) |y_{\lambda }\rangle \;.  \label{specT}
\end{equation}
\end{theorem}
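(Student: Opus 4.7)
The plan is to carry out a standard algebraic Bethe ansatz for the twisted monodromy matrix $L(u)$ using the $RLL$ relation (\ref{RLL}), and then to establish completeness by exhibiting a Puiseux-series parametrisation of the Bethe solutions indexed by the relevant partitions.

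First I would unpack (\ref{RLL}) into scalar identities among the entries $A,B,C,D$ of $L(u)$. The key ingredients are the commutativity $[B(u),B(v)]=0$, which makes the Bethe vector (\ref{Bethe_vector}) symmetric in its arguments and hence well-defined, together with the exchange relations of standard form
\begin{equation*}
A(u)B(v) = f(u,v)\,B(v) A(u) + \tilde{f}(u,v)\,B(u) A(v), \qquad D(u) B(v) = g(u,v)\,B(v) D(u) + \tilde{g}(u,v)\,B(u) D(v),
\end{equation*}
where the scalar coefficients $f,g$ (wanted) and $\tilde{f},\tilde{g}$ (unwanted) are read off from the $R$-matrix (\ref{R}). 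On the pseudo-vacuum each local Lax matrix $L_j(u)$ is upper triangular modulo the left ideal $\mathcal{I}$, giving the reference eigenvalues $A(u)|0\rangle = |0\rangle$, $D(u)|0\rangle = u^n|0\rangle$, and $C(u)|0\rangle = 0$.

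Second, I would apply $T(u) = A(u) + zD(u)$ to $|y_\lambda\rangle$ and iterate the exchange relations to commute $A(u)$ and $D(u)$ past the product of $B$'s. The purely diagonal wanted contribution reproduces exactly the eigenvalue stated in (\ref{specT}), while the off-diagonal pieces organise, using the symmetry of $|y_\lambda\rangle$ in the $y_i$'s, into a sum of the schematic form $\sum_{i=1}^{\lambda_1} M_i(u;y)\, B(u) \prod_{j\neq i} B(y_j^{-1})|0\rangle$. Vanishing of the residue of $M_i(u;y)$ at $u = y_i^{-1}$ for each $i$ is precisely the Bethe system (\ref{BAE}), so on those solutions the eigenvalue equation holds.

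Third comes completeness, which is where I expect the real work to lie. Since the particle number operator acts as multiplication by $\lambda_1$ on $|\lambda\rangle$, each sector $\mathcal{F}_N^n$ has dimension equal to the number of partitions with $\lambda_1 = N$ and $\ell(\lambda)\leq n$. Fixing $N$, I would analyse (\ref{BAE}) over $\Bbbk = \mathbb{C}\{\!\{q\}\!\}$ via the degeneration $q\to 0$: the equations decouple to $y_i^n = z$, and the subleading Puiseux exponents of the $y_i$'s encode the remaining data $\lambda_2\geq\cdots\geq\lambda_n$, producing a bijection between the solution set modulo permutation and the partitions in question. A leading-order analysis then shows the corresponding $|y_\lambda\rangle$ are nonzero and linearly independent, so they form a basis of $\mathcal{F}_N^n$; combined with Prop \ref{prop:qint}, this yields a common eigenbasis of every $T_r$.
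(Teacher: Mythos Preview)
The paper does not actually give a proof of this theorem: it simply states the result and, in the remark immediately following, attributes the algebraic Bethe ansatz computation to \cite{Kulish81,BBP93,BIK98} and the completeness statement (over the field of Puiseux series) to \cite[Section~7]{CMP13}, with an alternative completeness proof for real $-1<q<1$ in \cite{vanDiejen06}. Your outline is precisely the standard algebraic Bethe ansatz argument carried out in those references, so in spirit you are reproducing what the paper invokes rather than doing something different.

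One concrete correction to your completeness sketch: at $q=0$ the Bethe equations (\ref{BAE}) do \emph{not} decouple to $y_i^{n}=z$. Setting $q=0$ gives
\[
y_i^{\,n}=z\prod_{j\neq i}\frac{-y_j}{y_i}\,,
\qquad\text{i.e.}\qquad
y_i^{\,n+\lambda_1}=(-1)^{\lambda_1-1}z\prod_{j=1}^{\lambda_1}y_j\,,
\]
so all the $y_i$ are $(n+\lambda_1)$-th roots of a common quantity and must be pairwise distinct for the Bethe vector to be nonzero. It is this ``free-fermion'' type decoupling (the $q=0$ phase model underlying the $\widehat{\mathfrak{sl}}_n$ WZNW fusion ring in \cite{KS10,CMP13}) that yields the bijection with partitions; the actual argument in \cite[Section~7]{CMP13} goes further and identifies the coordinate ring of Bethe solutions with a quotient of the spherical Hecke algebra, which is what ultimately pins down the labelling by $\lambda$ and the linear independence. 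Your degeneration idea is the right starting point, but the details are more involved than the one-line description suggests.
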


\begin{remark}
\textrm{The quantum inverse scattering method for the $q$-boson model was
discussed in \cite{Kulish81}, \cite{BBP93} and \cite{BIK98} where the Bethe
ansatz equations (\ref{BAE}) can be found. A discussion of the coordinate
Bethe ansatz for the discrete QNLS model arriving at the same set of
equations appeared also in \cite{vanDiejen06}, where a proof of the
completeness of the Bethe ansatz solutions can be found for }$q=\varepsilon $%
\textrm{\ with $-1<\varepsilon <1$. Completeness of the Bethe ansatz for $q$
an indeterminate together with a coordinate ring description of their
solutions as quotient of the spherical Hecke algebra was obtained in \cite[%
Section 7]{CMP13}. In order to find solutions of (\ref{BAE}) one then needs
to work over the algebraically closed field of Puiseux series in $q$. }
\end{remark}

\section{Two $Q$-operators for the $q$-boson model}

In order to construct the quantum analogue of the B\"{a}cklund transform (%
\ref{clB}) and its inverse one considers solutions $D^{\pm }$, $L^{\pm }$ to
the Yang-Baxter equation%
\begin{equation}
D_{12}^{\pm }(u,v)L_{13}(u)L_{23}^{\pm }(v)=L_{23}^{\pm
}(v)L_{13}(u)D_{12}^{\pm }(u,v),  \label{DLL=LLD}
\end{equation}%
where $L(u)$ is the local quantum Lax matrix in (\ref{mom}) and $D^{\pm
}(u,v),L^{\pm }(v)$ are respectively $2\times \frac{\infty }{2}$ and $\frac{%
\infty }{2}\times \frac{\infty }{2}$ matrices with entries in the $q$-boson
algebra. As explained in \cite{Sklyanin00} for the Toda chain, the
Yang-Baxter equation (\ref{DLL=LLD}) should be seen as a quantum analogue of
the classical relation (\ref{DL=LD}) for the Darboux matrix: due to the
noncommutative matrix elements in the quantum case an extra matrix $L^{\pm }$
is required. Thus, in the quantum system one faces the problem of finding 
\emph{two} matrices, $D^{\pm }$ and $L^{\pm }$, to construct the quantum
analogue of the B\"{a}cklund transform (\ref{clB}).

Define the following half-infinite matrices with entries in the $q$-boson
algebra $\mathcal{H}_{n}$ (c.f. \cite[Eqn (3.20)]{CMP13}), 
\begin{equation}
L_{i}^{+}(v)=\left( (-v)^{m}\frac{(\beta _{i}^{\ast })^{m}\beta
_{i}^{m^{\prime }}}{(q^{2})_{m}}\right) _{m,m^{\prime }\geq 0}  \label{L+}
\end{equation}%
$\quad $and\quad 
\begin{equation}
L_{i}^{-}(v)=\left( v^{m}q^{m(m+1)}\frac{\beta _{i}^{m^{\prime }}(\beta
_{i}^{\ast })^{m}}{(q^{2})_{m}}\right) _{m,m^{\prime }\geq 0}\;.  \label{L-}
\end{equation}%
The operators $L^{\pm }$ which are not present in the classical equation
enter in the definition of the $Q^{\pm }$-operators 
\begin{equation}
Q^{\pm }(v)=\limfunc{Tr}z^{N}L_{n}^{\pm }(v)\cdots L_{2}^{\pm }(v)L_{1}^{\pm
}(v)=\sum_{r\geq 0}Q_{r}^{\pm }v^{r},  \label{Qpm}
\end{equation}%
where $z^{N}=(z^{m~\delta _{mm^{\prime }}})$ and the trace is formally
defined with $Q^{\pm }(v)\in \mathbb{C}[\![v]\!]\otimes \mathcal{H}_{n}$.
That is, $Q^{\pm }(v)$ should be thought of as current operators: as formal
power series in the variable $v$ with the coefficients $Q_{r}^{\pm }$ being
elements in the $q$-boson algebra $\mathcal{H}_{n}$.

\begin{lemma}
The coefficients in the expansions (\ref{Qpm}) are%
\begin{eqnarray}
Q_{r}^{+} &=&(-1)^{r}\sum_{\alpha \vdash r}z^{\alpha _{n}}\frac{(\beta
_{1}^{\ast })^{\alpha _{n}}(\beta _{1}\beta _{2}^{\ast })^{\alpha
_{1}}\cdots (\beta _{n-1}\beta _{n}^{\ast })^{\alpha _{n-1}}\beta
_{n}^{\alpha _{n}}}{(q^{2})_{\alpha _{1}}\cdots (q^{2})_{\alpha
_{n-1}}(q^{2})_{\alpha _{n}}}  \label{Q+r} \\
Q_{r}^{-} &=&\sum_{\alpha \vdash r}z^{\alpha _{n}}\frac{\beta _{n}^{\alpha
_{n}}(\beta _{n-1}\beta _{n}^{\ast })^{\alpha _{n-1}}\cdots (\beta _{1}\beta
_{2}^{\ast })^{\alpha _{1}}(\beta _{1}^{\ast })^{\alpha _{n}}}{%
(q^{2})_{\alpha _{1}}\cdots (q^{2})_{\alpha _{n-1}}(q^{2})_{\alpha _{n}}}%
\prod_{i=1}^{n}q^{\alpha _{i}(\alpha _{i}+1)}\;,  \label{Q-r}
\end{eqnarray}%
where the sums range over all compositions $\alpha =(\alpha _{1},\ldots
,\alpha _{n})\in \mathbb{Z}_{\geq 0}^{n}$ of $r\geq 0$.
\end{lemma}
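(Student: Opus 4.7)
The plan is a direct computation from the definitions (\ref{L+}), (\ref{L-}), and (\ref{Qpm}). First, I would write the $(k_n,k_0)$-entry of the matrix product $L_n^{\pm}(v)\cdots L_1^{\pm}(v)$ as
$$\sum_{k_1,\ldots,k_{n-1}\geq 0}\;\prod_{i=n}^{1}(L_i^{\pm})_{k_i,k_{i-1}},$$
with factors ordered from $i=n$ on the left to $i=1$ on the right. Substituting the explicit matrix entries from (\ref{L+}) or (\ref{L-}), multiplying by the diagonal matrix $z^N$, and tracing identifies $k_n=k_0$ and inserts the scalar weight $z^{k_0}$. Relabelling $(k_1,\ldots,k_{n-1},k_0)$ as a composition $(\alpha_1,\ldots,\alpha_{n-1},\alpha_n)$, the $v$-degree of each summand equals $|\alpha|=\alpha_1+\cdots+\alpha_n$, so that the coefficient of $v^r$ in $Q^{\pm}(v)$ is a sum indexed by compositions $\alpha\vdash r$ as required.

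The substantive step is to bring the resulting string of $q$-boson operators into the ordered normal form displayed in (\ref{Q+r})/(\ref{Q-r}). The crucial input is the locality of the defining relations (\ref{qbosondef}): every generator at site $i$ commutes with every generator at any other site $j\neq i$. Using only this site-commutativity, I can regroup the contributions of consecutive Lax matrices so that adjacent factors $\beta_i^{\alpha_i}\cdot(\beta_{i+1}^{\ast})^{\alpha_i}$ coming from neighbouring sites combine into the blocks $(\beta_i\beta_{i+1}^{\ast})^{\alpha_i}$ appearing in the lemma. The cyclic boundary enforced by the trace accounts for the wraparound contributions $(\beta_1^{\ast})^{\alpha_n}$ and $\beta_n^{\alpha_n}$ in (\ref{Q+r}) (respectively $\beta_n^{\alpha_n}$ and $(\beta_1^{\ast})^{\alpha_n}$ in (\ref{Q-r}), reflecting the opposite internal ordering of $\beta_i^{\ast}$ and $\beta_i$ inside $L_i^-$ versus $L_i^+$), together with the prefactor $z^{\alpha_n}$. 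All scalar prefactors then fall out automatically: the sign $(-1)^r$ in (\ref{Q+r}) is gathered from the $r$ copies of $-v$ in the entries of $L_i^+$, the extra $\prod_i q^{\alpha_i(\alpha_i+1)}$ in (\ref{Q-r}) comes from the matching prefactor in the entries of $L_i^-$, and the denominators $\prod_i(q^2)_{\alpha_i}$ come from the $1/(q^2)_m$ factors.

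The calculation is essentially mechanical bookkeeping; the main point I would double-check carefully is the cyclic boundary identification produced by the trace, which is responsible both for the wraparound index being the same $\alpha_n$ that appears in the weight $z^{\alpha_n}$ and in the two boundary factors $(\beta_1^{\ast})^{\alpha_n}$ and $\beta_n^{\alpha_n}$. Noteworthy is that no non-trivial $q$-boson relation at a single site is invoked: the different orderings of $\beta_i^{\ast}$ and $\beta_i$ in (\ref{Q+r}) and (\ref{Q-r}) simply inherit the orderings in the entries of (\ref{L+}) and (\ref{L-}), so the two cases proceed in parallel with only the scalar prefactors distinguishing them.
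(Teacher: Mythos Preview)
Your approach is correct and is essentially the same computation as the paper's. Both proofs expand the matrix product, invoke only the locality (site-commutativity) of the $q$-boson relations to regroup factors, and read off the scalar weights; the paper merely packages the expansion as an induction on $n$ that produces the full monodromy matrix entries $T^{\pm}_{m'm}$ first and traces afterwards, whereas you compute the trace in one pass. No additional ideas are needed beyond what you outline.

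One bookkeeping point worth flagging explicitly: the paper's index convention in (\ref{L+}), (\ref{L-}) has the \emph{first} label $m$ playing the role of the column (input) index and $m'$ the row (output) index, as one can read off from the action on $v_m$ in the proof of (\ref{TQ-}). With that reading, the entry you denote $(L_i^{\pm})_{k_i,k_{i-1}}$ in the chained product carries $(-v)^{k_{i-1}}(\beta_i^{\ast})^{k_{i-1}}\beta_i^{k_i}/(q^2)_{k_{i-1}}$, and your relabelling $(k_1,\ldots,k_{n-1},k_0)\mapsto(\alpha_1,\ldots,\alpha_n)$ then lands exactly on the lemma's exponents with the wraparound $z^{\alpha_n}$ falling on the correct index. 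If one instead substituted with the opposite (row,\,column) convention the site-$i$ block would come out as $(\beta_i^{\ast})^{\alpha_i}\beta_i^{\alpha_{i-1}}$ rather than $(\beta_i^{\ast})^{\alpha_{i-1}}\beta_i^{\alpha_i}$, and the $z$-weight would sit on the wrong component of $\alpha$; site-commutativity alone cannot repair that. This is precisely the cyclic-boundary check you said you would perform carefully, so your plan already anticipates the only place where the computation can go wrong.
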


\begin{proof}
Via induction in $n$ one verifies that the monodromy matrices $T^{\pm
}(v)=L_{n}^{\pm }(v)\cdots L_{2}^{\pm }(v)L_{1}^{\pm }(v)$ are given by
(c.f. \cite[Eqn (3.48)]{CMP13})%
\begin{equation*}
T_{m^{\prime }m}^{+}(v)=z^{m}\sum_{\alpha }\frac{(-1)^{m+|\alpha
|}v^{m+|\alpha |}}{(q^{2})_{m}}\frac{(\beta _{1}^{\ast })^{m}(\beta
_{1}\beta _{2}^{\ast })^{\alpha _{1}}\cdots (\beta _{n-1}\beta _{n}^{\ast
})^{\alpha _{n-1}}\beta _{n}^{m^{\prime }}}{(q^{2})_{\alpha _{1}}\cdots
(q^{2})_{\alpha _{n-1}}}
\end{equation*}%
and%
\begin{multline*}
T_{m^{\prime }m}^{-}(v)= \\
z^{m}\sum_{\alpha }\frac{v^{m+|\alpha |}q^{m(m+1)}}{(q^{2})_{m}}\frac{\beta
_{n}^{\alpha _{n}}(\beta _{n-1}\beta _{n}^{\ast })^{\alpha _{n-1}}\cdots
(\beta _{1}\beta _{2}^{\ast })^{\alpha _{1}}(\beta _{1}^{\ast })^{\alpha
_{n}}}{(q^{2})_{\alpha _{1}}\cdots (q^{2})_{\alpha _{n-1}}(q^{2})_{\alpha
_{n}}}\prod_{i=1}^{n-1}q^{\alpha _{i}(\alpha _{i}+1)}
\end{multline*}%
where the sums range over all compositions $\alpha =(\alpha _{1},\ldots
,\alpha _{n-1})\in \mathbb{Z}_{\geq 0}^{n-1}$. Summing over the diagonal
matrix elements with $m=m^{\prime }$ and fixing the degree of $v$ the
assertion now follows.
\end{proof}


\subsection{Functional relations in the $q$-boson algebra}

The characteristic property which prompts us to identify (\ref{Qpm}) as the
analogue of Baxter's $Q$-operator for the $q$-boson model is the following
set of functional relations. The first was already proved in \cite[Prop 3.12]%
{CMP13}, we repeat it here for completeness, the second identity (\ref{TQ-})
is new.

\begin{theorem}
We have the following identities in the $q$-boson algebra%
\begin{eqnarray}
T(u)Q^{+}(u) &=&Q^{+}(uq^{2})+\Delta (u)Q^{+}(uq^{-2}),  \label{TQ+} \\
Q^{-}(u)T(u) &=&Q^{-}(uq^{-2})+\Delta (uq^{2})Q^{-}(uq^{2})~,  \label{TQ-}
\end{eqnarray}%
where the coefficient is given by $\Delta (u)=zq^{2N}u^{n}$.
\end{theorem}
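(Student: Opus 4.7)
The plan is to derive both relations by a standard auxiliary-space intertwining (``fusion'') argument: rewrite the left-hand side as a single trace over an enlarged auxiliary space $\mathbb{C}^2 \otimes \mathcal{F}$, find a local basis change that reduces the product of adjacent Lax matrices to block-triangular form whose diagonal entries are (up to scalar factors) $L^{\pm}$ with shifted spectral parameters, and then take the double trace. Under the block-triangular structure the off-diagonal term drops out and the two diagonal blocks assemble into the required $Q$-operators with shifted arguments.

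For (\ref{TQ+}), which was established in \cite{CMP13}, one exhibits an invertible operator $M(u)$ on $\mathbb{C}^{2}\otimes \mathcal{F}$ (decomposed into $2\times 2$ blocks by the first factor) such that
\begin{equation*}
\bigl(L_{j}(u)\boxtimes L_{j}^{+}(u)\bigr)\,M(u)
\;=\;
M(u)\begin{pmatrix} L_{j}^{+}(uq^{2}) & \ast \\ 0 & u\,q^{2N_{j}}\,L_{j}^{+}(uq^{-2}) \end{pmatrix},
\end{equation*}
the entries being multiplied inside $\mathcal{H}_{n}$. Because $M(u)$ acts only on the two auxiliary factors and leaves the quantum space $\mathcal{F}^{n}$ inert, the local relation propagates through all $n$ sites; inserting the quasi-periodicity weight $z^{N}$ and computing $\limfunc{Tr}_{\mathbb{C}^{2}\otimes \mathcal{F}}$ annihilates the off-diagonal piece. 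The two diagonal products yield $Q^{+}(uq^{2})$ and $\prod_{j}(u\,q^{2N_{j}})\cdot Q^{+}(uq^{-2})=\Delta(u)Q^{+}(uq^{-2})$ respectively, which is (\ref{TQ+}).

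For the new identity (\ref{TQ-}), the order $Q^{-}(u)T(u)$ dictates that one work with the reversed product $L_{j}^{-}(u)\boxtimes L_{j}(u)$ in the enlarged auxiliary space. The task is to construct a counterpart intertwiner $M^{\prime }(u)$ satisfying
\begin{equation*}
\bigl(L_{j}^{-}(u)\boxtimes L_{j}(u)\bigr)\,M^{\prime }(u)
\;=\;
M^{\prime }(u)\begin{pmatrix} L_{j}^{-}(uq^{-2}) & 0 \\ \ast & u\,q^{2(N_{j}+1)}\,L_{j}^{-}(uq^{2}) \end{pmatrix}.
\end{equation*}
Its form should be dictated by the symmetry between (\ref{L+}) and (\ref{L-}) (informally, $\beta _{i}\leftrightarrow \beta _{i}^{\ast }$ together with the weight $q^{m(m+1)}$) and by the second set of relations in (\ref{qbosondef}), which is precisely what converts $\beta _{i}\beta _{i}^{\ast }$ into $q^{2}\beta _{i}^{\ast }\beta _{i}+(1-q^{2})$ when moving one generator past the other. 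Combining the $n$ local diagonal factors $u\,q^{2(N_{j}+1)}$ with the $z^{N}$ weight reproduces $\Delta (uq^{2})=zq^{2(N+n)}u^{n}$, matching the coefficient in (\ref{TQ-}).

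The main obstacle is constructing $M^{\prime }(u)$ explicitly. The entries of the intertwiner live in $\mathcal{H}_{n}\otimes \mathbb{C}[u,u^{-1}]$ and must involve the $q$-Pochhammer coefficients $1/(q^{2})_{m}$ and powers of $\beta _{i},\beta _{i}^{\ast }$ in the precise combination that, after applying the $q$-boson commutation relations (\ref{qbosondef}), converts the $v$-parameter in the Lax factor $L_{j}(u)$ into the shift $u\mapsto uq^{\pm 2}$ on $L_{j}^{-}$. Once the correct ansatz is found, the verification of the local intertwining reduces to a finite identity of $q$-boson monomials which follows by induction on the matrix index $m$ using (\ref{qbosondef}) together with standard $q$-binomial identities; the global step is then automatic, and taking the double trace finishes the proof.
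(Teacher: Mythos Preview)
Your strategy is exactly the one the paper uses: rewrite $Q^{-}(u)T(u)$ as a trace over the enlarged auxiliary space $\mathcal{F}\otimes\mathbb{C}^{2}$, block-triangularise the local product $L^{-}_{13}(u)L_{23}(u)$, and read off the two diagonal blocks $L^{-}(uq^{-2})$ and $u\,q^{2N_{j}+2}L^{-}(uq^{2})$. The scalar factors you predict are correct and do assemble into $\Delta(uq^{2})$.

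However, your proposal stops precisely at the point where the real work begins. You say ``the main obstacle is constructing $M'(u)$ explicitly'' and ``once the correct ansatz is found\ldots'', but you do not supply the ansatz, and without it nothing has been proved. The paper does not search for an abstract intertwiner at all; instead it exhibits the invariant subspace directly. In the Fock realisation of the auxiliary factor $\mathcal{F}=\bigoplus_{m\geq 0}\mathbb{C}v_{m}$, one sets
\[
w_{0}=v_{0}\otimes v_{0},\qquad w_{m}=v_{m}\otimes v_{0}+v_{m-1}\otimes v_{1}\quad(m\geq 1),
\]
and checks by a short computation that $W=\mathrm{span}\{w_{m}\}$ is invariant under $L^{-}_{13}(u)L_{23}(u)$, with the induced action equal (up to the expected scalar) to $L^{-}(uq^{-2})$. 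On the quotient $V\otimes\mathbb{C}^{2}/W$, with basis $\{v_{m}\otimes v_{1}\}$, the action is $u\,q^{2N+2}L^{-}(uq^{2})$. The single algebraic input that makes this work is the identity
\[
(\beta^{\ast})^{m}\beta=(\beta^{\ast})^{m-1}(1-q^{-2m})+q^{-2m}\beta(\beta^{\ast})^{m},
\]
easily proved by induction, together with $\beta^{m'}\beta^{\ast\,m}-\beta^{m'+1}\beta^{\ast\,m+1}=q^{2m+2}\beta^{m'}\beta^{\ast\,m}q^{2N}$ for the quotient computation. This is the missing ``correct ansatz'' your proposal alludes to; once you write down $W$ the rest is a routine two-line check, and your global argument (propagation through the $n$ sites and taking the trace) then goes through exactly as you describe.
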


\begin{proof}
For the proof of the first equation we refer the reader to \cite[Prop 3.12]%
{CMP13}. The proof of the second equation follows along similar lines.
Namely, we consider the action of $L_{13}^{-}(u)L_{23}(u)$ on the tensor
product $V\otimes \mathbb{C}^{2}\otimes \mathcal{H}_{1}$ with $%
V=\tbigoplus_{m\geq 0}\mathbb{C}v_{m}$, where we identify the basis vectors
in $\mathbb{C}^{2}$ as $v_{0},v_{1}$. Let $X$ be any element in $\mathcal{H}%
_{1}$. Then a straightforward computation yields%
\begin{multline*}
L_{13}^{-}(u)L_{23}(u)v_{m}\otimes v_{0}\otimes X= \\
\frac{u^{m}q^{m(m+1)}}{(q^{2})_{m}}\sum_{m^{\prime }\geq 0}v_{m^{\prime
}}\otimes v_{0}\otimes \beta ^{m^{\prime }}(\beta ^{\ast
})^{m}X+v_{m^{\prime }}\otimes v_{1}\otimes \beta ^{m^{\prime }}(\beta
^{\ast })^{m}\beta X
\end{multline*}%
and for $m>0$,%
\begin{multline*}
L_{13}^{-}(u)L_{23}(u)v_{m-1}\otimes v_{1}\otimes X= \\
\frac{u^{m}q^{m(m-1)}}{(q^{2})_{m-1}}\sum_{m^{\prime }\geq 0}v_{m^{\prime
}}\otimes v_{1}\otimes \beta ^{m^{\prime }}(\beta ^{\ast
})^{m-1}f+v_{m^{\prime }}\otimes v_{0}\otimes \beta ^{m^{\prime }}(\beta
^{\ast })^{m}X
\end{multline*}%
Consider now the subspace $W\subset V\otimes \mathbb{C}^{2}$ spanned by the
vectors $\{w_{m}=v_{m}\otimes v_{0}+v_{m-1}\otimes v_{0}~|~m>0\}$ and $%
w_{0}=v_{0}\otimes v_{0}$. Employing the commutation relation%
\begin{equation*}
(\beta ^{\ast })^{m}\beta =(\beta ^{\ast })^{m-1}(1-q^{-2m})+q^{-2m}\beta
(\beta ^{\ast })^{m},
\end{equation*}%
which is easily verified by induction, one then finds from the two
identities above that on $W\otimes \mathcal{H}_{1}$ we have%
\begin{equation*}
L_{13}^{-}(u)L_{23}(u)w_{m}\otimes X=\frac{u^{m}q^{-2m}q^{m(m+1)}}{%
(q^{2})_{m-1}}\sum_{m^{\prime }\geq 0}w_{m^{\prime }}\otimes \beta
^{m^{\prime }}(\beta ^{\ast })^{m}X\;.
\end{equation*}%
Next we consider the action on the quotient space $W^{\prime }=V\otimes 
\mathbb{C}^{2}/W$. A basis is given by the vectors $\{v_{m}\otimes
v_{1}~|~m\geq 1\}$ and one arrives at%
\begin{multline*}
L_{13}^{-}(u)L_{23}(u)v_{m}\otimes v_{1}\otimes X= \\
\frac{u^{m+1}q^{m(m+1)}}{(q^{2})_{m}}\sum_{m^{\prime }\geq 0}v_{m^{\prime
}}\otimes v_{1}\otimes (\beta ^{m^{\prime }}\beta ^{\ast ~m}-\beta
^{m^{\prime }+1}\beta ^{\ast ~m+1})X+\cdots ,
\end{multline*}%
where the omitted terms lie in $W\otimes \mathcal{H}_{1}$. Since 
\begin{equation*}
\beta ^{m^{\prime }}\beta ^{\ast ~m}-\beta ^{m^{\prime }+1}\beta ^{\ast
~m+1}=q^{2m+2}\beta ^{m^{\prime }}\beta ^{\ast ~m}q^{2N}
\end{equation*}%
we deduce that the product of $L$-operators block-decomposes as follows,%
\begin{equation*}
L_{13}^{-}(u)L_{23}(u)=\left( 
\begin{array}{cc}
L^{-}(uq^{-2}) & \ast \\ 
0 & uq^{2N+2}L^{-}(uq^{2})%
\end{array}%
\right)
\end{equation*}%
and the assertion (\ref{TQ-}) is now easily obtained by taking the trace of
the product 
\begin{equation*}
T_{0}^{-}(u)T_{0^{\prime }}(u)=L_{0n}^{-}(u)L_{0^{\prime }n}(u)\cdots
L_{02}^{-}(u)L_{0^{\prime }2}(u)L_{01}^{-}(u)L_{0^{\prime }1}(u)
\end{equation*}%
of the monodromy matrices and applying the above block decomposition to each
factor $L_{0j}^{-}(u)L_{0^{\prime }j}(u)$.
\end{proof}

In addition to the functional equations (\ref{TQ+}) and (\ref{TQ-}) one
needs to prove that 
\begin{equation}
\lbrack T(u),Q^{\pm }(v)]=[Q^{\pm }(u),Q^{\pm }(v)]=[Q^{\pm }(u),Q^{\mp
}(v)]=0  \label{all_commute}
\end{equation}%
for arbitrary values of the variables $u,v$. The first and second relation
for the $Q^{+}$ operator has been proved in \cite[Props 3.7 and 3.8]{CMP13}
by constructing explicit solutions to the Yang-Baxter equation (\ref{DLL=LLD}%
) with \cite[Eqn (3.23)]{CMP13} 
\begin{equation}
D^{+}(u,v)=\left( 
\begin{array}{cc}
1-\frac{u}{v}q^{2N} & -\frac{u}{v}\beta ^{\ast } \\ 
\beta & -u/v%
\end{array}%
\right)  \label{D+}
\end{equation}%
as well as finding an additional solution of the equation $%
R_{12}^{+}(u,v)L_{13}^{+}(u)L_{23}^{+}(v)=L_{23}^{+}(v)L_{13}^{+}(u)R_{12}^{+}(u,v) 
$. The above expression for $D^{+}$ yields the desired $2\times \frac{\infty 
}{2}$ matrix in (\ref{DLL=LLD}) if the $q$-boson algebra elements in $D^{+}$
are evaluated in the Fock space representation (\ref{Fock}). Note the close
resemblance of the quantum Darboux matrix (\ref{D+}) with the classical one (%
\ref{D}). In principle one can proceed in the same manner for $Q^{-}$ and
one then finds%
\begin{equation}
D^{-}(u,v)=\left( 
\begin{array}{cc}
q^{2N} & \beta ^{\ast } \\ 
\frac{v}{u}\beta & 1-\frac{v}{u}q^{2N+2}%
\end{array}%
\right)  \label{D-}
\end{equation}%
which gives a second quantum Darboux matrix which closely resembles the
inverse of (\ref{D+}) \cite[Eqn (3.25)]{CMP13} in accordance with the
classical case. To establish (\ref{all_commute}) one needs to find yet two
other solutions of the Yang-Baxter equations $%
R_{12}^{-}(u,v)L_{13}^{-}(u)L_{23}^{-}(v)=L_{23}^{-}(v)L_{13}^{-}(u)R_{12}^{-}(u,v) 
$ and $R_{12}^{\prime
}(u,v)L_{13}^{+}(u)L_{23}^{-}(v)=L_{23}^{-}(v)L_{13}^{+}(u)R_{12}^{\prime
}(u,v)$. Here we shall instead take advantage of the already established
functional relations (\ref{TQ+}), (\ref{TQ-}) as well as Prop \ref{prop:qint}
to give a much shorter and less computational proof of (\ref{all_commute}).

\begin{corollary}
The operator coefficients of the formal power series $T(u),Q^{\pm }(u)$ all
commute, i.e. we have $[T_{r},Q_{s}^{\pm }]=[Q_{r}^{+},Q_{s}^{-}]=0$ for all 
$r,s\geq 0$.
\end{corollary}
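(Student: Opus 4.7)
The strategy is to work in the Fock representation, which is faithful for $q$ generic, and to show that both $Q^{+}(u)$ and $Q^{-}(u)$ act diagonally in the Bethe eigenbasis $\{|y_{\lambda }\rangle \}$ supplied by the completeness theorem of Section 3.2. Once all three families $T(v),Q^{+}(u),Q^{-}(u)$ are simultaneously diagonal in this common basis, their pairwise commutators vanish as operators on $\mathcal{F}^{n}$ and, by faithfulness, as elements of $\mathcal{H}_{n}$, which is exactly (\ref{all_commute}).

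By Prop \ref{prop:qint} the $T(u)$ pairwise commute, and since the Bethe vectors form a basis the biorthogonal dual basis $\{\langle \tilde{y}_{\lambda }|\}$ is well defined and automatically satisfies $\langle \tilde{y}_{\lambda }|T(u)=\tau _{\lambda }(u)\langle \tilde{y}_{\lambda }|$, as one reads off from $\langle \tilde{y}_{\lambda }|T(u)|y_{\mu }\rangle =\tau _{\mu }(u)\delta _{\lambda \mu }=\tau _{\lambda }(u)\delta _{\lambda \mu }$. Inspection of (\ref{Q+r})--(\ref{Q-r}) shows that both $Q^{\pm }(u)$ preserve the total particle number $N$, so $\Delta (v)=zq^{2N}v^{n}$ acts on $|y_{\mu }\rangle $ as the scalar $\Delta _{\mu }(v)=zq^{2\mu _{1}}v^{n}$ and agrees with its action from the left on $\langle \tilde{y}_{\lambda }|$ whenever the relevant matrix element is nonzero.

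Define $M_{\lambda \mu }^{\pm }(u):=\langle \tilde{y}_{\lambda }|Q^{\pm }(u)|y_{\mu }\rangle \in \mathbb{C}[\![u]\!]$. Sandwiching (\ref{TQ+}) and (\ref{TQ-}) between $\langle \tilde{y}_{\lambda }|$ and $|y_{\mu }\rangle $, and using the left- and right-eigenvector properties derived above, yields the pair of scalar linear difference equations
\begin{align*}
\tau _{\lambda }(u)\,M_{\lambda \mu }^{+}(u)&=M_{\lambda \mu }^{+}(uq^{2})+\Delta _{\lambda }(u)\,M_{\lambda \mu }^{+}(uq^{-2}),\\
\tau _{\mu }(u)\,M_{\lambda \mu }^{-}(u)&=M_{\lambda \mu }^{-}(uq^{-2})+\Delta _{\mu }(uq^{2})\,M_{\lambda \mu }^{-}(uq^{2}).
\end{align*}
Since (\ref{Q+r})--(\ref{Q-r}) give $Q_{0}^{\pm }=1$, the initial condition reads $M_{\lambda \mu }^{\pm }(0)=\delta _{\lambda \mu }$.

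Writing $M_{\lambda \mu }^{\pm }(u)=\sum_{r\geq 0}m_{r}u^{r}$ and extracting the coefficient of $u^{r}$ in each equation, I use that $\tau _{\lambda }(0)=\tau _{\mu }(0)=1$ and that $\Delta _{\bullet }(u)$ is purely of degree $n$; the contribution to the leading unknown $m_{r}$ from both sides collapses to the factor $1-q^{2r}$ (for the $Q^{+}$ recursion) or $1-q^{-2r}$ (for the $Q^{-}$ recursion). Both are nonzero for $r\geq 1$ since $q$ is an indeterminate, so $m_{r}$ is uniquely determined by $m_{0},\ldots ,m_{r-1}$ and hence inductively by $m_{0}$ alone. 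The initial condition $m_{0}=0$ for $\lambda \neq \mu $ then forces $M_{\lambda \mu }^{\pm }(u)\equiv 0$, so $Q^{\pm }(u)$ is diagonal in the Bethe basis and therefore commutes with $T(v)$, with $Q^{\mp }(v)$ and with itself at any other spectral parameter. The main technical point is precisely this uniqueness step: it breaks down at roots of unity where $1-q^{\pm 2r}$ can vanish, which is why the argument is formulated with $q$ indeterminate and, if needed, over the Puiseux series field $\Bbbk $ used in the Bethe ansatz theorem (cf.\ the Remark on root-of-unity specialisations).
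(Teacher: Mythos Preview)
Your argument is correct, but it takes a genuinely different route from the paper. The paper observes that the functional equations (\ref{TQ+}), (\ref{TQ-}) can be rewritten coefficient-by-coefficient as
\[
(1-q^{2r})Q_{r}^{+}=\sum_{s=1}^{r}(-1)^{s-1}T_{s}Q_{r-s}^{+}+(-1)^{n}zq^{2(N+r-n)}Q_{r-n}^{+},
\]
and similarly for $Q_{r}^{-}$. Since $Q_{0}^{\pm}=1$ and $q^{2N}$ commutes with all $T_{s}$, an induction in $r$ shows that each $Q_{r}^{\pm}$ is a \emph{polynomial in the $T_{s}$'s} (and $q^{2N}$). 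Commutativity is then immediate from Prop~\ref{prop:qint}, with no representation theory needed at all.

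Your approach instead passes to the Fock representation, invokes completeness of the Bethe ansatz over the Puiseux field $\Bbbk$, and shows that the off-diagonal matrix elements $M_{\lambda\mu}^{\pm}(u)$ satisfy a homogeneous recursion with vanishing initial term. This is the same recursion the paper writes down, but applied to scalar matrix elements rather than read as an identity in $\mathcal{H}_{n}$. Your method works, but it imports two substantial external inputs---faithfulness of the Fock module and completeness of the Bethe vectors---neither of which the paper's two-line argument requires. The paper's route also yields the strictly stronger statement that $Q_{r}^{\pm}\in\mathcal{A}_{n}$, i.e.\ the $Q$-operator coefficients lie in the commutative subalgebra generated by the $T_{r}$, which is used later (e.g.\ in the proof of Cor.~\ref{cor:qwronskian} and in identifying the Bethe vectors as eigenvectors of $Q^{\pm}$). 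Your diagonalisation argument recovers this only a posteriori.
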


\begin{proof}
Rewriting the functional equations (\ref{TQ+}) and (\ref{TQ-}) in terms of
coefficients we find that%
\begin{eqnarray}
(1-q^{2r})Q_{r}^{+}
&=&%
\sum_{s=1}^{r}(-1)^{s-1}T_{s}Q_{r-s}^{+}+(-1)^{n}zq^{2(N+r-n)}Q_{r-n}^{+}~, 
\notag \\
(q^{-2r}-1)Q_{r}^{-}
&=&\sum_{s=1}^{r}Q_{r-s}^{-}T_{s}+zq^{2(N+r)}Q_{r-n}^{-},  \label{Q2T}
\end{eqnarray}%
where we set $Q_{r}^{\pm }=0$ for $r<0$. Noting that $Q_{0}^{\pm }=1$ by
definition and that for each $r>0$ the terms on the right hand side of both
identities only involve $Q_{s}^{\pm }$ with $s<r$ we find that the $%
Q_{r}^{\pm }$ are polynomials in the $T_{r}$'s. But since the latter commute
among themselves, $[T_{r},T_{s}]=0$, the assertions now easily follow.
\end{proof}

\begin{corollary}
\label{cor:qwronskian}The transfer matrix $T(u)$ and\ $Q^{\pm }(u)$ obey the
additional relations%
\begin{equation}
1=Q^{+}(u)Q^{-}(uq^{-2})-zu^{n}q^{2N}Q^{+}(uq^{-2})Q^{-}(u)
\label{qWronskian}
\end{equation}%
and%
\begin{equation}
T(u)=\left\vert 
\begin{array}{cc}
Q^{+}(uq^{2}) & \Delta (u)Q^{+}(uq^{-2}) \\ 
\Delta (uq^{2})Q^{-}(uq^{2}) & Q^{-}(uq^{-2})%
\end{array}%
\right\vert \;.  \label{T=detQ}
\end{equation}%
As explained above these equalities should be understood as identities in
terms of $T_{r},Q_{r}^{\pm }\in \mathcal{H}_{n}$.
\end{corollary}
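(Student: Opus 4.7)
The plan is to reduce the two claimed identities to each other via the TQ relations already proved, and then to verify one of them using Bethe-ansatz completeness.

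First I would show that, granted (\ref{all_commute}) and the TQ relations (\ref{TQ+}), (\ref{TQ-}), the identities (\ref{qWronskian}) and (\ref{T=detQ}) are logically equivalent. Writing $W(u)$ for the LHS of (\ref{qWronskian}), one right-multiplies (\ref{TQ+}) by $Q^-(uq^{-2})$, uses commutativity to rewrite (\ref{TQ-}) as $T(u)Q^-(u)=Q^-(uq^{-2})+\Delta(uq^2)Q^-(uq^2)$ and left-multiplies by $\Delta(u)Q^+(uq^{-2})$, and subtracts. The crossed term $\Delta(u)Q^+(uq^{-2})Q^-(uq^{-2})$ cancels, leaving
\[
T(u)\,W(u) = Q^+(uq^2)Q^-(uq^{-2}) - \Delta(u)\Delta(uq^2)Q^+(uq^{-2})Q^-(uq^2),
\]
which is exactly the RHS of (\ref{T=detQ}). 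Since $T(u)=1+uT_1+\dots+u^nT_n$ has invertible constant term $T_0=1$, it is a unit in $\mathbb{C}[\![u]\!]\otimes \mathcal{H}_n$, so (\ref{qWronskian}) holds if and only if (\ref{T=detQ}) does.

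To prove (\ref{qWronskian}), I would pass to the Fock representation and use Bethe-ansatz completeness. By the preceding corollary each $Q^\pm_r$ lies in the commutative algebra $\mathcal{A}_n$ generated by the $T_s$, and by the completeness theorem recalled in \S 3.2, $\mathcal{A}_n$ is jointly diagonalized on this (faithful) module by the Bethe vectors $|y_\lambda\rangle$. It therefore suffices to verify (\ref{qWronskian}) as the scalar identity
\[
q^+_\lambda(u)\,q^-_\lambda(uq^{-2}) - zu^n q^{2\lambda_1}\,q^+_\lambda(uq^{-2})\,q^-_\lambda(u) = 1
\]
on each $|y_\lambda\rangle$, where $q^\pm_\lambda$ denote the respective eigenvalues of $Q^\pm$. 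The standard argument combining (\ref{specT}), (\ref{TQ+}) and the vanishing of apparent poles via (\ref{BAE}) identifies $q^+_\lambda(u) = \prod_{j=1}^{\lambda_1}(1-uy_j)$.

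The main obstacle is pinning down $q^-_\lambda(u)$. My plan is to treat the scalar version of (\ref{TQ-}) as a second-order $q$-difference equation in $u$ with coefficient $t_\lambda(u)$, and to characterize $q^-_\lambda$ as the unique formal power series solution normalized by $q^-_\lambda(0)=1$ (matching $Q^-_0=1$). Expanding the Wronskian as a power series, for $0<r<n$ the leading form $q^-_\lambda(u)\sim\prod_j(1-uq^2y_j)^{-1}$ reduces the coefficient of $u^r$ to the classical symmetric-function identity $\sum_{k=0}^r(-1)^k e_k(y)\,h_{r-k}(y)=0$; for $r\ge n$ the term $\Delta(u)=zu^n q^{2N}$ introduces $z$-dependent corrections to $q^-_\lambda$ whose compatibility with (\ref{BAE}) must be tracked carefully to close the identity. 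If this order-by-order verification proves too unwieldy, an alternative I would pursue is to construct a Yang--Baxter intertwiner between $L^+(u)$ and $L^-(uq^{-2})$ that degenerates to a rank-one projector at this particular shift; taking the trace in both auxiliary spaces would then produce (\ref{qWronskian}) directly as the fused counterpart of the TQ relations.
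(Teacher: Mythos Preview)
Your reduction showing $T(u)\,W(u)=\text{RHS of }(\ref{T=detQ})$ is correct, and this is indeed how the paper passes from (\ref{qWronskian}) to (\ref{T=detQ}). But your plan for proving (\ref{qWronskian}) itself --- via the Fock representation and Bethe eigenvalues --- misses a much shorter, purely algebraic argument and runs into a genuine obstacle.

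The paper's proof stays entirely in $\mathcal{H}_n$. Compute $T(u)Q^+(u)Q^-(u)$ two ways: apply (\ref{TQ+}) to the pair $T(u)Q^+(u)$, and apply (\ref{TQ-}) (rewritten via commutativity) to the pair $T(u)Q^-(u)$. Equating the results and rearranging gives directly
\[
W(u)=Q^+(u)Q^-(uq^{-2})-\Delta(u)Q^+(uq^{-2})Q^-(u)
     =Q^+(uq^2)Q^-(u)-\Delta(uq^2)Q^+(u)Q^-(uq^2)=W(uq^2).
\]
Since $q$ is a generic indeterminate, the functional equation $W(u)=W(uq^2)$ for a formal power series forces $W_r=0$ for every $r>0$, and $W_0=Q_0^+Q_0^-=1$. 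That is the whole proof of (\ref{qWronskian}); no representation, no spectrum, no order-by-order bookkeeping.

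Your spectral route is not just longer but hits a real wall: the eigenvalues $q^-_\lambda(u)$ are precisely what the paper, in the discussion around (\ref{Q-spec}), flags as not under control --- their very existence as convergent objects is left open. Your fallback of characterising $q^-_\lambda$ as the unique formal power-series solution of the scalar (\ref{TQ-}) with $q^-_\lambda(0)=1$ is legitimate, but once you have both scalar $TQ$ equations the fastest way to finish is exactly the $W(u)=W(uq^2)$ trick above applied eigenvalue by eigenvalue; the passage through Bethe vectors then contributes nothing. The Yang--Baxter intertwiner alternative you sketch would also be far heavier machinery than the one-line $q$-shift invariance argument that actually does the job.
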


\begin{proof}
Consider the triple product $T(u)Q^{+}(u)Q^{-}(u)$. From (\ref{TQ+}) and (%
\ref{TQ-}) one deduces that the formal power series $W(u)=\sum_{r\geq
0}u^{r}W_{r}=Q^{+}(u)Q^{-}(uq^{-2})-zu^{n}q^{2N}Q^{+}(uq^{-2})Q^{-}(u)$
obeys $W(u)=W(uq^2)$ and, hence, $W_{r}=0$ for $r>0$ since $q$ is an
arbitrary indeterminate. Setting $u=0$ we find $W(0)=1$ and the first
assertion follows. The second equality (\ref{T=detQ}) is now easily deduced
from (\ref{qWronskian}) by using once more (\ref{TQ+}) and (\ref{TQ-}).
\end{proof}

\subsection{Specialisation to the Fock space representation}

It is worth emphasising that so far we have worked on the level of the $q$%
-boson algebra, that is, the functional relations derived hold regardless of
the representation chosen. We now specialise to the Fock space
representation (\ref{Fock}).

In the Fock representation the coefficients (\ref{Q+r}), (\ref{Q-r}) of the $%
Q$-operators become the following $q$-difference operators,%
\begin{eqnarray}
Q_{r}^{+} &=&(-1)^{r}(1-q^{2})^{r}\sum_{\alpha \vdash r}z^{\alpha _{n}}\frac{%
\hat{\xi}^{\alpha \prime {}}D^{\alpha }}{(q^{2})_{\alpha }}\;, \\
Q_{r}^{-} &=&(1-q^{2})^{r}\sum_{\alpha \vdash r}z^{\alpha _{n}}\frac{%
D^{\alpha }\hat{\xi}^{\alpha \prime {}}}{(q^{2})_{\alpha }}%
\prod_{i=1}^{n}q^{\alpha _{i}(\alpha _{i}+1)}\;,
\end{eqnarray}%
%
%
where $\alpha \prime {}=(\alpha _{n},\alpha _{1},\alpha _{2},\ldots ,\alpha
_{n-1})$. We can now take matrix elements of the $Q^{\pm }$-operators and
the $L^{\pm }$-operators in the Fock space representation. For the local Lax
matrix (\ref{mom}) we obtain for the vertex configurations in Figure \ref%
{fig:Tvertexmodel} the Boltzmann weights \cite[Fig 3]{CMP13}%
\begin{equation}
\langle 0,m|L(u)|0,m\rangle =\langle 1,m-1|L(u)|0,m\rangle =1
\end{equation}%
and%
\begin{equation}
\langle 0,m+1|L(u)|1,m\rangle =u(1-q^{2m+2}),\qquad \langle
1,m|L(u)|1,m\rangle =u\;.
\end{equation}%
From the $L^{\pm }$-operators we find the following Boltzmann weights for
the vertex configurations in Figure \ref{fig:Qvertexmodels},%
\begin{equation}
\langle c,d|L^{+}(u)|a,b\rangle =(-u)^{a}\QATOPD[ ] {d}{a}_{q^{2}}
\end{equation}%
and%
\begin{equation}
\langle c,d|L^{-}(u)|a,b\rangle =u^{a}q^{a(a+1)}\QATOPD[ ] {a+b}{b}_{q^{2}}\;
\end{equation}%
with $\QATOPD[ ] {m}{n}_{q^{2}}=\frac{(q^{2})_{m}}{(q^{2})_{n}(q^{2})_{m-n}}$
for $m>n$ and zero otherwise. The matrix elements of the $Q^{\pm }$%
-operators then yield the weighted sums over the vertex configurations of a
single lattice row, the row-to-row transfer matrices. According to (\ref%
{all_commute}) these define exactly solvable lattice models. As we have
interpreted $Q^{\pm }(u)$ so far as formal power series with coefficients in
the $q$-boson algebra we briefly explain how these give rise to well-defined
operators in Fock space.

For fixed particle number $k$ the formal power series $Q^{+}(u)|_{\mathcal{F}%
_{k}^{n}}$ is a well-defined operator as $Q_{r}^{+}|_{\mathcal{F}_{k}^{n}}=0$
for $r>k$. In fact, from (\ref{Q+r}) one deduces that $Q_{k}^{+}|_{\mathcal{F%
}_{k}^{n}}$ is the discrete translation operator: the sum in (\ref{Q+r}) for 
$r=k$ is the sum over all possible $k$-particle configurations shifting each
particle by one site forward. If $r>k$ the corresponding operator $%
Q_{r}^{+}|_{\mathcal{F}_{k}^{n}}$ would shift more particles forward than
are in the system, hence its matrix elements vanish.

In contrast, the coefficients of $Q^{-}(u)|_{\mathcal{F}_{k}^{n}}$ are in
general all nonzero on the $k$-particle space, since according to (\ref{L-})
and (\ref{TQ-}) particles are first created at the neighbouring site before
they are annihilated at their place of origin. Thus, under the action of $%
Q^{-}(u)|_{\mathcal{F}_{k}^{n}}$ a single particle can travel several times
around the lattice picking up a factor $z$ each time it completes a
round-trip. We therefore interpret the power of the quasi-periodicity
variable $z$ in (\ref{T}) as a winding number. For any finite winding number 
$p\geq 0$ the corresponding coefficient of $z^{p}$ when expanding $%
Q^{-}(u)|_{\mathcal{F}_{k}^{n}}$ as a power series in $z$, is a well-defined
operator, since we have now limited the number of round trips a particle can
make and there are only a finite number of particles in the system.

\begin{corollary}
In the Fock space representation we have that the Bethe vectors (\ref%
{Bethe_vector}) form a common eigenbasis of $\{T_{r},Q_{r}^{\pm }\}$ with%
\begin{equation}
Q^{+}(u)|y_{\lambda }\rangle =\tprod_{j=1}^{\lambda
_{1}}(1-u~y_{j})~|y_{\lambda }\rangle \;.  \label{Q+spec}
\end{equation}%
In particular, the eigenvalues of $Q_{r}^{+}$ are the elementary symmetric
functions in the Bethe roots $y_{i}$. The eigenvalues of $Q_{r}^{-}$ are
then derived via (\ref{qWronskian}).
\end{corollary}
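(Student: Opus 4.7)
The plan is threefold: first establish that each Bethe vector is an eigenvector of every $Q^{\pm}(v)$, then determine the $Q^+$ eigenvalue from the $TQ$-equation, and finally obtain the $Q^-$ eigenvalue from the quantum Wronskian.

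For step one, recall from the previous corollary that $[T(u),Q^{+}(v)]=0$. Thus $Q^+(v)|y_\lambda\rangle$ is again a $T(u)$-eigenvector with the same eigenvalue as $|y_\lambda\rangle$ for all $u$. Invoking completeness and simplicity of the Bethe ansatz for $q$ an indeterminate (the result cited from \cite{CMP13}), the simultaneous $T(u)$-eigenspaces in each particle sector $\mathcal{F}^n_k$ are one-dimensional, so we may write $Q^+(v)|y_\lambda\rangle = \tau^+_\lambda(v)\,|y_\lambda\rangle$ for a scalar formal power series $\tau^+_\lambda(v)\in\Bbbk[\![v]\!]$. The analogous statement for $Q^-$ follows from $[T(u),Q^-(v)]=0$.

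For step two, apply the functional relation (\ref{TQ+}) to $|y_\lambda\rangle$. Using the eigenvalue (\ref{specT}) and the fact that $N|y_\lambda\rangle=\lambda_1|y_\lambda\rangle$ (since $B(y^{-1})$ creates one particle), we obtain the scalar Baxter equation
$$
\left(\prod_{j=1}^{\lambda_1}\frac{1-uq^{2}y_{j}}{1-u\,y_{j}}+zu^{n}\prod_{j=1}^{\lambda_1}\frac{q^{2}-uy_{j}}{1-u\,y_{j}}\right)\tau^+_\lambda(u) = \tau^+_\lambda(uq^{2})+zq^{2\lambda_1}u^{n}\tau^+_\lambda(uq^{-2}).
$$
Next we argue that $\tau^+_\lambda$ is a polynomial of degree at most $\lambda_1$ with $\tau^+_\lambda(0)=1$: from the explicit formula (\ref{Q+r}) the coefficient $Q_r^+$ is a sum over compositions that cyclically moves $r$ particles one step forward, so $Q_r^+|_{\mathcal{F}^n_{\lambda_1}}=0$ whenever $r>\lambda_1$, while $Q_0^+=1$. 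A direct substitution shows that $\tau^+_\lambda(u)=\prod_j(1-uy_j)$ satisfies the cleared equation $\prod_j(1-uq^2y_j)+zu^n\prod_j(q^2-uy_j)$ on both sides, since $q^{2\lambda_1}\prod_j(1-uq^{-2}y_j)=\prod_j(q^2-uy_j)$. Uniqueness within the class of polynomials of degree $\le\lambda_1$ with constant term $1$ follows because the right-hand side of the cleared equation, viewed as a polynomial in $u$, has roots only at $u=y_j^{-1}$ (plus those forced by the left-hand side), pinning down the factorisation of $\tau^+_\lambda$ uniquely.

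For step three, specialise the quantum Wronskian (\ref{qWronskian}) to $|y_\lambda\rangle$ to get
$$
1 = \tau^+_\lambda(u)\,\tau^-_\lambda(uq^{-2}) - zu^{n}q^{2\lambda_1}\,\tau^+_\lambda(uq^{-2})\,\tau^-_\lambda(u),
$$
which recursively determines the coefficients of $\tau^-_\lambda(u)$ as a formal power series (with $z$ recording winding number, as explained in the paragraph preceding the corollary) once $\tau^+_\lambda$ is known. The main obstacle is step one: one needs the cited completeness-and-simplicity statement to conclude that commutation $[T(u),Q^{\pm}(v)]=0$ forces diagonal action in the Bethe basis, rather than merely block-diagonal action. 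A lesser subtlety in step two is the uniqueness argument for the Baxter equation, which relies on $q$ being generic so that the shifts $u\mapsto uq^{\pm 2}$ do not create spurious coincidences among the roots $\{y_j^{-1}\}$.
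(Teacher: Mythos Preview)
Your argument is correct but takes a different route from the paper in the first step, and this difference propagates into the second step.

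For step one, the paper does not invoke simplicity of the joint $T$-eigenspaces at all. Instead it uses the recursion (\ref{Q2T}), already established in the proof of the preceding corollary, which exhibits each $Q_{r}^{\pm}$ explicitly as a polynomial in $T_{1},\ldots ,T_{r}$ and $q^{2N}$. Since the Bethe vectors are eigenvectors of all the $T_{r}$ (and of $N$), they are automatically eigenvectors of any polynomial in these operators, with no appeal to one-dimensionality of eigenspaces. Your approach via $[T(u),Q^{\pm}(v)]=0$ plus simplicity is valid, but it imports a stronger statement about the Bethe ansatz than is needed here; the paper's route is strictly more elementary.

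This difference also streamlines step two. In your version you must argue separately for uniqueness of the polynomial solution of the scalar Baxter equation, and the sentence you give (``roots only at $u=y_{j}^{-1}$\ldots pinning down the factorisation'') is not quite a proof. In the paper's setup uniqueness is automatic: the recursion (\ref{Q2T}) reads $(1-q^{2r})Q_{r}^{+}=\sum_{s=1}^{r}(-1)^{s-1}T_{s}Q_{r-s}^{+}+\cdots$, so for generic $q$ the eigenvalue of $Q_{r}^{+}$ is \emph{determined} by those of $T_{1},\ldots ,T_{r}$ and the initial condition $Q_{0}^{+}=1$. One then simply checks that $\prod_{j}(1-uy_{j})$ satisfies (\ref{TQ+}) with the known $T$-eigenvalue (\ref{specT}), which is the computation you already did. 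So what the paper's approach buys is that both the eigenvector property and the uniqueness of the eigenvalue come for free from the single algebraic fact that $Q_{r}^{\pm}\in\mathbb{C}(q)[T_{1},\ldots ,T_{n},q^{2N}]$.
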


\begin{proof}
From (\ref{Q2T}) it follows that the Bethe vectors (\ref{Bethe_vector}) are
eigenvectors of $Q_{r}^{\pm }$ as the latter are polynomial in the $T_{r}$%
's. The identity (\ref{TQ+}) together with (\ref{specT}) gives the
eigenvalue $\mathcal{Q}^{+}(u,y_{\lambda })=\tprod_{j=1}^{\lambda
_{1}}(1-u~y_{j})$ in (\ref{Q+spec}).
\end{proof}

Naively one might expect that $Q^{-}(u)$ has analytic eigenvalues as well.
Let us evaluate $u$ in a neighbourhood $U\subset \mathbb{C}$ of $u=0$ and
for a fixed number of lattice sites $n$ and particle number $k=\lambda _{1}$
set $z$ in (\ref{T}) to $z=\varepsilon (n,\lambda _{1})$ with $0<\varepsilon
\ll 1$. If one can show that the following expression converges,%
\begin{equation}
F(u,y_{\lambda })=\mathcal{Q}^{+}(u,y_{\lambda })\sum_{r\geq 0}\frac{%
\varepsilon ^{r}u^{rn}q^{nr(r+1)}q^{2r\lambda _{1}}}{\mathcal{Q}%
^{+}(uq^{2r},y_{\lambda })\mathcal{Q}^{+}(uq^{2r+2},y_{\lambda })}
\label{Q-spec}
\end{equation}%
then the Bethe ansatz equations (\ref{BAE}) imply that the residues at $%
u=q^{-2m}/y_{j}$ vanish and, hence, that $F$ is analytic in $u$. Via
analytic continuation in $\varepsilon $ one then defines $F$ outwith the
region of convergence. The identity (\ref{TQ-}) would then imply that $%
F(u,y_{\lambda })=\mathcal{Q}^{-}(u,y_{\lambda })$\ is the eigenvalue of $%
Q^{-}(u)$. However, the expression (\ref{Q-spec}) is also a power series in $%
q$ with the Bethe roots being potentially Puiseux series in $q$. In order to
control the convergence of the expression (\ref{Q-spec}) one needs to know
the dependence of the Bethe roots on $q$ which is a difficult and technical
issue. We hope to address this problem by different means, which go beyond
the discussion in this article, in future work.

\begin{remark}
\textrm{After the construction of $Q^{+}$ in \cite[Section 3]{CMP13} an
alternative expression for a $Q$-operator has been put forward \cite{Zullo15}
in the Fock representation. In \emph{loc. cit.} the $Q$-operator is stated
in terms of a kernel function for the Jackson integral (Eqns (54) and (62))
which requires special convergence conditions, $0<q<1$, and an upper bound
on the growth of functions in the state space $\mathcal{F}^{n}$. The
derivation of this kernel function in \emph{loc. cit.} postulates the
existence of a $q$-analogue of the $\delta $-function for the Jackson
integral and I was unable to verify whether the constructed operator is
either of the two operators constructed here. }
\end{remark}

N.B. in the proof of (\ref{Q+spec}) we have reversed the usual logic and
derived the spectrum of the $Q^{+}$-operator from the spectrum (\ref{specT})
of the transfer matrix and the functional relation (\ref{TQ+}). As the
system is solvable via the Bethe ansatz and completeness has been proved 
\cite[Section 7]{CMP13}, the significance of the $Q^{+}$-operator in the
present context is \emph{not} that of a mere `auxiliary matrix' as in
Baxter's original work \cite{Baxter07} (and references therein) where it is
used to find the physically relevant eigenvalues of $T$. Instead the $Q^{+}$%
-operator acquires in our setting a \emph{direct} physical significance as
it describes the discretised time evolution (\ref{right_flow}) of the
quantum Ablowitz-Ladik chain, as we will see next.

\section{Quantum B\"{a}cklund transformation}

We define the quantum analogue of the B\"{a}cklund transformation $\mathcal{B%
}^{+}(v)$ in the Fock space representation by setting $\mathcal{B}^{+}(v):%
\limfunc{End}\mathcal{F}^{n}\rightarrow \mathbb{C}[\![v]\!]\otimes \limfunc{%
End}\mathcal{F}^{n}$ with $\mathcal{O}\mapsto Q^{+}(v)\mathcal{O}%
Q^{+}(v)^{-1}$. The quantum B\"{a}cklund map $\mathcal{B}^{+}$ is
well-defined because of the following result.

\begin{proposition}
In the Fock representation the inverse of the $Q^{+}$-operator exists and is
given by%
\begin{equation}
Q^{+}(v)^{-1}=\sum_{r\geq 0}v^{r}\det ((-1)^{1-i+j}Q_{1-i+j}^{+})_{1\leq
i,j\leq r}~,
\end{equation}%
where the coefficients for $r<n$ simplify according to%
\begin{equation}
\det ((-1)^{1-i+j}Q_{1-i+j}^{+})_{1\leq i,j\leq r}=q^{-2r}Q_{r}^{-},\qquad
r<n\;.  \label{Qinverse}
\end{equation}
\end{proposition}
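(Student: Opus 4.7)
The plan is to establish invertibility and the determinantal formula by treating $Q^{+}(v)$ as a formal power series with leading coefficient $1$ and pairwise commuting coefficients, and then to identify the first $n$ coefficients of this inverse with those of $Q^{-}$ by means of the quantum Wronskian relation~(\ref{qWronskian}).

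First, I would observe that $Q_{0}^{+}=1$ and that the coefficients $\{Q_{r}^{+}\}_{r\geq 0}$ pairwise commute, by the corollary immediately preceding Corollary~\ref{cor:qwronskian}. Hence $Q^{+}(v)$ has a unique two-sided inverse in $\mathbb{C}[\![v]\!]\otimes \limfunc{End}\mathcal{F}^{n}$: writing $Q^{+}(v)^{-1}=\sum_{r\geq 0}B_{r}v^{r}$, the relation $Q^{+}(v)Q^{+}(v)^{-1}=1$ is equivalent to $B_{0}=1$ together with the unit lower-triangular recursion $B_{r}=-\sum_{k=1}^{r}Q_{k}^{+}B_{r-k}$ for $r\geq 1$, which determines the $B_{r}$ uniquely. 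To convert this recursion into the stated determinant, I would solve the resulting linear system for $B_{r}$ by Cramer's rule and expand along the last column; the result is precisely $B_{r}=\det((-1)^{1-i+j}Q_{1-i+j}^{+})_{1\leq i,j\leq r}$, which is the standard dual Jacobi--Trudi-type identity applied inside the commutative subalgebra generated by the $Q_{r}^{+}$'s. A short induction on $r$, or a direct cofactor expansion, suffices to verify it.

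For the simplification when $r<n$, I would invoke (\ref{qWronskian}) in the rearranged form
\[
Q^{+}(v)\,Q^{-}(vq^{-2})=1+zv^{n}q^{2N}\,Q^{+}(vq^{-2})\,Q^{-}(v).
\]
Since the second term on the right is divisible by $v^{n}$, one concludes that $Q^{-}(vq^{-2})\equiv Q^{+}(v)^{-1}\pmod{v^{n}}$ as formal power series in $v$. Comparing coefficients of $v^{r}$ for $0\leq r<n$ then yields $q^{-2r}Q_{r}^{-}=B_{r}$, which is the asserted identity.

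No step is genuinely hard: invertibility is immediate from $Q_{0}^{+}=1$, the determinantal expression is a textbook Jacobi--Trudi inversion once commutativity of the $Q_{r}^{+}$'s is in hand, and the identification with $Q^{-}$ for $r<n$ is a one-line consequence of (\ref{qWronskian}). The only mild technical care is in aligning the signs and indices of the determinantal expression with the Cramer cofactor expansion; this is routine but worth doing carefully.
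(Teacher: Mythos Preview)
Your proof is correct, and for the second assertion (the identification with $q^{-2r}Q_r^{-}$ when $r<n$) you argue exactly as the paper does, via the quantum Wronskian~(\ref{qWronskian}).

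For the first assertion, however, your route differs from the paper's. The paper invokes the Bethe ansatz: by completeness, $Q^{+}(v)$ is diagonalised in the basis~(\ref{Bethe_vector}) with eigenvalues $\prod_{j}(1-vy_{j})$, so the inverse has eigenvalues $\sum_{r}v^{r}h_{r}(y)$, and the classical Jacobi--Trudi identity $h_{r}=\det(e_{1-i+j})$ together with~(\ref{Q+spec}) yields the determinantal expression. You instead bypass diagonalisation entirely: since $Q_{0}^{+}=1$ and the $Q_{r}^{+}$ commute, $Q^{+}(v)$ is a unit in the formal power series ring over the commutative subalgebra they generate, and the determinantal formula follows from the abstract Jacobi--Trudi inversion applied in that ring. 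Your argument is more elementary and self-contained, as it needs neither completeness of the Bethe ansatz nor the explicit spectrum~(\ref{Q+spec}); the paper's argument, on the other hand, makes the link to symmetric functions in the Bethe roots fully explicit, which is what motivates the particular sign pattern in the determinant.
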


\begin{proof}
The first identity for the inverse operator follows from the Bethe ansatz
result: in the eigenbasis (\ref{Bethe_vector}) the inverse of $Q^{+}(v)$ is
given by the diagonal matrix with entries%
\begin{equation*}
\prod_{j=1}^{\lambda _{1}}\frac{1}{1-vy_{j}}=\sum_{r\geq
0}v^{r}h_{r}(y_{1},\ldots ,y_{\lambda _{1}})
\end{equation*}%
with $h_{r}$ denoting the complete symmetric functions \cite[Ch. I]%
{Macdonald}. Using the known determinant relation between elementary and
complete symmetric functions, $h_{r}=\det (e_{1-i+j})_{1\leq i,j\leq r}$,
the first assertion follows from (\ref{Q+spec}). The second assertion is now
an immediate consequence of (\ref{qWronskian}) when expanding in the
variable $v$.
\end{proof}

We are particularly interested in the image of the $q$-boson algebra
generators under the quantum B\"{a}cklund transform, 
\begin{equation}
\tilde{\beta}_{j}(v)=Q^{+}(v)\beta _{j}Q^{+}(v)^{-1}=\sum_{r\geq 0}v^{r}%
\tilde{\beta}_{j,r}\;.  \label{beta_expansion}
\end{equation}%
We define $\tilde{\beta}_{j}^{\ast }(v)$ and $\tilde{\beta}_{j,r}^{\ast }$
in an analogous fashion. N.B. $\tilde{\beta}_{j}^{\ast }(v)\neq (\tilde{\beta%
}_{j}(v))^{\ast }$ since $Q^{+}$ is not unitary in the Fock space
representation. For ease of notation we will often suppress the explicit
dependence on the variable $v$ but the reader should keep in mind that $%
\tilde{\beta}_{j}(v)$ and $\tilde{\beta}_{j}^{\ast }(v)$ are power series in 
$v$ with coefficients in $\limfunc{End}\mathcal{F}^{n}$ as we are now
working in the Fock space representation. Clearly, the transformed quantum
variables $\{\tilde{\beta}_{j},\tilde{\beta}_{j}^{\ast }\}$ still obey the $%
q $-boson algebra relations and, thus, we obtain a one-variable family of
representations of $\mathcal{H}_{n}$. Moreover, by construction and because
of (\ref{all_commute}) the quantum integrals of motion are left invariant,%
\begin{equation}
T_{r}(\tilde{\beta}_{j},\tilde{\beta}_{j}^{\ast })=Q^{+}(v)T_{r}(\beta
_{j},\beta _{j}^{\ast })Q^{+}(v)^{-1}=T_{r}(\beta _{j},\beta _{j}^{\ast })\;.
\end{equation}

\subsection{The image of the quantum B\"{a}cklund map}

\label{sec:result}

We now derive a set of functional relations for the transformed quantum
variables $\{\tilde{\beta}_{j},\tilde{\beta}_{j}^{\ast }\}$ and show that
they are an \emph{exact} match of the classical relations (\ref{clB}) which
describe the discretised time flow (\ref{right_flow}).

\begin{theorem}
\label{mainresult}The quantum B\"{a}cklund transformed variables $\{\tilde{%
\beta}_{j}(v),\tilde{\beta}_{j}^{\ast }(v)\}$ obey the functional relations 
\begin{equation}
\left\{ 
\begin{array}{c}
\tilde{\beta}_{j}-\beta _{j}=v(1-\beta _{j}^{\ast }\tilde{\beta}_{j})\tilde{%
\beta}_{j-1}\medskip \\ 
\tilde{\beta}_{j}^{\ast }-\beta _{j}^{\ast }=v\beta _{j+1}^{\ast }(\beta
_{j}^{\ast }\tilde{\beta}_{j}-1)%
\end{array}%
\right. \;.  \label{quantumB}
\end{equation}%
These relations determine $\{\tilde{\beta}_{j}(v),\tilde{\beta}_{j}^{\ast
}(v)\}$ via recurrence when expanding in the variable $v$.
\end{theorem}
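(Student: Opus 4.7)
The plan is to parallel the derivation of the classical relations (\ref{clB}), which were extracted from the local Darboux intertwiner $D\,L_j=\tilde{L}_j D$ by matching powers of the spectral parameter $u$. The quantum counterpart of this intertwiner is the Yang--Baxter equation (\ref{DLL=LLD}) together with the explicit Darboux matrix $D^+(u,v)$ of (\ref{D+}).

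First, I would iterate (\ref{DLL=LLD}) over all quantum sites. Since $L_{1,i}(u)$ and $L^+_{2,j}(v)$ commute whenever $i\neq j$, a standard telescoping yields the monodromy-level intertwining
$$D^+_{12}(u,v)\,\mathcal{T}_1(u)\,\mathcal{T}^+_2(v) = \mathcal{T}^+_2(v)\,\mathcal{T}_1(u)\,D^+_{12}(u,v),$$
with $\mathcal{T}_1(u)=L_n(u)\cdots L_1(u)$ and $\mathcal{T}^+_2(v)=L^+_n(v)\cdots L^+_1(v)$. This is the direct quantum analogue of the iterated classical identity $D\,\mathcal{T}(u;\psi,\psi^{\ast})=\mathcal{T}(u;\tilde{\psi},\tilde{\psi}^{\ast})\,D$.

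Next, I would take the $z^{N_2}$-twisted partial trace over the auxiliary space $2$. The trace of $\mathcal{T}^+_2(v)$ produces $Q^+(v)$, while cyclicity---applied carefully so as to keep track of the $z$-shifts caused by the number-changing entries $b_2, b_2^{\ast}$ of $D^+$---converts the right-hand side into an expression involving the conjugated monodromy $\tilde{\mathcal{T}}_1(u)=Q^+(v)\,\mathcal{T}_1(u)\,Q^+(v)^{-1}$. The outcome is an effective quantum Darboux relation
$$\mathcal{D}^+(u,v)\,\mathcal{T}_1(u) = \tilde{\mathcal{T}}_1(u)\,\mathcal{D}^+(u,v),$$
whose $2\times 2$ coefficient matrix $\mathcal{D}^+$ has, after the trace, entries built from $\beta_j,\tilde{\beta}_{j-1},\beta_j^{\ast}$---in exact structural parallel with the classical Darboux matrix (\ref{D}). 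Matching coefficients of powers of $u$ in this $2\times 2$ matrix identity then yields the first identity of (\ref{quantumB}) from the $(2,1)$-entry and the second from the $(1,2)$-entry, mimicking the classical computation step by step. The recurrence claim is then immediate: expanding (\ref{quantumB}) in $v$ with initial data $\tilde{\beta}_{j,0}=\beta_j$, $\tilde{\beta}^{\ast}_{j,0}=\beta^{\ast}_j$ (from $Q^+(0)=\limfunc{Id}$) uniquely determines all higher Taylor coefficients.

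The principal obstacle is the trace step. Because the off-diagonal entries of $D^+$ do not commute with $z^{N_2}$, standard cyclicity introduces shifts in $z$ that must be tracked precisely, or else the right-hand side collapses to the scalar trace and one recovers merely the global functional equation (\ref{TQ+}) rather than the desired local Darboux relation. A more computational fallback, should this trace manipulation resist a clean treatment, is induction on $r$ in the $v$-expansion of $\tilde{\beta}_j(v)$: the explicit formula (\ref{Q+r}) for $Q^+_r$ together with the $q$-boson relations (\ref{qbosondef}) permits a direct computation of $[Q^+_r,\beta_j]$, whose result can be matched order by order against the recursion induced by (\ref{quantumB}).
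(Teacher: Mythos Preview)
Your primary strategy has a genuine conceptual gap. The entries of $D^+(u,v)$ in (\ref{D+}) are operators $\beta,\beta^*,q^{2N}$ acting on the \emph{auxiliary} Fock space (space~2), not the physical $q$-boson generators $\beta_j,\beta_j^*$ of the quantum space. After the twisted partial trace over space~2 these auxiliary operators are integrated out, and what remains is a $2\times 2$ matrix whose entries lie in the commutative algebra $\mathcal{A}_n$ (or are scalars), not a site-dependent Darboux matrix carrying $\beta_j$ and $\tilde\beta_{j-1}$. The classical argument works because the classical $D_j^+$ in (\ref{D}) already carries the site index through its field-valued entries $a_j,b_j,c_j$; the quantum $D^+$ is site-independent and encodes locality only through the auxiliary space, which the trace erases. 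What survives the trace is exactly the global statement $[T(u),Q^+(v)]=0$, as you yourself anticipate---there is no mechanism here by which a \emph{local} relation at a fixed site $j$ can emerge. Moreover, even granting an effective intertwiner $\mathcal{D}^+\mathcal{T}_1=\tilde{\mathcal{T}}_1\mathcal{D}^+$, this is a relation for the full monodromy, whereas (\ref{clB}) was read off from the \emph{local} identity $D_{j+1}L_j=\tilde L_j D_j$; matching powers of $u$ in the monodromy relation would entangle all sites at once.

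Your fallback is not a fallback: it is precisely the paper's proof. One first establishes by induction the elementary commutators $[a_{j-1}^m,\beta_j]$ and $[a_j^m,\beta_j^*]$ with $a_j=\beta_j\beta_{j+1}^*$, then feeds these into the explicit expression (\ref{Q+r}) to obtain closed recursions of the form
\[
[Q_r^+,\beta_j]=q^{2N_j}Q_{r-1}^+\beta_{j-1}-\beta_j^*[Q_{r-1}^+,\beta_j]\beta_{j-1},
\]
and an analogous one for $\beta_j^*$. Summing against $v^r$ and multiplying on the right by $Q^+(v)^{-1}$ yields (\ref{quantumB}) after a single rearrangement using $q^{2N_j}=1-\beta_j^*\beta_j$. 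The Yang--Baxter/Darboux picture motivates the shape of the answer but plays no role in the actual argument.
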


\begin{proof}
A somewhat lengthy but straightforward computation. We give the intermediate
steps and leave some details to the reader to verify. Set $a_{j}=\beta
_{j}\beta _{j+1}^{\ast }$. Via induction in $m$ one establishes the
identities%
\begin{eqnarray*}
\lbrack a_{j-1}^{m},\beta _{j}] &=&-(1-q^{2m})a_{j-1}^{m-1}\beta
_{j-1}q^{2N_{j}}~, \\
\lbrack a_{j}^{m},\beta _{j}^{\ast }] &=&(1-q^{2m})q^{2N_{j}}\beta
_{j+1}^{\ast }a_{j}^{m-1}\;.
\end{eqnarray*}%
From the explicit expression (\ref{Q+r}) one then derives the recurrence
relations%
\begin{eqnarray*}
\lbrack Q_{r}^{+},\beta _{j}] &=&q^{2N_{j}}Q_{r-1}^{+}\beta _{j-1}-\beta
_{j}^{\ast }[Q_{r-1}^{+},\beta _{j}]\beta _{j-1}~, \\
\lbrack Q_{r}^{+},\beta _{j}^{\ast }] &=&-\beta _{j+1}^{\ast
}Q_{r-1}^{+}q^{2N_{j}}-\beta _{j+1}^{\ast }[Q_{r-1}^{+},\beta _{j}^{\ast
}]\beta _{j}\;.
\end{eqnarray*}%
Multiplying with $v^{r}$ and subsequently summing over $r$ on both sides of
these two equalities produces%
\begin{eqnarray*}
\lbrack Q^{+}(v),\beta _{j}] &=&-vq^{2N_{j}}Q^{+}(v)\beta _{j-1}-v\beta
_{j}^{\ast }[Q^{+}(v),\beta _{j}]\beta _{j-1} \\
\lbrack Q^{+}(v),\beta _{j}^{\ast }] &=&-v\beta _{j+1}^{\ast
}Q^{+}(v)q^{2N_{j}}-v\beta _{j+1}^{\ast }[Q^{+}(v),\beta _{j}^{\ast }]\beta
_{j}\;.
\end{eqnarray*}%
After multiplying with $Q^{+}(v)^{-1}$ from the right we obtain (recall that 
$q^{2N_{j}}=1-\beta _{j}^{\ast }\beta _{j}$)%
\begin{equation*}
\left\{ 
\begin{array}{c}
\tilde{\beta}_{j}-\beta _{j}=v(1-\beta _{j}^{\ast }\beta _{j})\tilde{\beta}%
_{j-1}-v\beta _{j}^{\ast }(\tilde{\beta}_{j}-\beta _{j})\tilde{\beta}%
_{j-1}\medskip \\ 
\tilde{\beta}_{j}^{\ast }-\beta _{j}^{\ast }=v\beta _{j+1}^{\ast }(\tilde{%
\beta}_{j}^{\ast }\tilde{\beta}_{j}-1)-v\beta _{j+1}^{\ast }(\tilde{\beta}%
_{j}^{\ast }-\beta _{j}^{\ast })\tilde{\beta}_{j}%
\end{array}%
\right.
\end{equation*}%
from which the desired equalities in (\ref{quantumB}) now easily follow.
\end{proof}

We demonstrate that (\ref{quantumB}) allows one to compute the image of the $%
q$-boson generators via recurrence. Namely, making an analogous power series
expansion $\tilde{\beta}_{j}(v)=\sum_{r\geq 0}v^{r}\tilde{\beta}_{j,r}$ as
in the classical case (\ref{psi_expansion}) we find for the first few
coefficients $\tilde{\beta}_{j,0}=\beta _{j}$ and%
\begin{eqnarray*}
\tilde{\beta}_{j,1} &=&\beta _{j-1}(1-\beta _{j}^{\ast }\beta _{j}) \\
\tilde{\beta}_{j,2} &=&\beta _{j-2}(1-\beta _{j-1}^{\ast }\beta
_{j-1})(1-\beta _{j}^{\ast }\beta _{j})-\beta _{j-1}^{2}\beta _{j}^{\ast
}(1-\beta _{j}^{\ast }\beta _{j}) \\
\tilde{\beta}_{j,3} &=&\beta _{j-3}(1-\beta _{j-2}^{\ast }\beta
_{j-2})(1-\beta _{j-1}^{\ast }\beta _{j-1})(1-\beta _{j}^{\ast }\beta _{j})
\\
&&\quad -2\beta _{j-2}\beta _{j-1}\beta _{j}^{\ast }(1-\beta _{j-1}^{\ast
}\beta _{j-1})(1-\beta _{j}^{\ast }\beta _{j})\;.
\end{eqnarray*}%
Similarly, we find for $\tilde{\beta}_{j}^{\ast }$ that $\tilde{\beta}%
_{j,0}^{\ast }=\beta _{j}^{\ast }$ and 
\begin{eqnarray*}
\tilde{\beta}_{j,1}^{\ast } &=&-(1-\beta _{j}^{\ast }\beta _{j})\beta
_{j+1}^{\ast } \\
\tilde{\beta}_{j,2}^{\ast } &=&\beta _{j-1}\beta _{j}^{\ast }\beta
_{j+1}^{\ast }(1-\beta _{j}^{\ast }\beta _{j}) \\
\tilde{\beta}_{j,3}^{\ast } &=&-\beta _{j-1}^{2}\beta _{j}^{\ast ~2}\beta
_{j+1}^{\ast }(1-\beta _{j}^{\ast }\beta _{j})+\beta _{j-2}\beta _{j}^{\ast
}\beta _{j+1}^{\ast }(1-\beta _{j-1}^{\ast }\beta _{j-1})(1-\beta _{j}^{\ast
}\beta _{j})\;.
\end{eqnarray*}%
We can interpret the above formulae as describing discrete time steps under
the evolution (\ref{right_flow}): since the quantum relations (\ref{quantumB}%
) and classical relations (\ref{clB}) are the same, these expressions yield
the corresponding coefficients $\tilde{\psi}_{j,r}$ and $\tilde{\psi}%
_{j,r}^{\ast }$ in (\ref{recurrence}) upon replacing $\tilde{\beta}%
_{j,r}\rightarrow \tilde{\psi}_{j,r}$ and $\tilde{\beta}_{j,r}^{\ast
}\rightarrow \tilde{\psi}_{j,r}^{\ast }$.


Naturally, one wants to extend the discussion to include the $Q^{-}$%
-operator and one then needs to show that its inverse exists. However, when
proceeding along the same lines as for the $Q^{+}$-operator one must first
prove convergence of the supposed eigenvalues (\ref{Q-spec}). Since the
latter is currently an open question, we state the following preliminary
result.

\begin{lemma}
We have the following commutation relations between the generators of the $q$%
-boson algebra and $Q^{-}(u)$%
\begin{equation}
\left\{ 
\begin{array}{c}
Q^{-}(v)\beta _{j}-\beta _{j}Q^{-}(v)=-v\beta _{j-1}\left[ Q^{-}(v)+\beta
_{j}Q^{-}(v)\beta _{j}^{\ast }\right] \\ 
Q^{-}(v)\beta _{j}^{\ast }-\beta _{j}^{\ast }Q^{-}(v)=v\left[ Q^{-}(v)-\beta
_{j}Q^{-}(v)\beta _{j}^{\ast }\right] \beta _{j+1}^{\ast }%
\end{array}%
\right.  \label{pre_quantumB-}
\end{equation}
\end{lemma}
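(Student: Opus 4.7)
The plan is to mirror the proof of Theorem \ref{mainresult} almost verbatim, now with $Q^{-}$ in place of $Q^{+}$ and with the explicit coefficient formula (\ref{Q-r}) as the main computational input. The strategy is: first derive elementary commutators between the $q$-boson generators and the building blocks $a_{i}=\beta _{i}\beta _{i+1}^{\ast }$ that assemble into $Q_{r}^{-}$; then use these to produce a recurrence for $[Q_{r}^{-},\beta _{j}]$ and $[Q_{r}^{-},\beta _{j}^{\ast }]$; finally multiply by $v^{r}$ and sum over $r\geq 0$ to convert the recurrences into the asserted identities in $\mathbb{C}[\![v]\!]\otimes \mathcal{H}_{n}$.

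Concretely, I would first verify by induction on $m$, using only (\ref{qbosondef}), two basic relations of the type already employed in the proof of Theorem \ref{mainresult}, namely commutators of $a_{j-1}^{m}$ with $\beta _{j}$ and of $a_{j}^{m}$ with $\beta _{j}^{\ast }$. Since $\beta _{j}$ commutes with $\beta _{i}$ and with $\beta _{i}^{\ast }$ for $i\neq j$, the only nontrivial step is moving $\beta _{j}$ through powers of $\beta _{j}^{\ast }$ appearing inside a single block $a_{j-1}^{m}=\beta _{j-1}^{m}(\beta _{j}^{\ast })^{m}$; and similarly for $\beta _{j}^{\ast }$ passing through powers of $\beta _{j}$ inside $a_{j}^{m}=\beta _{j}^{m}(\beta _{j+1}^{\ast })^{m}$. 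The resulting identities will produce the factors $q^{2N_{j}}=1-\beta _{j}^{\ast }\beta _{j}$ just as before.

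Next I would insert these commutators term by term into (\ref{Q-r}). The key observation is that in $Q_{r}^{-}$ the generators $\beta _{i}$ all sit to the \emph{left} of the $\beta _{i}^{\ast }$'s (modulo the interleaving within individual blocks), which is the opposite ordering to $Q_{r}^{+}$ in (\ref{Q+r}). As a consequence, when $\beta _{j}$ is pushed \emph{from the right} into $Q_{r}^{-}$, the produced error term naturally contains $\beta _{j-1}$ on the far left and a residual $\beta _{j}^{\ast }$ on the far right; reassembling the remaining factors recovers a copy of $Q_{r-1}^{-}$ bracketed by $\beta _{j-1}(\cdots )\beta _{j}^{\ast }$, which is precisely the combination $\beta _{j-1}\,\beta _{j}Q^{-}\beta _{j}^{\ast }$ appearing on the right-hand side of (\ref{pre_quantumB-}). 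An analogous move for $\beta _{j}^{\ast }$ yields the second identity. Multiplying by $v^{r}$ and summing, the recurrences collapse into the two stated operator equations.

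The main obstacle is bookkeeping: the extra weights $\prod_{i}q^{\alpha _{i}(\alpha _{i}+1)}$ that appear in (\ref{Q-r}) but are absent from (\ref{Q+r}) must be tracked when the exponents $\alpha _{j-1},\alpha _{j}$ shift by $\pm 1$ under the commutators. The Gaussian shifts $q^{\alpha (\alpha +1)}\rightarrow q^{(\alpha \pm 1)\alpha },q^{(\alpha \pm 1)(\alpha \pm 2)}$ are what force the appearance of the combination $Q^{-}(v)+\beta _{j}Q^{-}(v)\beta _{j}^{\ast }$ (rather than the simpler expression one obtains in the $Q^{+}$ case), because $q^{2N_{j}}$ gets converted into a sum of two monomials $1-\beta _{j}^{\ast }\beta _{j}$ after reordering. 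Once these $q$-shifts are correctly accounted for, the rest of the derivation is mechanical and parallels the proof of Theorem \ref{mainresult}.
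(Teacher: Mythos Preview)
Your proposal is correct and follows essentially the same route as the paper: the paper's proof simply states that ``via a similar computation as in the case of $Q^{+}$'' one obtains the recurrences
\[
[Q_{r}^{-},\beta _{j}]=-\beta _{j-1}Q_{r-1}^{-}q^{2N_{j}+2}-\beta _{j-1}[Q_{r-1}^{-},\beta _{j}]\beta _{j}^{\ast },\qquad
[Q_{r}^{-},\beta _{j}^{\ast }]=q^{2N_{j}+2}Q_{r-1}^{-}\beta _{j+1}^{\ast }-\beta _{j}[Q_{r-1}^{-},\beta _{j}^{\ast }]\beta _{j+1}^{\ast },
\]
and then multiplies by $v^{r}$ and sums, exactly as you outline. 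Your remark that the Gaussian weights $\prod_i q^{\alpha_i(\alpha_i+1)}$ are the new bookkeeping issue is apt: they are what shift $q^{2N_j}$ to $q^{2N_j+2}$ in these recurrences, which then combines with $\beta_j\beta_j^{\ast}=1-q^{2N_j+2}$ to produce the stated form of (\ref{pre_quantumB-}).
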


\begin{proof}
Via a similar computation as in the case of $Q^{+}$ one shows the
commutation relations%
\begin{eqnarray*}
\lbrack Q_{r}^{-},\beta _{j}] &=&-\beta _{j-1}Q_{r-1}^{-}q^{2N_{j}+2}-\beta
_{j-1}[Q_{r-1}^{-},\beta _{j}]\beta _{j}^{\ast }~, \\
\lbrack Q_{r}^{-},\beta _{j}^{\ast }] &=&q^{2N_{j}+2}Q_{r-1}^{-}\beta
_{j+1}^{\ast }-\beta _{j}[Q_{r-1}^{-},\beta _{j}^{\ast }]\beta _{j+1}^{\ast
}\;.
\end{eqnarray*}%
after multiplying with $v^{r}$ and summing over $r$ the assertion follows.
\end{proof}

Define a one-variable family of operators $\{\hat{\beta}_{j}(v),\hat{\beta}%
_{j}^{\ast }(v)\}$ via the functional relations (compare with (\ref{clB-}))%
\begin{equation}
\left\{ 
\begin{array}{c}
\hat{\beta}_{j}-\beta _{j}=-v\beta _{j-1}(1-\beta _{j}\hat{\beta}_{j}^{\ast
})\medskip \\ 
\hat{\beta}_{j}^{\ast }-\beta _{j}^{\ast }=v~(1-\beta _{j}\hat{\beta}%
_{j}^{\ast })\hat{\beta}_{j+1}^{\ast }%
\end{array}%
\right. \;.  \label{quantumB-}
\end{equation}%
As in the previous case (\ref{quantumB}) the relations (\ref{quantumB-})
also determine $\{\hat{\beta}_{j}(v),\hat{\beta}_{j}^{\ast }(v)\}$ via
recurrence when making the analogous power series expansions $\hat{\beta}%
_{j}(v)=\sum_{r\geq 0}v^{r}\hat{\beta}_{j,r}$. Under the assumption that $%
Q^{-}(v)^{-1}$ exists, it then follows from (\ref{pre_quantumB-}) that $\hat{%
\beta}_{j}(v)=Q^{-}(v)\beta _{j}Q^{-}(v)^{-1}$ and $\hat{\beta}_{j}^{\ast
}(v)=Q^{-}(v)\beta _{j}^{\ast }Q^{-}(v)^{-1}$.

\section{From B\"{a}cklund transformations to 2D TQFT}

We now link the quantum B\"{a}cklund transform (\ref{quantumB}) to the 2D
TQFT constructed in \cite[Section 7]{CMP13}. Consider multivariate
deformations of the $q$-boson algebra by setting 
\begin{equation}
\tilde{\beta}_{j}(x_{1},\ldots ,x_{\ell }):=\left( \tprod_{i=1}^{\ell
}Q^{+}(x_{i})\right) \beta _{j}\left( \tprod_{i=1}^{\ell
}Q^{+}(x_{i})^{-1}\right)  \label{multiB}
\end{equation}%
with $\ell \leq n$. We define $\tilde{\beta}_{j}^{\ast }(x_{1},\ldots
,x_{\ell })$ analogously. We shall concentrate on the $Q^{+}$-operator
because both solutions $Q^{\pm }$ are related via the functional relation (%
\ref{qWronskian})\emph{, }which shows that the transformation induced by $%
Q^{-}$ can be constructed from $Q^{+}$. Since the $Q^{+}(x_{i})$'s in (\ref%
{multiB}) commute with each other - reflecting the analogous property (\ref%
{Bcomm}) of the classical B\"{a}cklund transform - one can expand the
product of $Q^{+}$-operators in any basis in the ring of symmetric functions
in the variables $x_{i}$. Set $P_{\lambda }(x_{1},\ldots ,x_{\ell
};q)=P_{\lambda }(x_{1},\ldots ,x_{\ell };q^{2},0)$ where $P_{\lambda
}(x_{1},\ldots ,x_{\ell };q,t)$ are the celebrated Macdonald functions \cite%
{Macdonald}. Then the expansion%
\begin{equation}
\tprod_{i=1}^{\ell }Q^{+}(x_{i})=\sum_{\lambda }(-1)^{|\lambda |}Q_{\lambda
}P_{\lambda }(x_{1},\ldots ,x_{\ell };q)  \label{fatQ}
\end{equation}%
with the sum running over all partitions with at most $\ell $ parts defines
uniquely a set of commuting polynomials $\{Q_{\lambda }\}$ in the $q$-boson
algebra $\mathcal{H}_{n}\otimes \mathbb{C}[\![z]\!]$, where $z$ is the
\textquotedblleft quasi-periodicity parameter\textquotedblright\ in (\ref{T}%
) which we treat as formal variable. Note that the operators $Q_{\lambda }$
can be defined explicitly as polynomials in the $Q_{r}^{+}$'s, see \cite[Def
3.3]{CMP13}. For the discussion in this article their implicit definition
via (\ref{fatQ}) suffices\footnote{%
The reader should note that the definition of the $Q^{+}$-operator in this
article corresponds to the operator $\boldsymbol{G}^{\prime }(-u)$ in \cite[%
Prop 3.11, Eqn (3.46)]{CMP13} which explains the extra sign factor in (\ref%
{fatQ}) compared to Eqn (3.56) in \emph{loc. cit.}}.

In the Fock representation it follows from the results in \cite[Section 7]%
{CMP13} that if $\lambda $ has a column of height $n$ then $Q_{\lambda
}=z^{m_{n}(\lambda )}Q_{\tilde{\lambda}}$ where $\tilde{\lambda}$ is the
partition obtained from $\lambda $ by removing all columns of height $n$ and 
$m_{n}(\lambda )$ is the multiplicity of these columns. Setting $z=1$
(periodic boundary conditions) we therefore restrict ourselves to $\ell =n-1$%
. We recall the following results from \cite[Thm 7.2, Lem 7.7 and Cor 7.3 ]%
{CMP13}.

\begin{theorem}
(1) The $\{Q_{\tilde{\lambda}}\}$ form a basis in the ring $\mathcal{A}_{n}$
of quantum integrals of motion generated by the $T_{r}$'s. (2) Let $\lambda $%
, $\mu ,~$be partitions with at most $n$ parts and $\mu _{1}=\nu
_{1}=\lambda _{1}=k\in \mathbb{Z}_{\geq 0}$. Then 
\begin{equation}
Q_{\tilde{\lambda}}Q_{\tilde{\mu}}=\sum_{\nu _{1}=k}N_{\tilde{\lambda}\tilde{%
\mu}}^{\tilde{\nu}}(q)Q_{\tilde{\nu}},\qquad N_{\tilde{\lambda}\tilde{\mu}}^{%
\tilde{\nu}}(q)=\langle \nu |Q_{\tilde{\lambda}}|\mu \rangle ~,
\label{N_TQFT}
\end{equation}%
where the expansion coefficients are the fusion coefficient of a 2D TQFT
with $N_{\tilde{\lambda}\tilde{\mu}}^{\tilde{\nu}}(0)$ being the $SU(n)$
WZNW fusion coefficient at level $k$. If $\mu _{1}\neq \nu _{1}$ or $\mu
_{1}\neq \lambda _{1}$ the matrix element is zero.
\end{theorem}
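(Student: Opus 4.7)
The plan is to exploit the algebraic Bethe ansatz together with Cauchy-type identities for the $q$-Whittaker functions $P_{\lambda}(\,\cdot\,;q)$, so that the product of $Q^{+}$-operators reduces to a product of symmetric polynomials evaluated at Bethe roots, from which the structure constants can be extracted.

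For part (1), I would first show that the commutative subalgebra $\mathcal{A}_{n}$ generated by $\{T_{r}\}_{r=0}^{n}$ coincides with the one generated by $\{Q_{r}^{+}\}_{r\geq 0}$: one direction follows from the coefficient form (\ref{Q2T}) of the $TQ$-equation, which recursively writes each $Q_{r}^{+}$ as a polynomial in the $T_{s}$'s, and the reverse direction follows from the quantum Wronskian/determinant relation (\ref{T=detQ}) together with (\ref{Qinverse}), expressing the $T_{r}$'s as polynomials in the $Q_{s}^{+}$'s. Since the $q$-Whittaker polynomials $\{P_{\tilde{\lambda}}(x_{1},\ldots ,x_{n-1};q)\}$ form a linear basis of the ring of symmetric polynomials in $n-1$ variables (a standard Macdonald-theoretic fact), the expansion (\ref{fatQ}) exhibits $\{Q_{\tilde{\lambda}}\}$ as a spanning set of $\mathcal{A}_{n}$. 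Linear independence is then extracted from the Bethe ansatz: on the basis (\ref{Bethe_vector}) the $Q_{\tilde{\lambda}}$ act diagonally with eigenvalues obtained by specialising $q$-Whittaker polynomials at the Bethe roots, and these eigenvalue sequences are pairwise distinct by completeness of the Bethe ansatz.

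For part (2), I would evaluate both sides of (\ref{fatQ}) on a Bethe eigenvector $|y_{\mu}\rangle$. Using (\ref{Q+spec}) one has $\prod_{i}Q^{+}(x_{i})|y_{\mu}\rangle = \prod_{i,j}(1-x_{i}y_{\mu,j})|y_{\mu}\rangle$, and the dual Cauchy identity for Macdonald polynomials specialised to $t=0$ rewrites the scalar factor as $\sum_{\lambda}(-1)^{|\lambda|}P_{\lambda}(x;q)\,\mathcal{P}_{\lambda'}(y_{\mu})$ with $\mathcal{P}_{\lambda'}$ the dual ($q\leftrightarrow t$) family. Comparing with (\ref{fatQ}) pins down the eigenvalue of $Q_{\tilde{\lambda}}$ on $|y_{\mu}\rangle$; multiplication $Q_{\tilde{\lambda}}Q_{\tilde{\mu}}$ then corresponds to pointwise multiplication of these symmetric polynomials at the Bethe roots, and re-expansion in the same family defines the structure constants $N_{\tilde{\lambda}\tilde{\mu}}^{\tilde{\nu}}(q)$. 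The identification $N_{\tilde{\lambda}\tilde{\mu}}^{\tilde{\nu}}(q) = \langle \nu | Q_{\tilde{\lambda}} | \mu \rangle$ follows by sandwiching the product expansion between $\langle \nu |$ and $| 0 \rangle$ and using $Q_{\tilde{\mu}}|0\rangle = |\mu\rangle$, which itself drops out of (\ref{fusion0}) after pairing with (\ref{can_basis}). The selection rule $\mu_{1}=\nu_{1}=\lambda_{1}=k$ is a triangularity statement: $Q^{+}$ preserves the highest part of a partition (the ``level''), which can be read off from the explicit form (\ref{Q+r}) together with the action of the creation operators $\beta_{j}^{\ast}$ on the basis (\ref{can_basis}).

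The main obstacle will be the final claim that $N_{\tilde{\lambda}\tilde{\mu}}^{\tilde{\nu}}(0)$ coincides with the $SU(n)$-WZNW fusion coefficient at level $k$. For this one invokes a Verlinde-type argument: at $q=0$ the Bethe equations (\ref{BAE}) collapse to $y_{i}^{n}=z$ whose $n$ solutions are pairwise distinct roots of $z$, the $q$-Whittaker polynomials degenerate to ordinary Schur polynomials, and the evaluation-and-multiplication scheme above becomes precisely the Kac--Peterson/Verlinde formula for $\widehat{sl}_{n}$ at level $k$. A secondary technical point is to establish that $N_{\tilde{\lambda}\tilde{\mu}}^{\tilde{\nu}}(q) \in \mathbb{Z}_{\geq 0}[q]$ as full polynomials in $q$, rather than only at $q=0$; this follows from the matrix-element expression together with the fact that the Boltzmann weights attached to the vertex configurations in Figure \ref{fig:Qvertexmodels} are polynomials in $q^{2}$ with non-negative integer coefficients, so that any matrix element $\langle \nu | Q_{\tilde{\lambda}} | \mu \rangle$ inherits the same positivity.
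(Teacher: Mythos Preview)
The paper does not give its own proof of this theorem; it is recalled verbatim from \cite[Thm 7.2, Lem 7.7 and Cor 7.3]{CMP13} and used as a black box. So there is nothing to compare against here, only the question of whether your sketch stands on its own.

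Most of your outline is in the right spirit (dual Cauchy for $q$-Whittaker functions, diagonalisation on Bethe vectors, re-expansion to extract structure constants), but there are two concrete slips. First, the operator--state correspondence is not $Q_{\tilde\mu}|0\rangle=|\mu\rangle$; the vacuum $|0\rangle$ is annihilated by all $\beta_j$ and $Q_{\tilde\mu}$ preserves particle number, so acting on $|0\rangle$ just returns a multiple of $|0\rangle$. The correct statement, used explicitly later in the paper, is $|\lambda\rangle = Q_{\tilde\lambda}|k^{n}\rangle$ where $|k^{n}\rangle$ is the $k$-particle state concentrated at site $n$. Your derivation of $N_{\tilde\lambda\tilde\mu}^{\tilde\nu}=\langle\nu|Q_{\tilde\lambda}|\mu\rangle$ should be rerouted through this vector. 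Second, the $q\to 0$ limit of the Bethe equations (\ref{BAE}) does not collapse to $y_i^{n}=z$: the right-hand side becomes $z\prod_{j\neq i}(-y_j/y_i)$, so the roots are coupled and the identification with the Kac--Peterson/Verlinde data requires the actual analysis of the phase-model limit carried out in \cite{KS10,CMP13}, not a one-line degeneration.

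Finally, your last paragraph proves more than the theorem asserts. The statement only claims that the $N_{\tilde\lambda\tilde\mu}^{\tilde\nu}(q)$ are TQFT fusion coefficients and that their constant term is the WZNW number; neither the theorem nor the introduction claims positivity, and your Boltzmann-weight argument for it does not survive the sign $(-u)^{a}$ in the $L^{+}$ weights together with the $(-1)^{|\lambda|}$ in (\ref{fatQ}). Drop that claim or supply an independent argument.
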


\subsection{A brief summary of 2D TQFT}

For the sake of completeness and to make this article accessible to a wider
audience we briefly summarise the definition of a 2D TQFT, for the precise
definition we refer the interested reader to the abundant literature on the
subject; see e.g. the text book \cite{Kock03}.

In modern mathematical language a 2D TQFT, is a symmetric monoidal functor $%
Z:(\limfunc{2Cob},\sqcup )\rightarrow (\limfunc{Vect}_{\mathbb{\Bbbk }%
},\otimes )$ from the category of 2-cobordisms into the category of
finite-dimensional vector spaces over some base field $\Bbbk $. The objects
in the category $\limfunc{2Cob}$ are circles $\mathbb{S}^{1}$, closed
strings, and we define a product on them via the disjoint union $\sqcup $.
The functor $Z$ maps each closed string $\mathbb{S}^{1}$ onto a vector space 
$Z(\mathbb{S}^{1})$ while preserving the product structure, i.e. $Z(\mathbb{S%
}^{1}\sqcup \cdots \sqcup \mathbb{S}^{1})=Z(\mathbb{S}^{1})\otimes \cdots
\otimes Z(\mathbb{S}^{1})$ where $\otimes $ denotes the ordinary tensor
product of vector spaces. One includes the case where the circle shrinks to
a point $\limfunc{pt}$ and sets $Z(\limfunc{pt})=\Bbbk $. This explains the
monoidal property of $Z$. Asking $Z$ to be \emph{symmetric} refers to the
fact that each circle carries an orientation: let $\mathbb{S}^{1}$ be
anti-clockwise oriented and denote by $\mathbb{\bar{S}}^{1}$ the clockwise
oriented circle. Then one demands that a change in orientation in $\limfunc{%
2Cob}$ corresponds to taking the dual space in $\limfunc{Vect}_{\mathbb{%
\Bbbk }}$, i.e. $Z(\mathbb{\bar{S}}^{1})=Z(\mathbb{S}^{1})^{\ast }$.

\begin{figure}[tbp]
\begin{equation*}
\includegraphics[scale=0.45]{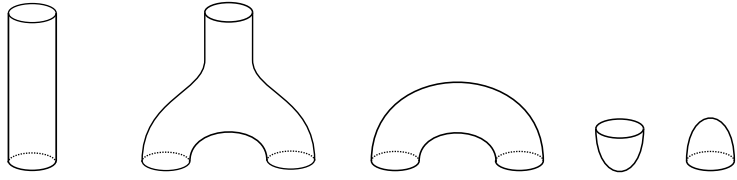}
\end{equation*}%
\caption{Depiction of elementary 2-cobordisms. From left to right: the
identity map $\mathrm{Id}:Z(\mathbb{S}^{1})\rightarrow Z(\mathbb{S}^{1})$,
the `pair of pants' $m:Z(\mathbb{S}^{1})\otimes Z(\mathbb{S}^{1})\rightarrow
Z(\mathbb{S}^{1})$, the `elbow' $:Z(\mathbb{S}^{1})\otimes Z(\mathbb{S}%
^{1})\rightarrow\Bbbk$, the `cup' ${\mathbf{1}}:\Bbbk\rightarrow Z(\mathbb{S}%
^{1})$ and the `cap' $\mathrm{tr}:Z(\mathbb{S}^{1})\rightarrow \Bbbk$. }
\label{fig:2cob}
\end{figure}

The morphisms between objects in $\limfunc{2Cob}$ are two-dimensional
compact manifolds which interpolate between two sets of circles, one
representing the \textquotedblleft in-states\textquotedblright\ $\underset{m}%
{\underbrace{\mathbb{S}^{1}\sqcup \cdots \sqcup \mathbb{S}^{1}}}$ and the
other the \textquotedblleft out-states\textquotedblright\ $\underset{n}{%
\underbrace{\mathbb{S}^{1}\sqcup \cdots \sqcup \mathbb{S}^{1}}}$\ in physics
terminology. The morphisms in $\limfunc{2Cob}$ are called 2-cobordisms and
can be thought of as a 2-dimensional analogue of Feynman diagrams with $m$
incoming particles and $n$ outgoing ones. Each such 2-cobordism is mapped
under $Z$ to an element in $\limfunc{Hom}(Z(\mathbb{S}^{1})^{\otimes m},Z(%
\mathbb{S}^{1})^{\otimes n})$, i.e. a linear map between two tensor products
of the vector space $Z(\mathbb{S}^{1})$.

Because of the functorial property $Z$ is fixed by specifying its values on
certain \textquotedblleft elementary\textquotedblright\ 2-cobordisms shown
in Figure \ref{fig:2cob} which introduce on $Z(\mathbb{S}^{1})$ a product $%
m:Z(\mathbb{S}^{1})\otimes Z(\mathbb{S}^{1})\rightarrow Z(\mathbb{S}^{1})$,
an identity element $\mathbf{1}:\Bbbk \rightarrow Z(\mathbb{S}^{1})$ and an
invariant bilinear form $\langle \cdot |\cdot \rangle :Z(\mathbb{S}%
^{1})\otimes Z(\mathbb{S}^{1})\rightarrow \Bbbk $. Instead of the latter,
one often considers the trace functional given by $\limfunc{tr}(\cdot
)=\langle \boldsymbol{1}|\cdot \rangle $ which defines a map $Z(\mathbb{S}%
^{1})\rightarrow \Bbbk $. These maps endow $Z(\mathbb{S}^{1})$ with the
structure of a \emph{Frobenius algebra }and the latter is called \emph{%
symmetric} if the product is commutative.

\subsection{From $Q$-operators to TQFT fusion matrices}

Hence, in order to prove that (\ref{N_TQFT}) defines a 2D TQFT we need to
endow the algebra $\mathcal{A}_{n}$ generated from the quantum integrals of
motion (\ref{T_r}) with the structure of a symmetric Frobenius algebra. From
(\ref{T_r}) and (\ref{Q+r}), (\ref{Q-r}) we infer that $\mathcal{A}_{n}$
block decomposes over the subspaces $\mathcal{F}_{k}^{n}$ of fixed particle
number,%
\begin{equation*}
\mathcal{A}_{n}=\tbigoplus_{k\geq 0}\mathcal{A}_{n,k},\qquad \mathcal{A}%
_{n,k}\subset \limfunc{End}\mathcal{F}_{k}^{n}
\end{equation*}%
with $\dim \mathcal{F}_{k}^{n}=\dim \mathcal{A}_{n,k}=\binom{n+k-1}{k}$. Fix 
$k\in \mathbb{N}$ and set $\Bbbk =\mathbb{C}\{\!\{q\}\!\}$, the field of
Puiseux series in $q$. Consider the vector space $Z_{k}(\mathbb{S}^{1})=%
\mathcal{A}_{n,k}$ together with the matrix product (\ref{N_TQFT}) in $%
\mathcal{A}_{n,k}\subset \limfunc{End}\mathcal{F}_{k}^{n}$ and 
\begin{equation}
\mathbf{1}_{k}:=Q_{\emptyset }\qquad \text{and}\qquad \langle Q_{\tilde{%
\lambda}}|Q_{\tilde{\mu}}\rangle =\delta _{\tilde{\lambda}^{\ast }\tilde{\mu}%
}/b_{\tilde{\lambda}}(q)\ ,  \label{Frob_def}
\end{equation}%
where $\tilde{\lambda}^{\ast }=(k-\tilde{\lambda}_{n-1},\ldots ,k-\tilde{%
\lambda}_{2},k-\tilde{\lambda}_{1})$ and $b_{\tilde{\lambda}}=\prod_{j\geq
1}(q^{2})_{m_{j}(\tilde{\lambda})}$ with $m_{j}(\tilde{\lambda})$ being the
multiplicity of columns of height $j$ in the reduced partition $\tilde{%
\lambda}$.

\begin{theorem}[{\protect\cite[Thm 7.2 \& Cor 7.3]{CMP13}}]
The maps (\ref{Frob_def}) together with (\ref{N_TQFT}) define a commutative
Frobenius algebra, that is $Z_{k}:(\limfunc{2Cob},\sqcup )\rightarrow (%
\limfunc{Vect}_{\Bbbk },\otimes )$ with $Z_{k}(\mathbb{S}^{1})=\mathcal{A}%
_{n,k}$ is a symmetric monoidal functor.
\end{theorem}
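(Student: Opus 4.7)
The plan is to check that $(\mathcal{A}_{n,k}, \cdot, \mathbf{1}_k, \langle \cdot, \cdot \rangle)$ is a commutative Frobenius algebra over $\Bbbk$; the desired symmetric monoidal functor $Z_k$ then follows from the classical equivalence between commutative Frobenius algebras and 2D TQFTs (see e.g. Kock's textbook).

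First I would assemble the algebra structure. Part (1) of the cited theorem gives that $\{Q_{\tilde{\lambda}}\}$ is a $\Bbbk$-basis of $\mathcal{A}_{n,k}$, and the product formula (\ref{N_TQFT}) shows that this basis is closed under the multiplication inherited from $\limfunc{End}\mathcal{F}_{k}^{n}$. Associativity is automatic. Commutativity is inherited from $[Q^{+}(u),Q^{+}(v)]=0$ in (\ref{all_commute}) by comparing $P_{\tilde{\lambda}}(x;q)P_{\tilde{\mu}}(y;q)$-coefficients on both sides of the commutation of the full products $\tprod_{i}Q^{+}(x_{i})\tprod_{j}Q^{+}(y_{j})$. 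The unit $\mathbf{1}_{k}=Q_{\emptyset}$ coincides with $\mathrm{id}_{\mathcal{F}_{k}^{n}}$ by inspection of the $r=0$ term in (\ref{Q+r}), using $P_{\emptyset}=1$.

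Next I would verify the Frobenius axioms for the bilinear form (\ref{Frob_def}). Non-degeneracy is immediate: the map $\tilde{\lambda}\mapsto\tilde{\lambda}^{\ast}$ is an involutive bijection on the set of reduced partitions with at most $n-1$ parts and first part equal to $k$, so the Gram matrix of $\langle\cdot,\cdot\rangle$ in the basis $\{Q_{\tilde{\lambda}}\}$ is a signed permutation matrix with invertible diagonal entries $1/b_{\tilde{\lambda}}(q)$. The main obstacle is the invariance identity $\langle Q_{\tilde{\lambda}}Q_{\tilde{\mu}},Q_{\tilde{\nu}}\rangle=\langle Q_{\tilde{\lambda}},Q_{\tilde{\mu}}Q_{\tilde{\nu}}\rangle$, which via (\ref{N_TQFT}) reduces to the cyclic symmetry
\[
b_{\tilde{\nu}^{\ast}}(q)\,N_{\tilde{\lambda}\tilde{\mu}}^{\tilde{\nu}^{\ast}}(q) \;=\; b_{\tilde{\lambda}^{\ast}}(q)\,N_{\tilde{\mu}\tilde{\nu}}^{\tilde{\lambda}^{\ast}}(q) \,.
\]
I would establish this by combining the matrix-element formula $N_{\tilde{\lambda}\tilde{\mu}}^{\tilde{\nu}}(q)=\langle\nu|Q_{\tilde{\lambda}}|\mu\rangle$ with the Fock-space inner product on $\mathcal{F}_{k}^{n}$ induced by $\langle\lambda|\mapsto b_{\lambda}(q)|\lambda\rangle$: both sides then become matrix coefficients of the triple product $Q_{\tilde{\lambda}}Q_{\tilde{\mu}}Q_{\tilde{\nu}}$, and the identity follows from commutativity once one checks that the involution $\lambda\mapsto\lambda^{\ast}$ implements the duality pairing consistently on Fock space. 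This last point is where most of the real work lies; a clean way to handle it is via the quantum Wronskian (\ref{qWronskian}) together with the explicit expressions (\ref{Q+r}), (\ref{Q-r}), which tie $Q^{+}$ to the inverse/dual structure relative to this involution.

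Having verified that $\mathcal{A}_{n,k}$ is a commutative Frobenius algebra, the 2D TQFT functor $Z_{k}$ is produced in the standard manner. Assign $Z_{k}(\mathbb{S}^{1})=\mathcal{A}_{n,k}$ and $Z_{k}(\mathrm{pt})=\Bbbk$, and let $Z_{k}$ send the elementary 2-cobordisms in Figure \ref{fig:2cob} respectively to the product, the unit $\mathbf{1}_{k}$, the invariant bilinear form, and the trace $\limfunc{tr}(\cdot)=\langle\mathbf{1}_{k},\cdot\rangle$. Using the classification of compact oriented surfaces every 2-cobordism decomposes into these elementary pieces, and the Frobenius relations just verified guarantee that the assignment descends to equivalence classes. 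Symmetric monoidality of $Z_{k}$ then follows from commutativity of the product, which precisely matches the $\mathbb{Z}_{2}$-symmetry of the pair-of-pants cobordism.
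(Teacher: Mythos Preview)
The paper does not actually prove this theorem: it is quoted verbatim from \cite[Thm 7.2 \& Cor 7.3]{CMP13} and no argument is supplied here. So there is no ``paper's own proof'' to compare against; the result is imported wholesale from the earlier article.

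Your outline has the right architecture --- reduce to checking that $\mathcal{A}_{n,k}$ is a commutative Frobenius algebra and then invoke the standard equivalence with 2D TQFTs --- and the parts you treat in detail (basis, associativity, commutativity, unit, non-degeneracy of the pairing) are fine. But the crux, namely the invariance $\langle Q_{\tilde\lambda}Q_{\tilde\mu},Q_{\tilde\nu}\rangle=\langle Q_{\tilde\lambda},Q_{\tilde\mu}Q_{\tilde\nu}\rangle$, is not really proved: you reduce it correctly to a cyclic identity on the $N$'s, then say the real work is checking that $\lambda\mapsto\lambda^{\ast}$ implements the duality consistently, and propose to do this via the quantum Wronskian (\ref{qWronskian}) together with (\ref{Q+r}), (\ref{Q-r}). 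That route is speculative. The Wronskian relates $Q^{+}$ and $Q^{-}$, not $Q^{+}$ to its Fock-space adjoint, and it is the latter that governs how the involution $\lambda\mapsto\lambda^{\ast}$ interacts with matrix elements $\langle\nu|Q_{\tilde\lambda}|\mu\rangle$. In \cite{CMP13} the Frobenius structure is established by rather different means (the idempotent/Bethe basis and the identification of $\mathcal{A}_{n,k}$ with a quotient of the spherical Hecke algebra, cf.\ \cite[Thm 7.3]{CMP13}), which gives invariance of the trace form directly. If you want a self-contained argument along your lines, you would need to prove explicitly that $b_{\tilde\nu}(q)\langle\nu|Q_{\tilde\lambda}|\mu\rangle$ is symmetric under cyclic permutation of $(\tilde\lambda,\tilde\mu,\tilde\nu^{\ast})$; this does not drop out of the Wronskian.
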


\begin{remark}
\textrm{The fusion coefficients (\ref{N_TQFT}) can be explicitly computed
using the algorithm in \cite[Section 7.3.1]{CMP13}. The latter is based on
an algebra isomorphism which maps $\mathcal{A}_{n,k}$ on a quotient of the
spherical Hecke algebra; see \cite[Thm 7.3]{CMP13} for details.
Alternatively, one can extract $N_{\tilde{\lambda}\tilde{\mu}}^{\tilde{\nu}%
}(q)$ from the combinatorial definition of cylindric generalisations of skew 
$q$-Whittaker functions: taking matrix elements in (\ref{fatQ}) one obtains
symmetric functions $P_{\lambda /d/\mu }$ via%
\begin{equation*}
\sum_{d\geq 0}z^{d}P_{\lambda /d/\mu }(x_{1},\ldots ,x_{n-1};q)=\langle
\lambda |\tprod_{i=1}^{n-1}Q^{+}(x_{i})|\mu \rangle ~.
\end{equation*}%
Here $\lambda /d/\mu $ denotes a cylindric shape, a periodically continued
skew diagram, and each $P_{\lambda /d/\mu }$ equals a sum over $q$-weighted
cylindric semi-standard tableaux; see \cite[Section 6.2]{CMP13}. }
\end{remark}

\subsection{The quantum B\"{a}cklund transform in terms of fusion}

The following result connects the fusion coefficients (\ref{N_TQFT}) for the
simplest, one-variable case, with the quantum B\"{a}cklund transform (\ref%
{quantumB}).

\begin{proposition}
\label{prop:linear} The solution of the quantum B\"{a}cklund transform (\ref%
{quantumB}) satisfy the following linear systems involving the fusion
coefficients (\ref{N_TQFT}),%
\begin{eqnarray}
\sum_{a+b=r}\sum_{\rho }(-1)^{a}N_{(a)\tilde{\mu}}^{\tilde{\rho}}(q)~\langle
\lambda |\tilde{\beta}_{j,b}|\rho \rangle &=&N_{(r)\widetilde{\beta _{j}\mu }%
}^{\tilde{\lambda}}(q)  \label{fusion1} \\
\sum_{a+b=r}\sum_{\rho }\frac{(-1)^{a}N_{(a)\tilde{\mu}}^{\tilde{\rho}}(q)}{%
1-q^{2m_{j}(\mu )+2}}~\langle \lambda |\tilde{\beta}_{j,b}^{\ast }|\rho
\rangle &=&N_{(r)\widetilde{\beta _{j}^{\ast }\mu }}^{\tilde{\lambda}}(q)\ ,
\label{fusion2}
\end{eqnarray}%
where the notation $\beta _{j}\mu $ and $\beta _{j}^{\ast }\mu $ stand for
the partitions obtained by respectively deleting and inserting a column of
height $j$ in the Young diagram of $\mu $.
\end{proposition}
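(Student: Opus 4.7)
The plan is to derive both identities directly from the defining intertwining relations
\[
\tilde{\beta}_j(v)\,Q^+(v) = Q^+(v)\,\beta_j, \qquad \tilde{\beta}_j^{\ast}(v)\,Q^+(v) = Q^+(v)\,\beta_j^{\ast},
\]
by sandwiching them between $\langle\lambda|$ and $|\mu\rangle$, inserting a resolution of the identity, and then extracting coefficients of $v^r$.

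First I would handle the right-hand sides. Using the definition (\ref{can_basis}) of the basis $|\mu\rangle$ together with the $q$-boson commutation relation $\beta_j(\beta_j^\ast)^m = q^{2m}(\beta_j^\ast)^m \beta_j + (1-q^{2m})(\beta_j^\ast)^{m-1}$ and $\beta_j|0\rangle=0$, a direct calculation gives
\[
\beta_j|\mu\rangle = |\beta_j\mu\rangle, \qquad \beta_j^{\ast}|\mu\rangle = (1-q^{2 m_j(\mu)+2})\,|\beta_j^{\ast}\mu\rangle,
\]
which precisely accounts for the absence/presence of the normalising factor $1-q^{2m_j(\mu)+2}$ in (\ref{fusion1}) vs.\ (\ref{fusion2}). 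Hence the right-hand sides of the intertwining identities become $\langle\lambda|Q^+(v)|\beta_j\mu\rangle$ and $(1-q^{2m_j(\mu)+2})\,\langle\lambda|Q^+(v)|\beta_j^{\ast}\mu\rangle$, respectively.

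Next I would expand $Q^+(v)$ in the basis $\{Q_{\tilde\nu}\}$ using the one-variable case of (\ref{fatQ}). Since for a single variable the $q$-Whittaker function reduces to $P_{(a)}(v;q)=v^a$, only one-row partitions survive and one obtains $Q^+(v) = \sum_{a\geq 0}(-1)^a v^a Q_{(a)}$, where $\widetilde{(a)}=(a)$. Applying this expansion to both the factor $Q^+(v)$ on the left-hand side (after inserting $\sum_\rho|\rho\rangle\langle\rho|$) and to the right-hand side, and using the key identification $\langle\rho|Q_{(a)}|\mu\rangle = N^{\tilde\rho}_{(a)\tilde{\mu}}(q)$ from the theorem of Section~6 recalled above (\ref{N_TQFT}), gives
\[
\langle\lambda|\tilde{\beta}_j(v)Q^+(v)|\mu\rangle = \sum_{\rho}\Bigl(\sum_b v^b\langle\lambda|\tilde{\beta}_{j,b}|\rho\rangle\Bigr)\Bigl(\sum_a (-1)^a v^a N^{\tilde\rho}_{(a)\tilde{\mu}}(q)\Bigr),
\]
and analogously for the right-hand side in terms of $N^{\tilde\lambda}_{(r)\widetilde{\beta_j\mu}}(q)$. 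Equating coefficients of $v^r$ yields (\ref{fusion1}); the same argument applied to the $\beta_j^{\ast}$ intertwiner, together with the extra scalar $1-q^{2m_j(\mu)+2}$ coming from $\beta_j^{\ast}|\mu\rangle$, yields (\ref{fusion2}) after dividing through.

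Conceptually the proof is routine once the correct identifications are in place; the only delicate point is the careful tracking of the sign convention $Q_{(r)}=(-1)^r Q_r^+$ that comes from (\ref{fatQ}) (and the related footnote) and the consistent use of $\widetilde{(a)}=(a)$ in the fusion coefficients. This bookkeeping, rather than any substantive analytic step, will be the main obstacle to a crisp presentation.
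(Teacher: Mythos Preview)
Your argument is correct and follows exactly the same route as the paper: the paper's own proof simply notes that $N_{(r)\tilde{\mu}}^{\tilde{\lambda}}=(-1)^{r}\langle\lambda|Q_{r}^{+}|\mu\rangle$ from (\ref{fatQ}) and (\ref{N_TQFT}), rewrites the definition (\ref{beta_expansion}) as the intertwining relation $\tilde{\beta}_{j}(v)Q^{+}(v)=Q^{+}(v)\beta_{j}$, and then takes matrix elements and expands in $v$, with (\ref{fusion2}) handled identically using $\beta_{j}^{\ast}|\mu\rangle=(1-q^{2m_{j}(\mu)+2})|\beta_{j}^{\ast}\mu\rangle$. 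Your write-up is a more explicit version of this, including the resolution of the identity and the verification $\beta_{j}|\mu\rangle=|\beta_{j}\mu\rangle$, but there is no substantive difference in approach.
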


\begin{proof}
This is an immediate consequence of the definition (\ref{fatQ}) of the
fusion matrices and (\ref{N_TQFT}) which yields $N_{(r)\tilde{\mu}}^{\tilde{%
\lambda}}=(-1)^{r}\langle \lambda |Q_{r}^{+}|\mu \rangle $. On the other
hand the definition (\ref{beta_expansion}) implies $\tilde{\beta}%
_{j}(v)Q^{+}(v)=Q^{+}(v)\beta _{j}$. Taking matrix elements in the latter
equality and then expanding in the variable $v$ yields (\ref{fusion1}). The
proof of (\ref{fusion2}) follows along the same lines observing that $\beta
_{j}^{\ast }|\mu \rangle =(1-q^{2m_{j}(\mu )+2})|\beta _{j}^{\ast }\mu
\rangle $ according to (\ref{can_basis}).
\end{proof}

For small particle numbers $k$ there is a third way to compute the fusion
coefficients (\ref{N_TQFT}) using the following \textquotedblleft
operator-state correspondence\textquotedblright\ in the Fock space \cite[Cor
7.3]{CMP13}: let $|k^{n}\rangle =\frac{(\beta _{n}^{\ast })^{k}}{(q)_{k}}%
|0\rangle $ denote the basis vector (\ref{can_basis}) with $k$ particles
sitting at site $n$ and all other sites being empty. Then the basis vectors (%
\ref{can_basis}) with $\lambda _{1}=k$ are obtained by $|\lambda \rangle =Q_{%
\tilde{\lambda}}|k^{n}\rangle $, where $\tilde{\lambda}$ is the partition
obtained from $\lambda $ by removing all columns of height $n$. Thus, we can
introduce a fusion product on the $k$-particle space $\mathcal{F}_{k}^{n}$
by setting $|\lambda \rangle \ast |\mu \rangle =Q_{\tilde{\lambda}}|\mu
\rangle $ and the resulting algebra is isomorphic to the TQFT (\ref{N_TQFT}%
). In order to obtain the fusion coefficient we then need to explicitly
compute $Q_{\tilde{\lambda}}$ as a polynomial in the $Q_{r}^{+}$'s and let
them act on the state vector $|\mu \rangle $. In the final step one then
removes all columns of height $n$ in the partitions in the expansion of $Q_{%
\tilde{\lambda}}|\mu \rangle $ as these correspond to the trivial
representation of $SU(n)$. We demonstrate this on a small example.

\begin{example}
\textrm{Set $n=3$. We wish to relate the fusion coefficients (\ref{N_TQFT})
for particle numbers $k=2,3$ via the relation (\ref{fusion2}). In the latter
equation choose $j=1$ and $\mu =(2,2,1)=\Yvcentermath1{\tiny \yng(2,2,1)}\;$%
. Thus, $\beta _{1}^{\ast }\mu =\Yvcentermath1{\tiny \yng(3,2,1)}$ and the
reduced partition with columns of height $n=3$ deleted is simply $\tilde{\mu}%
=\Yvcentermath1{\tiny \yng(1,1)}\;$. The state $|\mu \rangle $ is a
2-particle state with one particle sitting at site 2 and the other at site
3. The state $\beta _{1}^{\ast }|\mu \rangle =(1-q^{2})|3,2,1\rangle $ is a
3-particle state, with one particle sitting at each site. According to (\ref%
{fusion2}) we have the identity 
\begin{equation}
(1-q^{2})N_{(2)(2,1)}^{\tilde{\lambda}}=\langle \lambda |\tilde{\beta}%
_{1,2}^{\ast }|\mu \rangle -\sum_{\rho }\langle \lambda |\tilde{\beta}%
_{1,1}^{\ast }|\rho \rangle N_{(1)\tilde{\mu}}^{\tilde{\rho}}+\sum_{\rho
}\langle \lambda |\tilde{\beta}_{1,0}^{\ast }|\rho \rangle N_{(2)\tilde{\mu}%
}^{\tilde{\rho}}  \label{Ex}
\end{equation}%
with $\tilde{\beta}_{1,0}^{\ast }=\beta _{1}^{\ast }$, $\tilde{\beta}%
_{1,1}^{\ast }=-\beta _{2}^{\ast }q^{2N_{1}}$ and $\tilde{\beta}_{1,2}^{\ast
}=\beta _{1}^{\ast }\beta _{2}^{\ast }\beta _{3}q^{2N_{1}}$. The first term
on the right hand side in the above equation gives 
\begin{equation*}
\tilde{\beta}_{1,2}^{\ast }|\mu \rangle =\beta _{1}^{\ast }\beta _{2}^{\ast
}\beta _{3}q^{2N_{1}}|\,\Yvcentermath1{\tiny \yng(2,2,1)}\,\rangle
\;=(1-q^{2})(1-q^{4})|\,\Yvcentermath1{\tiny \yng(3,2)}\,\rangle \;.
\end{equation*}%
That is, the first matrix element in (\ref{Ex}) is nonzero only for $\lambda
=(3,2)$. For the other terms we need to compute the 2-particle fusion
coefficients $N_{(1)\tilde{\mu}}^{\tilde{\rho}}$ and $N_{(2)\tilde{\mu}}^{%
\tilde{\rho}}$ first. This can be done using the operator-state
correspondence $|\lambda \rangle =Q_{\tilde{\lambda}}|3,3\rangle $ or the
algorithm in \cite{CMP13}. One finds (we simply write the Young diagrams of
the partitions instead of the $Q_{\tilde{\lambda}}$'s) 
\begin{equation*}
\Yvcentermath1{\tiny \yng(2)\;\ast \;\yng(1,1)}=\Yvcentermath1{\tiny \yng(1)}%
\qquad \text{and}\qquad \Yvcentermath1{\tiny \yng(1)\;\ast \;\yng(1,1)}=%
\Yvcentermath1\,{\tiny \yng(2,1)}\;+(1+q^{2})~\emptyset \;.
\end{equation*}%
Setting $q=0$ one can verify that these give the correct WZNW fusion
coefficients of $SU(3)$ at level $k=2$. Thus, we obtain%
\begin{eqnarray*}
\sum_{\rho }\tilde{\beta}_{1,1}^{\ast }|\rho \rangle N_{(1)\tilde{\mu}}^{%
\tilde{\rho}} &=&-(1-q^{2})(1+q^{2})|\,\Yvcentermath1{\tiny \yng(3,3,2)}%
\,\rangle -(1-q^{4})q^{2}|\,\Yvcentermath1{\tiny \yng(3,2)}\,\rangle \\
\sum_{\rho }\tilde{\beta}_{1,0}^{\ast }|\rho \rangle N_{(2)\tilde{\mu}}^{%
\tilde{\rho}} &=&(1-q^{4})|\,\Yvcentermath1{\tiny \yng(3,1,1)}\,\rangle
\end{eqnarray*}%
for the second and third term on the right hand side of (\ref{Ex}). Adding
all three terms we obtain the 3-particle fusion coefficients $N_{(2)(2,1)}^{%
\tilde{\lambda}}$ and, thus, arrive at the 3-particle fusion product (after
deleting columns of height $n=3$ in all partitions) 
\begin{equation*}
\Yvcentermath1{\tiny \yng(2)\;\ast \;\yng(2,1)}=(1+q^{2})~\left( %
\Yvcentermath1\,{\tiny \yng(1)}\;+~\Yvcentermath1\,{\tiny \yng(1,1)}\;+~%
\Yvcentermath1{\tiny \yng(3,2)}\right)
\end{equation*}%
Again one can check that in the limit $q=0$ these are the correct $SU(3)$%
-WZNW fusion coefficients at level $k=3$. As demonstrated on this simple
example one can use the linear system (\ref{fusion2}) to compute fusion
coefficients recursively. }
\end{example}

\section{Conclusions}

In this article we have added new insight to the quantisation of the
Ablowitz-Ladik chain (\ref{ALH}) by using Baxter's concept of the $Q$%
-operator.

\subsection{Summary of results}

The classical system (\ref{ALH}) possesses a set of integral of motions, a
Poisson commutative subalgebra $\mathcal{A}_{n}^{\hbar =0}$, generated by
the trace of the Lax operator (\ref{classicL}), $\{T_{r},T_{r^{\prime }}\}=0$%
. These integrals of motion generate additional \textquotedblleft
elementary\textquotedblright\ flows which can be used to construct the
physical time flow given by the Hamiltonian (\ref{clH}). Suris constructed
in \cite{Suris97} discrete approximations of these elementary flows in terms
of B\"{a}cklund transformations given in terms of a set of functional
relations (\ref{clB}), (\ref{clB-}). In \cite{Kulish81} the $q$-deformed
oscillator or $q$-boson algebra $\mathcal{H}_{n}$ was considered as a
quantisation of the Poisson algebra (\ref{Poisson}) underlying the classical
chain.

In this article we have discussed the quantum analogues of the classical B%
\"{a}cklund transformations using the $q$-boson algebra. We have stated the
explicit construction of two infinite families $\{Q_{r}^{\pm }\}_{r\in 
\mathbb{N}}$ of elements in the $q$-boson algebra, see (\ref{Q+r}) and (\ref%
{Q-r}), which commute among themselves and with the quantum integrals of
motion $\mathcal{A}_{n}$ generated from the higher Hamiltonians $%
\{T_{r}\}_{r=1}^{n}$ of the quantised Ablowitz-Ladik chain. Introducing the
associated current operators $Q^{\pm }(v)$ we derived their algebraic
dependence in terms of a set of functional equations: Baxter's $TQ$-equation
(\ref{TQ+}) and (\ref{TQ-}), which is a second order difference equation,
and the related Wronskian (\ref{qWronskian}) showing that $Q^{-}$ is closely
related to the inverse of $Q^{+}$ in the limit of large lattice sites. We
showed that these relations hold on the level of the $q$-boson algebra
without the need to specify a representation. Once we fixed the Fock space
representation (\ref{Fock}) we were able to show in addition that $Q^{+}(v)$
is invertible for the periodic chain by employing the completeness of the
Bethe ansatz for the $q$-boson model proved independently in \cite%
{vanDiejen06}\ for $-1<q<1$ and in \cite{CMP13} for $q$ generic. This
allowed us to consider the one-variable family of similarity transformations 
$\mathcal{O}\mapsto Q^{+}(v)\mathcal{O}Q^{+}(v)^{-1}$. Applying these
transformations to the generators of the (noncommutative) $q$-boson algebra (%
\ref{qbosondef}) we have shown that their image obey a set of functional
relations (\ref{quantumB}) which match exactly the relations (\ref{clB}) of
the classical B\"{a}cklund transformation. This is an unexpected result: the
discrete time flow (\ref{right_flow}) of the classical and the quantum
variables is the same.\medskip

An open question is the extension of this result to the $Q^{-}(u)$-operator
for which further investigations are needed to establish the existence of
its inverse. As a preliminary result we showed that assuming the existence
of $Q^{-}(v)^{-1}$ a second set of functional relations (\ref{quantumB-})
follows which again match the classical relations (\ref{clB-}) associated
with the inverse of the classical B\"{a}cklund transformation.

In the final section we then considered multivariate versions of the B\"{a}%
cklund transformation via composition. The novel result is that they
generate the fusion matrices of a 2D TQFT constructed in \cite{CMP13}. This
TQFT is a $q$-deformed version of the $SU(n)_{k}$-WZNW fusion ring or
Verlinde algebra which is recovered in the strong coupling limit $%
q\rightarrow 0$; the related combinatorial description of the WZNW fusion
ring and its relation to quantum cohomology can be found in \cite{KS10} and 
\cite{Korff10}. Because of this limit, it was suggested in \cite[Section 8]%
{CMP13} that this TQFT might be related to the one considered by Teleman 
\cite{Teleman} in the context of twisted $K$-theory.

\subsection{Integrability and topological quantum field theory}

We explain how our findings differ from previously known connections between
classical and quantum integrable systems and TQFTs starting with the quantum
case.

Nekrasov and Shatashvili developed in a series of works the following
connection between TQFTs and the Bethe ansatz equations of quantum
integrable models \cite{NS10}: starting from a 4D Yang-Mills theory one can
consider its vacua in the infrared limit. The moduli space of vacua is
described by a TQFT in terms of a set of equations matching the Bethe ansatz
equations of a quantum integrable model using the Yang-Yang functional. For
the $q$-boson model this approach has been followed recently by string
theorists in \cite{OY14} and \cite{GP15} who put forward generalised
versions of WZNW and Chern-Simons field theories.

In contrast our construction of a TQFT using Baxter's idea of a $Q$-matrix
gives an \emph{operator construction} of the TQFT in terms of $q$-difference
operators, so-called quantum $D$-modules. That this operator construction of
a TQFT is identical to the one obtained via the Bethe ansatz equations is a
non-trivial statement: it requires the proof of the existence of an algebra
isomorphism between the operator version of the TQFT (\ref{fatQ}), (\ref%
{N_TQFT}) and the coordinate ring presentation in terms of the Bethe ansatz
equation as well as the completeness of the Bethe ansatz; see \cite[Section
7 and 8]{CMP13}. As a result one obtains an explicit algorithm which allows
one to compute the fusion coefficients, see \cite{Korffproceedings} and \cite%
[Sec. 7.3.1]{CMP13}. We showed in this article that the fusion coefficients
satisfy linear systems (\ref{fusion1}) and (\ref{fusion2}) with the image of
the $q$-boson algebra generators under the quantum B\"{a}cklund transform.

In the context of classical integrable systems the connection with TQFT
rests on the associativity condition of the underlying Frobenius algebra,
the Witten-Dijkgraaf-Verlinde-Verlinde equations; see Dubrovin's lecture
notes \cite{Dubrovin96}. In contrast, associativity is built in from the
start in the construction of the TQFT in \cite{CMP13} as the fusion matrices
are realised as endomorphisms over the Fock space of the $q$-boson algebra,
that is, the product in the TQFT is the ordinary matrix product. The crucial
issue here is the commutativity of the product which follows from the
construction of the commutative subalgebra $\mathcal{A}_{n}$ generated by
the quantum integrals of motion of the quantised Ablowitz-Ladik chain, $%
\{T_{r}\}$ and $\{Q_{r}^{\pm }\}$. In other words, for fixed particle number 
$k$ the 2D TQFT is the quantised Poisson subalgebra of classical integrals
of motion of the Ablowitz-Ladik chain, $\lim_{\hbar \rightarrow 0}\mathcal{A}%
_{n,k}=\mathcal{A}_{n,k}^{\hbar =0}$. In this article we showed that a basis
of the quantum integrals of motions, the fusion matrices of the TQFT, is
obtained from the quantum analogue of B\"{a}cklund transformations.

\subsection{Open questions}

As pointed our earlier in the text, the main physical significance of our
investigation of the $Q^{+}$-operator for the quantum Ablowitz-Ladik chain
is the result that its adjoint action describes the discrete time evolution (%
\ref{right_flow}) of the system. To obtain the full time evolution one needs
in addition to consider the adjoint of the $Q^{+}$-operator. We wish to
exploit this fact in a forthcoming publication where we discuss the discrete
dynamics of the system and connect it to the TQFT.

An interesting question resulting from our discussion is a possible
extension of our construction of quantum B\"{a}cklund transformations to the
continuum limit, where one obtains the quantum nonlinear Schr\"{o}dinger
model, as well as to different boundary conditions; see e.g. \cite{vDE}, 
\cite{WJ-Z} for a discussion of the $q$-boson system with more exotic
boundaries. We hope to address these and related questions in future
work.\bigskip

\noindent \textbf{Acknowledgment}. It is a pleasure to thank Chris Athorne,
Ulrich Kr\"{a}hmer and Jon Nimmo for discussions. I am also grateful to the
referees for their comments which helped to improve the presentation of this
article.

\end{document}